\documentclass[11pt,a4paper]{article}
\usepackage[T1]{fontenc}
\usepackage[utf8]{inputenc}
\usepackage{lmodern,microtype}
\usepackage{authblk}
\usepackage{capt-of}
\usepackage{tikz}
\usetikzlibrary{arrows.meta,positioning,shapes.geometric}
\usepackage[a4paper,margin=27mm]{geometry}
\usepackage{amsmath,amssymb,amsthm,mathtools,mathrsfs,bm}
\usepackage{booktabs,tabularx,longtable,array,enumitem,aliascnt,xcolor}
\usepackage[colorlinks=true,linkcolor=blue!55!black,citecolor=blue!55!black,urlcolor=blue!55!black]{hyperref}
\usepackage[nameinlink,noabbrev]{cleveref}
\hypersetup{pdftitle={Closed-Form Arbitrary-Gap Rényi Nonadditivity Above Any Positive Order via Direct Products of Free Groups and High-Girth Graphs},pdfauthor={Guocheng Zhen, Ranyiliu Chen, Xin Wang},pdfsubject={Arbitrary positive Renyi gaps, high-girth finite models, analytic filtering, and dimension bounds}}
\setlist{nosep}
\allowdisplaybreaks[1]
\newtheorem{theorem}{Theorem}[section]
\newaliascnt{lemma}{theorem}
\newtheorem{lemma}[lemma]{Lemma}\aliascntresetthe{lemma}
\newaliascnt{proposition}{theorem}
\newtheorem{proposition}[proposition]{Proposition}\aliascntresetthe{proposition}
\newaliascnt{corollary}{theorem}
\newtheorem{corollary}[corollary]{Corollary}\aliascntresetthe{corollary}
\theoremstyle{definition}
\newaliascnt{definition}{theorem}
\aliascntresetthe{definition}
\theoremstyle{remark}
\newaliascnt{remark}{theorem}
\newtheorem{remark}[remark]{Remark}\aliascntresetthe{remark}
\crefname{lemma}{Lemma}{Lemmas}\Crefname{lemma}{Lemma}{Lemmas}
\crefname{proposition}{Proposition}{Propositions}\Crefname{proposition}{Proposition}{Propositions}
\crefname{corollary}{Corollary}{Corollaries}\Crefname{corollary}{Corollary}{Corollaries}
\crefname{remark}{Remark}{Remarks}\Crefname{remark}{Remark}{Remarks}
\newcommand{\C}{\mathbb C}\newcommand{\E}{\mathbb E}\newcommand{\R}{\mathbb R}\newcommand{\Z}{\mathbb Z}
\newcommand{\F}{\mathbb F}\newcommand{\Dens}{\mathcal D}
\newcommand{\Tr}{\operatorname{Tr}}\newcommand{\diag}{\operatorname{diag}}\newcommand{\rank}{\operatorname{rank}}
\newcommand{\vecop}{\operatorname{vec}}\newcommand{\id}{\mathrm{id}}
\newcommand{\HS}{\mathrm{HS}}
\newcommand{\ket}[1]{\lvert#1\rangle}\newcommand{\bra}[1]{\langle#1\rvert}\newcommand{\proj}[1]{\lvert#1\rangle\!\langle#1\rvert}

\newcommand{\Tch}{\mathsf T}
\newcommand{\etalchar}[1]{$^{#1}$}
\newcolumntype{Y}{>{\raggedright\arraybackslash}X}
\title{\textbf{Closed-Form Arbitrary-Gap Rényi Nonadditivity Above Any Positive Order via Direct Products of Free Groups and High-Girth Graphs}}
\author[1]{Guocheng Zhen}
\author[2]{Ranyiliu Chen\thanks{chenranyiliu@quantumsc.cn}}
\author[1]{Xin Wang\thanks{felixxinwang@hkust-gz.edu.cn}}
\affil[1]{\small Thrust of Artificial Intelligence, Information Hub,\par The Hong Kong University of Science and Technology (Guangzhou), Guangdong 511453, China}
\affil[2]{\small Quantum Science Center of Guangdong-Hong Kong-Macao Greater Bay Area, Shenzhen 518045, China}
\date{}

\newcommand{\Sh}{\mathfrak S}

\begin{document}
\maketitle
\begin{abstract}
We give a closed-form finite-dimensional construction of quantum channels with real algebraic entries exhibiting arbitrarily large violations of minimum-output R\'enyi entropy additivity uniformly above any prescribed positive R\'enyi order, using a different finite realization based on direct products of free groups and fixed-field high-girth graphs. More precisely, for every $p_0>0$ and $g>0$, we construct one channel and one Bell input for which the additivity gap is greater than $g$ simultaneously for all $p\in[p_0,\infty]$.  With $r=\lceil2g\rceil$, the construction uses $15r$ output qubits and
\(
O\!\left(r2^{30r}+r^2(1+p_0^{-1})\right)
\)
input qubits. Compared with the stated parameters of Shou and Gorshkov \cite{ShouGorshkov2026}, our construction uses fewer output qubits for every target gap and has a smaller asymptotic input-qubit bound as the gap grows with the positive cutoff fixed. At the von Neumann order, a separate specialization gives an explicit channel with gap greater than $1/1024$ using $37608898$ input qubits and $8$ output qubits.
\end{abstract}

\tableofcontents

\section{Introduction}\label{sec:intro}

For finite-dimensional quantum channels $\Phi$ and $\Psi$, product inputs always give
\[
  H_{\min,p}(\Phi\otimes\Psi)
  \le
  H_{\min,p}(\Phi)+H_{\min,p}(\Psi),
\]
and the minimum-output R\'enyi-entropy additivity problem asks whether equality must hold.  The von Neumann case $p=1$ is particularly important for classical communication through the Holevo--Schumacher--Westmoreland theorem and Shor's equivalence of additivity questions \cite{SchumacherWestmoreland1997,Holevo1998,Shor2004,ShorErratum2004}. Werner and Holevo first gave an explicit violation at sufficiently large R\'enyi orders \cite{WernerHolevo2002}; Hayden and Winter subsequently proved nonadditivity for every $p>1$ by random constructions \cite{HaydenWinter2008}.  Cubitt et al.\ treated $p=0$ and sufficiently small positive orders \cite{CubittEtAl2008}, while Hastings established the von Neumann case by a random finite-dimensional construction \cite{Hastings2009}.

Free-probability methods later provided a more structural approach to the problem.  Belinschi, Collins, and Nechita analyzed asymptotic output spectra of random channels and obtained Bell-state violations arbitrarily close to one bit at $p=1$ \cite{BelinschiCollinsNechita2012,BelinschiCollinsNechita2016}; related free-compression estimates were developed in \cite{CollinsFukudaZhong2015}.  Collins combined strong asymptotic freeness with Haagerup's inequality to obtain a concise proof for random mixed-unitary channels \cite{CollinsMale2014,Collins2018}, and Collins and Youn extended the Haagerup framework to products of free groups \cite{CollinsYoun2022}.  More recently, Leung, Lovitz, and Wu obtained finite-dimensional counterexamples for $0\le p<1/4$ and $p>3/4$ \cite{LeungLovitzWu2026}, while Zhen et al.\ used direct products of free groups to obtain a single finite-dimensional real channel and the same Bell input with an almost-one-bit violation simultaneously for every $p\in[0,\infty]$ \cite{ZhenEtAlAllOrders2026}.

Explicit and deterministic constructions have also developed rapidly. Derksen and Lovitz gave constructive counterexamples for every $p>1$ \cite{DerksenLovitz2026}.  At the von Neumann order, Lovitz and Wu and, independently, Zhen et al.\ used the deterministic near-free permutation framework of O'Donnell and Wu \cite{ODonnellWu2020} to obtain polynomial-time algorithmic constructions of nonadditive channels \cite{LovitzWu2026,ZhenEtAlDeterministic2026}.  More recently, Shou and Gorshkov obtained an explicit finite-dimensional construction, including arbitrarily large gaps uniformly on every prescribed $[p_0,\infty]$ with $p_0>0$ \cite{ShouGorshkov2026}, while Cheng and Wu constructed explicit channels with unbounded gains in classical communication using deterministic Clifford operations, measurement, and feedforward \cite{ChengWu2026}.

The present work builds on a common principle emerging from the recent constructions of Shou--Gorshkov and Cheng--Wu \cite{ShouGorshkov2026,ChengWu2026}: finite moment information can be combined with a suitable damping or measurement mechanism to control exceptional directions, thereby avoiding the need for strong convergence in the finite-dimensional realization.  We implement this principle in the direct-product free-group model using a different explicit realization.  Fixed-field high-girth graphs provide real permutation matrices that reproduce the required short free-group moments exactly, while filters built from Chebyshev polynomials and spherical averages convert these moment identities into the uniform one-copy bounds needed for the entropy comparison.  This yields a closed-form finite-dimensional construction with real algebraic entries and fully prescribed parameters.

The paper is mainly organized as follows. Section~\ref{sec:main} states the main results.  Section~\ref{sec:prelim} collects the required preliminaries, and Section~\ref{sec:free} isolates the abstract finite-dimensional estimates and entropy criterion.  Sections \ref{sec:finite} and \ref{sec:filter} construct, respectively, the high-girth finite realization and the filter, and verify the hypotheses of that criterion with explicit parameters. Section~\ref{sec:proof} completes the proofs of the main results. Section~\ref{sec:discussion} compares our finite-dimensional realization and dimension bounds with those of Shou and Gorshkov \cite{ShouGorshkov2026}.

\section{Main results}\label{sec:main}
All logarithms are natural unless a base is shown, so entropy gaps are measured in nats. For a channel $\Phi:M_n(\C)\to M_D(\C)$, put
\begin{equation}\label{eq:gap}
 \Delta_p(\Phi;\Omega_n)
 =2H_{\min,p}(\Phi)-H_p\bigl(\Phi^{\otimes2}(\Omega_n)\bigr),
 \qquad
 \Omega_n=\proj{\Omega_n},\quad
 \ket{\Omega_n}=n^{-1/2}\sum_{j=1}^n\ket{j,j}.
\end{equation}
A positive value implies minimum-output-entropy nonadditivity. The channels have the form
\begin{equation}\label{eq:channel}
 [\Phi_F(X)]_{st}
 =\frac1D\Tr(W_sFXFW_t^*)
  +\frac{\delta_{st}}D\Tr\bigl((I_n-F^2)X\bigr),
 \qquad 1\le s,t\le D,
\end{equation}
where the $W_s$ are prescribed real permutation matrices and $F$ is a prescribed real symmetric positive contraction, that is,
\[
  F=F^T,\qquad 0\preceq F\preceq I_n.
\]
The first summand is the complementary mixed-unitary channel applied to $FXF$, and the second is a completely positive replacement map. Their traces sum to $\Tr X$, so $\Phi_F$ is a channel.

\begin{theorem}[Uniform positive-order gaps with explicit finite dimensions]\label{thm:main}

For every finite $p_0>0$ and every integer $r\ge1$, the formulas \eqref{eq:finite-branches}, \eqref{eq:parameter-list}, and \eqref{eq:filter} prescribe a channel of the form \eqref{eq:channel},
\[
 \Phi_{r,p_0}:M_{2^{N(r,p_0)}}(\C)\longrightarrow M_{2^{15r}}(\C),
\]
with real algebraic entries, such that
\begin{equation}\label{eq:main-gap}
 \inf_{p\in[p_0,\infty]}
 \Delta_p(\Phi_{r,p_0};\Omega_{2^{N(r,p_0)}})>\frac r2.
\end{equation}
The integer $N(r,p_0)$ is given exactly in \eqref{eq:input-budget} and satisfies
\begin{equation}\label{eq:main-dimension-bound}
 N(r,p_0)
 \le 9216r\,2^{30r}
 +36864r\left(45r+20+\frac{r+16}{\min\{p_0,1/2\}}\right).
\end{equation}
In particular, for every $g>0$, the choice $r=\lceil2g\rceil$ gives one channel and the same Bell input with a gap greater than $g$ at every order in $[p_0,\infty]$.
\end{theorem}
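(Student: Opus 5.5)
The proof will assemble the three ingredients the paper has set up: the abstract entropy criterion of Section~\ref{sec:free}, the high-girth finite realization of Section~\ref{sec:finite}, and the filter of Section~\ref{sec:filter}. The skeleton is the Bell-state lower bound combined with a uniform one-copy upper bound. Take $\Gamma=\F_{k_1}\times\cdots\times\F_{k_r}$, a direct product of $r$ free groups, one factor per unit of targeted gap. Take the $D=2^{15r}$ words $W_s$ to be tensor products of generators realized by permutation matrices of fixed-field high-girth graphs. By \eqref{eq:finite-branches} the girth exceeds the degree of every moment used, so all moments up to that length agree exactly with those of the regular representation of $\Gamma$.

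\textbf{Step 1: Bell-state output.} For the two-copy Bell output, $\Phi_F\otimes\Phi_F(\Omega_n)$ has entries
\[
n^{-1}D^{-2}\Tr\bigl(W_s F^2 W_{s'}^{*}\,\overline{W_t F^2 W_{t'}^{*}}\bigr)
\]
plus replacement terms. With real permutations and real $F$, the diagonal pairing $s=t$, $s'=t'$ yields a component of weight about $\Tr(F^2)^2/n^2\approx 1$ on a rank-$D$ (maximally correlated) subspace. Hence $\lambda_{\max}\ge (1-\epsilon)/D$ in the appropriate sense. The entropy criterion turns this into
\[
H_p\bigl(\Phi^{\otimes2}(\Omega_n)\bigr)\le \log D + c_p,
\]
uniformly for $p\ge p_0$. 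The error $c_p$ is controlled by the filter leakage $\Tr(I-F^2)/n$, which \eqref{eq:filter} makes small.

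\textbf{Step 2: one-copy bound.} We need $H_{\min,p}(\Phi_F)\ge \tfrac12\log D+\tfrac r2+\text{slack}$ for every $p\in[p_0,\infty]$. By monotonicity in $p$, it suffices to treat $p=\infty$, where the claim is $\|\Phi_F(\psi)\|_\infty\le D^{-1/2}e^{-r/2-\eta}$, together with a $p_0$-dependent conversion of the criterion. That conversion is why $p_0^{-1}$ enters $N$ through the term $(r+16)/\min\{p_0,1/2\}$. The operator norm of $\Phi_F(\psi)$ is $\sup_{\|v\|=1}D^{-1}\|\sum_s v_s W_s F\psi\|^2$ plus at most $1/D$ from the replacement term. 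This reduces to bounding $\|\sum_s v_s W_s F\|$. In the group algebra the corresponding bound is Haagerup-type: for products of free groups, words of length one in each factor have norm at most a constant times $\sqrt{D}\cdot 2^{-r}$-type decay relative to $D$. In finite dimensions we only have exact low-order moments. The filter $F$, built from Chebyshev polynomials of the adjacency operators and spherical averages, is designed so that $\|\sum v_s W_s F\|^2$ expands as a finite polynomial of controlled degree. Its value is therefore determined by the exact moments, giving the free bound up to explicit errors.

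\textbf{Step 3: assembling constants.} Plug the parameters in \eqref{eq:parameter-list} into the criterion and verify the strict inequality \eqref{eq:main-gap}. Then count qubits: the graph vertex sets have size $2^{O(30r)}$, and the filter degree adds the polynomial term. Summing these yields \eqref{eq:input-budget} and the bound \eqref{eq:main-dimension-bound}. Real algebraicity follows because Chebyshev filters evaluated on integer matrices, with square roots for the positivity normalisation, stay algebraic.

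\textbf{Main obstacle.} Step 2 is the hard part: showing that the filter converts exact moment identities into a uniform operator-norm bound over all unit $\psi$, not merely on average, while keeping $\Tr F^2/n$ near one. Exceptional directions, meaning eigenvectors near the spectral edge of finite-graph adjacency operators that have no free-group counterpart, must be damped by $F$ without losing Bell-state weight. I would first try to show that $F$ is a polynomial in the adjacency operators whose spectrum is uniformly bounded by the free spectral radius plus an explicit $\epsilon$ (a Chebyshev damping estimate). I would then check that the degree needed is compatible with the girth.
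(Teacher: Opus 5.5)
\paragraph{Step 1 is false as stated, and this breaks the accounting in Step 2.}
The Bell output is not concentrated on a rank-$D$ subspace. The span of the vectors $\ket{s,s}$ carries total mass only $D\cdot D^{-2}=1/D$. The unfiltered output $Z_0$ has spectrum $v_{\gamma,K}^{\otimes r}$, spread over $M_{\gamma,K}^r\approx(9/16)^rD^2$ eigenvalues; for instance $H_1(Z_0)=r\bigl(2\log K-\tfrac{\log K}{K}-\tfrac78\log2\bigr)\approx2\log D$.

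A top eigenvalue of size $1/D$ controls $H_\infty$, and it controls $H_p$ for $p>1$ only up to the factor $\tfrac{p}{p-1}$. It gives nothing for $p\le1$. So the bound $H_p(\Phi^{\otimes2}(\Omega_n))\le\log D+c_p$ with small $c_p$ holds only near $p=\infty$.

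Once the Bell side is sized correctly, the one-copy entropy in this construction must be within roughly $0.04r$ of $\log D$ at low orders. Reducing to $p=\infty$ cannot deliver this: the peak bound gives only $\log D-r\log34$, which certifies a negative gap at low orders. The paper obtains $\Tr\rho^2\le a^r/D$ with $a=4221/4096$ from the quadratic test $\|FT_AF\|\le C_r\|A\|_{\HS}$ (\cref{prop:one-copy}), and your plan never uses this test.

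The low-order gap $r(\log(16/9)-2\log a)$ comes from the rank deficiency of $Z_0$. That deficiency is produced by the $\gamma=8$ commuting free factors inside each block, since $ab^{-1}=b^{-1}a$ for letters in different factors. With one free factor per block, as in your $\F_{k_1}\times\cdots\times\F_{k_r}$, there are no such collisions. The Bell rank is then $(K^2-K+1)^r$, and $H_p(Z_0)\ge H_1(Z_0)=2\log D-r\tfrac{\log K}{K}$ for every $p\le1$, which leaves no room for a gap of size $r/2$.

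The dependence on $p_0^{-1}$ in $N$ also enters through the two-copy side, not the one-copy side. The support-tail bound (\cref{lem:support-tail}) needs $\delta^\theta\le2^{-(r+16)}$ to stabilize low-order Bell entropies under the filter perturbation. Orders $p\ge2$ are handled by smoothed majorization against $w^{\otimes r}$ (\cref{lem:smoothing}).

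\paragraph{Your account of the filter has the roles of moments and damping reversed.}
Exact moments up to a fixed degree cannot bound an operator norm such as $\|\sum_sv_sW_sF\|$ uniformly over unit vectors. Moreover, the paper's $F=(I_n+R)^{-1/2}$ is not a polynomial in the $W_s$, nor in adjacency operators.

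The uniform bounds are the operator inequalities $\pm T_A\preceq C_r(I_n+R)$ and $L_z^*L_z\preceq B_r(I_n+R)$. They are proved vector by vector, using Jensen's inequality and spherical-cap lower bounds for $R_2=\alpha_2\int\Tch_m(T_A/J_r)^2\,d\sigma_d(A)$ and for $R_\infty$ (\cref{lem:quadratic-penalty,lem:Gram-penalty}). Conjugating by $F$ then gives the compression bounds with no moment input at all.

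The high-girth trace identities are used only for the average cost. They transfer $\tau_n(R)$ to the free algebra, where the Haagerup bounds give $\tau_n(R)\le\alpha_2+\alpha_\infty<2^{1-t}$. \Cref{lem:Bell-stability} then turns this average loss into $\tfrac12\|Z_F-Z_0\|_1\le\delta$.

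This also explains the $2^{30r}$ term in $N$. The cap in dimension $d=D(D-1)$ has measure about $2^{-4d}$, which forces $m$ of order $d$. The girth requirement makes $\log_2q=72m-7$ grow linearly with $m$, so $\log_2q$ is of order $D^2=2^{30r}$, not $O(r)$.
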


Both the channel and its Bell input may depend on the prescribed
$p_0$ and $g$, but not on the running order $p$.  The construction in \Cref{thm:main} uses
\[
  O\bigl(r2^{30r}+r^2(1+p_0^{-1})\bigr)
\]
input qubits and $15r$ output qubits. For a fixed small positive gap
at the single order $p=1$, the separate specialization below uses
$37608898$ input qubits and eight output qubits.

\begin{samepage}
\begin{corollary}[An eight-output-qubit von Neumann example]\label{cor:von-neumann}
The matrices in \eqref{eq:small-branches} and \eqref{eq:small-filter} prescribe a real algebraic channel
\[
 \Phi_*:M_{2^{N_*}}(\C)\longrightarrow M_{256}(\C),
 \qquad N_*=37608898,
\]
whose standard Bell input satisfies
\begin{equation}\label{eq:small-gap}
 2H_{\min,1}(\Phi_*)-H_1\bigl(\Phi_*^{\otimes2}(\Omega_{2^{N_*}})\bigr)
 >\frac1{1024}.
\end{equation}
\end{corollary}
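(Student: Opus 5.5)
The plan is to specialize the abstract entropy criterion of Section~\ref{sec:free} to the single order $p=1$ with a small target gap. In that regime the filter degree and the number of free-group factors can be taken minimal, which gives the stated $8$ output qubits, i.e.\ $D=256$ branches.

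\textbf{Step 1: the Bell-output side.} Choose the branch set \eqref{eq:small-branches} as products, across a direct product of free groups, of generators and their inverses. Each generator is realized by a real permutation matrix coming from a fixed-field high-girth graph. Because of the high girth, all words of length below the girth have trace exactly $n\,\delta_{w=e}$, so the short moments coincide with the free-group moments. For $\Phi_*^{\otimes2}(\Omega_n)$ I would compute the output matrix entrywise as
\[
\frac{1}{D^2 n}\Tr\bigl(W_s F^2 W_{s'}^{T}\ldots\bigr)
\]
plus the replacement contributions. Then I would expand $F^2$, taken from \eqref{eq:small-filter} as a Chebyshev polynomial in the adjacency/spherical averages, into words. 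Exactness of the short moments makes this matrix equal to its free-group model. It has one large eigenvalue, of order a constant (the Bell direction), and the remainder is spread out. This gives an explicit upper bound $H_1(\Phi_*^{\otimes2}(\Omega_n))\le \log D^2-c_1$ for an explicit constant $c_1$.

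\textbf{Step 2: the one-copy side.} For each unit vector $x$, the output $\Phi_*(\proj{x})$ is a damped Gram matrix of the vectors $W_sFx$, mixed with the maximally mixed state with weight $1-\|Fx\|^2$. I would bound its largest eigenvalue, or better its Hilbert--Schmidt purity, uniformly in $x$, using the filter estimate of Section~\ref{sec:filter}. The Chebyshev/spherical-average filter is designed so that for any $x$, the off-diagonal Gram entries $\langle W_sFx,W_tFx\rangle$ are small. The mechanism is that $F$ suppresses exactly those directions where the finite permutations deviate from free behavior; this is the damping that replaces strong convergence. Feeding the resulting bound into the entropy criterion (via $H_1\ge -\log\Tr\rho^2$, or a sharper majorization bound) gives $H_{\min,1}(\Phi_*)\ge \log D-\epsilon$ with explicit $\epsilon$.

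\textbf{Step 3: numerics and dimension.} The gap \eqref{eq:small-gap} then follows once $2(\log D-\epsilon)-(\log D^2-c_1)=c_1-2\epsilon>1/1024$. Finally, $N_*=37608898$ should be obtained by counting qubits: the graph size needed for the fixed field and for the girth exceeding the filter degree, tensored over the free-group factors.

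The main obstacle is Step 2. A purity bound alone loses too much at $p=1$ with only $256$ outputs, so the constants must be tracked sharply enough that $c_1-2\epsilon$ stays positive. My first attempt would be to invoke the abstract criterion of Section~\ref{sec:free} with the explicit filter norms computed in Section~\ref{sec:filter}, and to verify the final inequality by exact rational arithmetic.
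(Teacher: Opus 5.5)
There is a genuine gap, and it lies in Step~1 rather than Step~2.

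You cannot get the filtered Bell output by expanding $F_*^2$ into short words. The Chebyshev average is $R_*$, not $F_*^2$, and $F_*^2=(I_{n_*}+R_*)^{-1}$ is an inverse. Moreover, $R_*$ is large precisely on the exceptional directions it is meant to damp, so $F_*^2$ is not a bounded-degree polynomial in the branch matrices. The entries of $Z_F=\Phi_*^{\otimes2}(\Omega_{n_*})$ are normalized traces of branch words interleaved with factors $F_*^2$, and moments of length $\le4m_*$ do not determine them. In fact the finite filtered output does not equal its free model, where the analogue of $R_*$ has norm at most $\alpha_*$.

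The paper uses moment matching only where it is legitimate:
\begin{enumerate}
\item For the \emph{unfiltered} output $Z_0$. Its entries $D^{-2}\tau_n\bigl(\pi(\operatorname{rel}(s',t')^{-1}\operatorname{rel}(s,t))\bigr)$ involve at most four letters per leg, so they equal $D^{-2}\mathbf 1_{\{\operatorname{rel}(s,t)=\operatorname{rel}(s',t')\}}$ (\cref{prop:exact-Bell-profile}).
\item For the single scalar $\tau_{n_*}(R_*)$. This is a genuine polynomial moment and hence at most $\alpha_*<2^{-50}$.
\end{enumerate}
The filter then enters only as a perturbation. \cref{lem:Bell-stability} with \eqref{eq:filter-loss-bound} gives $\tfrac12\|Z_F-Z_0\|_1\le257\,\tau_{n_*}(R_*)<2^{-40}$. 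An entropy-continuity step (\cref{lem:smoothing}, or Fannes--Audenaert) with $\varepsilon=K^2\delta_*=2^{-24}$ then transfers the entropy of $Z_0$ to $Z_F$, at cost $16\varepsilon\log2+h_{\mathrm b}(\varepsilon)$.

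Your description of the spectrum of $Z_0$ is also wrong, in a way that prevents Step~3 from closing. The largest eigenvalue is $1/K=1/256$, coming from the identity fibre of size $K$; it is not of constant order. If the rest of the spectrum were merely spread out, you would get only $c_1\approx\log K/K\approx0.022$. In this construction, the first hypothesis of \cref{lem:quadratic-penalty} forces $C_*>J_*=\sqrt{50}$. So the one-copy cost $2\log(1+C_*^2/K)$ exceeds $2\log(306/256)>0.35$, and $c_1-2\epsilon$ would be negative.

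The missing idea is the collision structure of the direct product. For letters $a,b$ in different, commuting free factors, $ab^{-1}=b^{-1}a=\operatorname{rel}_0(b^{-1},a^{-1})$. This produces $K^2/4$ fibres of size two, and by \eqref{eq:Bell-von-Neumann} gives $H_1(v_{2,K})=2\log K-\frac{\log K}{K}-\frac12\log2$. The $\frac12\log2$ deficit is what makes $\frac{17}{32}\log2-2\log(615/512)>\frac1{640}$.

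Once you have this, your worry about Step~2 is unfounded. The plain purity bound $H_1(\rho)\ge H_2(\rho)\ge\log K-\log(1+C_*^2/K)$ from \cref{prop:one-copy} is exactly what the paper uses.

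Note also that the filter degree is far from minimal. The value $m_*=783519$ is forced by the cap factor $2^{4d_*+3}$ in $\alpha_*$. The girth bound $4m_*+2$, needed to match $\tau_{n_*}(R_*)$, is what yields $N_*=48m_*-14$.
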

\end{samepage}

Applying \Cref{thm:main} directly to the target gap $g=1/1024$
would require
\(
r=\left\lceil\frac1{512}\right\rceil=1,
\)
and would therefore guarantee the much larger gap $1/2$.  The
corollary instead uses estimates tailored to the single order $p=1$
to obtain a smaller explicit instance. The numerical parameters in \Cref{thm:main,cor:von-neumann} are not intended to be optimal.  They are chosen to give transparent, fully analytic margins throughout the proof.  Accordingly, the resulting dimensions should be viewed as explicit certified bounds.

\section{Preliminaries}\label{sec:prelim}

In this section, we collect the notation and preliminary results needed for the constructions and proofs below.

\subsection{Channels, norms, and Bell vectors}
All Hilbert spaces and matrix algebras are complex. Channel input and output spaces are finite-dimensional; infinite-dimensional regular representations are used only for free norm estimates. We fix orthonormal bases throughout, with inner products conjugate-linear in the first argument. Write
\[
 \Dens(\C^n)=\{\rho\in M_n(\C):\rho\succeq0,\ \Tr\rho=1\},
 \qquad \tau_n=n^{-1}\Tr.
\]
The operator norm is denoted by $\|\cdot\|$, the Hilbert--Schmidt norm by $\|X\|_{\HS}=(\Tr X^*X)^{1/2}$, and the trace norm by $\|X\|_1=\Tr|X|$. A quantum channel is a completely positive trace-preserving linear map. For $0<p<\infty$, $p\ne1$, put
\[
 H_p(\rho)=\frac1{1-p}\log\Tr\rho^p,
 \qquad
 H_1(\rho)=-\Tr(\rho\log\rho),\qquad
 H_\infty(\rho)=-\log\|\rho\|,
\]
with $0\log0=0$. The minimum output entropy is
\[
 H_{\min,p}(\Phi)=\min_{\rho\in\Dens(\C^n)}H_p(\Phi(\rho)).
\]
The minimum exists by compactness and continuity, including $p=\infty$. Entropy decreases with the order and is additive on tensor products of states:
\begin{equation}\label{eq:entropy-elementary}
 H_p(\rho)\ge H_2(\rho)\quad(0<p\le2),
 \qquad
 H_p(\rho\otimes\sigma)=H_p(\rho)+H_p(\sigma).
\end{equation}
For $p\ge2$,
\begin{equation}\label{eq:purity-interpolation}
 \Tr\rho^p\le\|\rho\|^{p-2}\Tr\rho^2.
\end{equation}
These statements follow from the eigenvalue formulas and monotonicity of power means.

Our column-vectorization convention is
\[
 \vecop(A)=\sum_{j=1}^n Ae_j\otimes e_j.
\]
It gives
\begin{equation}\label{eq:vectorization}
 (A\otimes B)\ket{\Omega_n}=n^{-1/2}\vecop(AB^T),
 \qquad
 n^{-1}\langle\vecop(A),\vecop(B)\rangle=\tau_n(A^*B).
\end{equation}
For real orthogonal matrices, $B^T=B^{-1}$. A standard Bell input therefore associates an ordered pair of branches $(s,t)$ with the relative product $W_sW_t^{-1}$.

\subsection{Majorization and spectral perturbations}\label{sec:majorization}
For probability vectors $x,y$ of the same length, write $x\prec y$ when
\[
 \sum_{i=1}^j x_i^\downarrow\le\sum_{i=1}^j y_i^\downarrow
 \quad(1\le j<d),\qquad
 \sum_i x_i=\sum_i y_i.
\]
Tensor products preserve this order: $x\prec y$ implies $x^{\otimes r}\prec y^{\otimes r}$. R\'enyi entropies are Schur-concave at every order, so $x\prec y$ implies $H_p(x)\ge H_p(y)$. Schur--Horn gives
\begin{equation}\label{eq:schur-horn}
 \diag(A)\prec\lambda(A)
\end{equation}
for a Hermitian matrix $A$, with $\lambda(A)$ its eigenvalue vector. We use these standard facts and the Ky Fan variational principle in the form recorded in \cite[Chapters~II--III]{Bhatia1997}.

For states $\rho,\sigma\in M_d(\C)$, trace distance has the variational expression
\begin{equation}\label{eq:trace-distance}
 \frac12\|\rho-\sigma\|_1
 =\max_{0\preceq P\preceq I}\bigl|\Tr P(\rho-\sigma)\bigr|.
\end{equation}
In particular, the Ky Fan variational principle and \eqref{eq:trace-distance} imply
\begin{equation}\label{eq:ky-fan-trace-perturbation}
 \left|
 \sum_{i=1}^k\lambda_i^\downarrow(\rho)
 -
 \sum_{i=1}^k\lambda_i^\downarrow(\sigma)
 \right|
 \le
 \frac12\|\rho-\sigma\|_1,
 \qquad 1\le k\le d.
\end{equation}
Let $P_1,\ldots,P_m$ be mutually orthogonal projections with
\[
\sum_{j=1}^m P_j=I. 
\]
The associated \emph{pinching map} is
\begin{equation}\label{eq:pinching}
  \mathcal P(X)
  :=
  \sum_{j=1}^m P_jXP_j .
\end{equation}
If $\omega=e^{2\pi i/m}$ and
\[
  U=\sum_{j=1}^m \omega^j P_j,
\]
then
\begin{equation}\label{eq:pinching-random-unitary}
  \mathcal P(X)
  =
  \frac1m\sum_{r=0}^{m-1}U^rXU^{-r}.
\end{equation}
Thus pinching is a convex average of unitary conjugations.  More
generally, for unitaries $U_\alpha$ and probabilities $t_\alpha$,
\begin{equation}\label{eq:unitary-average-majorization}
  \lambda\!\left(
    \sum_\alpha t_\alpha U_\alpha\rho U_\alpha^*
  \right)
  \prec
  \lambda(\rho).
\end{equation}
In particular,
\begin{equation}\label{eq:pinching-majorization}
  \lambda(\mathcal P(\rho))
  \prec
  \lambda(\rho).
\end{equation}
Since R\'enyi entropy is Schur-concave,
\begin{equation}\label{eq:majorization-entropy}
  H_p(\mathcal P(\rho))
  \ge H_p(\rho),
  \qquad
  H_p\!\left(
    \sum_\alpha t_\alpha U_\alpha\rho U_\alpha^*
  \right)
  \ge H_p(\rho)
\end{equation}
for every R\'enyi order considered below.

\subsection{Free groups, tensor legs, and Haagerup's inequality}\label{sec:haagerup}
Let $\mathbb F_k=\langle g_1,\ldots,g_k\rangle$ be the free group. Its letters are $g_1^{\pm1},\ldots,g_k^{\pm1}$. A word is \emph{reduced} if adjacent letters are never inverse. Cancelling such pairs gives a unique reduced expression for each group element; the empty word is the identity $e$, and $|w|$ denotes reduced length. For example, $g_1g_2g_2^{-1}$ reduces to $g_1$, whereas $g_1^2$ is already a nonempty reduced word. 

For a discrete group $G$, the Hilbert space $\ell^2(G)$ has an orthonormal basis $(e_h)_{h\in G}$ indexed by group elements. The left regular representation and its canonical trace are
\[
 \lambda_G(g)e_h=e_{gh},\qquad
 \tau_G(T)=\langle e_e,Te_e\rangle,
 \qquad
 \tau_G(\lambda_G(g))=\mathbf1_{\{g=e\}}.
\]
Indeed, left multiplication by $g$ sends $e_e$ to $e_g$, so its diagonal coefficient at $e_e$ is zero unless $g=e$. The trace is faithful on the reduced group algebra.

For $G=\mathbb F_k^b$, an element is a tuple $(w_1,\ldots,w_b)$ and multiplication is componentwise. Thus elements supported in different components commute. Under the canonical identification $\ell^2(G)\cong\bigotimes_{j=1}^b\ell^2(\mathbb F_k)$,
\[
 \lambda_G(w_1,\ldots,w_b)=\bigotimes_{j=1}^b\lambda_{\mathbb F_k}(w_j),
 \qquad
 \tau_G\!\left(\bigotimes_{j=1}^b T_j\right)
 =\prod_{j=1}^b\tau_{\mathbb F_k}(T_j).
\]
Each tensor factor is called a \emph{tensor leg}. Letters on a single leg retain free-group reduction; operators on distinct legs commute. A finitely supported function $f$ is identified with
\[
 \lambda_G(f)=\sum_{w\in G}f(w)\lambda_G(w),\qquad
 \|f\|_2^2=\sum_w|f(w)|^2.
\]

\begin{lemma}[Product Haagerup inequality]\label{lem:haagerup}
Suppose $f$ is supported on elements $(w_1,\ldots,w_b)\in\mathbb F_k^b$ with fixed component lengths $|w_j|=a_j$. Then
\begin{equation}\label{eq:haagerup}
 \|\lambda_G(f)\|\le\prod_{j=1}^b(a_j+1)\,\|f\|_2.
\end{equation}
\end{lemma}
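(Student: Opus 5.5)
The plan is to reduce to the classical one-leg Haagerup inequality for $\mathbb F_k$ and then tensorize over the legs one at a time. The one-leg statement is: if $h$ is supported on words of length exactly $a$, then $\|\lambda_{\mathbb F_k}(h)\|\le (a+1)\|h\|_2$. If we may not simply cite it, we prove it directly. Take finitely supported $\xi\in\ell^2(\mathbb F_k)$ and split $\xi=\sum_m\xi_m$ by word length. For fixed lengths $a,m$, the product of a word of length $a$ with a word of length $m$ reduces to a word of length $a+m-2j$, where $0\le j\le\min(a,m)$ counts the cancelled letters. Factor $w=uv$ with $|u|=a-j$ and $|v|=j$, and $x=v^{-1}y$ with $|y|=m-j$. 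Cauchy--Schwarz over the cancelled middle piece $v$ then gives
\[
\|P_{a+m-2j}(h*\xi_m)\|_2\le\|h\|_2\|\xi_m\|_2 .
\]
Summing first over $m$ for each fixed output length and each $j\in\{0,\dots,a\}$, and using Cauchy--Schwarz in $m$, gives at most $(a+1)$ contributions, each bounded by $\|h\|_2\|\xi\|_2$. This is the standard Haagerup argument.

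For the product, we induct on $b$ using the identification $\ell^2(G)\cong\ell^2(\mathbb F_k)\otimes\ell^2(\mathbb F_k^{b-1})$. Write
\[
f(w_1,w')=\sum_{w_1}\delta_{w_1}\otimes f_{w_1}(w'),
\]
so that
\[
\lambda_G(f)=\sum_{|w_1|=a_1}\lambda(w_1)\otimes\lambda_{G'}(f_{w_1}).
\]
A direct tensorization via the operator norm is not immediate, since the coefficients are operators. The cleaner route is to redo the one-leg combinatorics with operator-valued, or rather vector-valued, coefficients. Take $\xi\in\ell^2(\mathbb F_k)\otimes K$ with $K=\ell^2(\mathbb F_k^{b-1})$, and regard it as a $K$-valued function $\xi(x)$ on $\mathbb F_k$. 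Then
\[
(\lambda_G(f)\xi)(z)=\sum_{w_1x=z}\lambda_{G'}(f_{w_1})\,\xi(x).
\]
The same length decomposition and Cauchy--Schwarz steps go through once we bound each term by
\[
\|\lambda_{G'}(f_{w_1})\xi(x)\|\le\prod_{j\ge2}(a_j+1)\,\|f_{w_1}\|_2\,\|\xi(x)\|,
\]
which is the induction hypothesis applied to $f_{w_1}$, supported on lengths $(a_2,\dots,a_b)$. Replacing $|h(w)|$ by the scalar $\prod_{j\ge2}(a_j+1)\|f_{w_1}\|_2$ and $|\xi(x)|$ by $\|\xi(x)\|$ reduces everything to the scalar one-leg estimate. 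Since $\sum_{w_1}\|f_{w_1}\|_2^2=\|f\|_2^2$, this yields the factor $(a_1+1)\prod_{j\ge2}(a_j+1)\|f\|_2$.

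The main obstacle is doing this carefully. In the one-leg argument one bounds $\|h*\xi\|$ by $\||h|*|\xi|\|$ using the triangle inequality pointwise. In the vector-valued setting this becomes
\[
\Bigl\|\sum_{w_1x=z}\lambda_{G'}(f_{w_1})\xi(x)\Bigr\|_K\le\sum_{w_1x=z}c_{w_1}\|\xi(x)\|_K,\qquad c_{w_1}=\prod_{j\ge2}(a_j+1)\|f_{w_1}\|_2 .
\]
Hence
\[
\|\lambda_G(f)\xi\|\le\|\lambda_{\mathbb F_k}(c)\,|\xi|\|_2,
\]
where $|\xi|(x)=\|\xi(x)\|_K$ and $c$ is supported on length $a_1$. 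Applying the scalar Haagerup bound to $c$ then finishes the induction. The remaining work is to check that this pointwise domination is legitimate, which it is because the scalar bound is proved entirely through nonnegative coefficients. Density of finitely supported vectors then extends the estimate to all of $\ell^2(G)$.
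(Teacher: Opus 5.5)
Your proof is correct, but it takes a different route from the paper. The paper gives no argument for this lemma: it cites Collins--Youn (Proposition~3.2) for the product bound and Haagerup (Lemma~1.4) for a single factor.

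You instead make the lemma self-contained. You prove the one-leg estimate, then induct on $b$ through the pointwise domination
\[
  \bigl\|(\lambda_G(f)\xi)(z)\bigr\|
  \le
  \bigl(\lambda_{\mathbb F_k}(c)\,|\xi|\bigr)(z),
  \qquad
  c_{w_1}=\prod_{j\ge2}(a_j+1)\,\|f_{w_1}\|_2,
\]
and then use $\|c\|_2=\prod_{j\ge2}(a_j+1)\|f\|_2$ and $\bigl\|\,|\xi|\,\bigr\|_2=\|\xi\|$. This shows transparently why the constants multiply across tensor legs: the one-leg bound holds for nonnegative kernels and nonnegative vectors. The cost is a few lines. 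The citation buys brevity and ties the constant to the same literature the paper uses for its free norm bounds.

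Two remarks would tighten your write-up.

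First, the step you flag as the main obstacle is not one. You need no information about how the scalar bound is proved. You only need the scalar inequality itself, applied to the nonnegative function $c$ (still supported on words of length $a_1$) and the vector $|\xi|\in\ell^2(\mathbb F_k)$. Likewise no density argument is needed: $\lambda_G(f)$ is a finite sum of bounded operators, and the domination holds for every $\xi$.

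Second, the one-leg assembly can be cleaner. Fix the cancellation number $j$. The pieces $P_{a+m-2j}(h*\xi_m)$ for different $m$ lie on different length spheres, so they are orthogonal, and their sum has norm at most $\|h\|_2\|\xi\|_2$. The triangle inequality over the $a+1$ values $0\le j\le a$ then gives the factor $a+1$. This is simpler than your Cauchy--Schwarz in $m$.
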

This is \cite[Proposition~3.2]{CollinsYoun2022}; for a single factor it is \cite[Lemma~1.4]{Haagerup1979}. The estimate is independent of the free rank $k$. Later, each coordinate length is $0$, $1$, or $2$. 

\subsection{Graphs, edge colours, and the coordinate family}
\label{sec:graph-background}

This subsection reviews the graph-theoretic ingredients needed for the
finite-dimensional realization developed later in the paper.

A finite simple graph is a pair $\Gamma=(V,E)$, where $V$ is its vertex
set and $E$ is a collection of two-element subsets of $V$.  We write
$v\sim w$ when $\{v,w\}\in E$.  The graph is \emph{bipartite} if
\[
  V=\mathcal P\sqcup\mathcal L
\]
and every edge joins a vertex of $\mathcal P$ to one of $\mathcal L$. A walk $(v_0,v_1,\ldots,v_m)$ is \emph{nonbacktracking} if $v_{j+1}\neq v_{j-1}$ whenever both sides are defined.  The \emph{girth} is the length of a shortest cycle.

We use the coordinate graphs $D(\nu,Q)$ of
\cite[Section~2]{LazebnikUstimenkoWoldar1995}, where $Q$ is a prime power
and $\nu\ge2$.  Their two vertex classes are distinct copies of
$\F_Q^\nu$:
\begin{equation}\label{eq:coordinate-vertex-classes}
  \mathcal P=\{(x):x\in\F_Q^\nu\},
  \qquad
  \mathcal L=\{[x]:x\in\F_Q^\nu\}.
\end{equation}
Each vertex class is a copy of $\F_Q^\nu$, but, following
\cite[Section~2]{LazebnikUstimenkoWoldar1995}, we label its coordinates
according to the recursive structure of the incidence relations rather
than by $u_1,\ldots,u_\nu$.  The coordinates are taken from the ordered
list
\begin{equation}\label{eq:graph-coordinates}
  u_1,\;
  u_{11},u_{12},u_{21},\;
  u_{22},u'_{22},u_{23},u_{32},\;
  u_{33},u'_{33},u_{34},u_{43},\ldots,
\end{equation}
where, from level $i=2$ onward, the four coordinates
\[
  u_{ii},\quad u'_{ii},\quad u_{i,i+1},\quad u_{i+1,i}
\]
form the next block.  The graph $D(\nu,Q)$ uses the first $\nu$
coordinates in this ordered list.  For example, a vertex of $D(4,Q)$
has coordinates
\[
  (u_1,u_{11},u_{12},u_{21}),
\]
whereas a vertex of $D(8,Q)$ has coordinates
\[
  (u_1,u_{11},u_{12},u_{21},
    u_{22},u'_{22},u_{23},u_{32}).
\]
For a point these coordinates are denoted by
$p_1,p_{11},\ldots$, and for a line by $l_1,l_{11},\ldots$. For a point $p$ and a line $l$, $p\sim l$ if
\begin{equation}\label{eq:incidence}
\begin{alignedat}{3}
 l_{11}-p_{11}&=l_1p_1,\\
 l_{12}-p_{12}&=l_{11}p_1,\\
 l_{21}-p_{21}&=l_1p_{11},\\[1mm]
 l_{ii}-p_{ii}&=l_1p_{i-1,i},
 &\qquad&\\
 l'_{ii}-p'_{ii}&=l_{i,i-1}p_1,
 &&\qquad i\ge2,\\
 l_{i,i+1}-p_{i,i+1}&=l_{ii}p_1,
 &&\\
 l_{i+1,i}-p_{i+1,i}&=l_1p'_{ii}.
 &&
\end{alignedat}
\end{equation}
Only equations whose left-hand coordinate occurs among the first $\nu$
coordinates in \eqref{eq:graph-coordinates} are imposed.  For fixed
$p$ and $l_1\in\F_Q$, these equations determine $l$ successively; the
first steps are
\begin{equation}\label{eq:incidence-initial}
 \begin{aligned}
  l_{11}&=p_{11}+l_1p_1,\\
  l_{12}&=p_{12}+l_{11}p_1,\\
  l_{21}&=p_{21}+l_1p_{11}.
 \end{aligned}
\end{equation}
Conversely, fixing $l$ and $p_1$ determines $p$.  Hence every vertex has
exactly $Q$ neighbours.

For an edge $\{p,l\}$ define its colour by
\begin{equation}\label{eq:edge-colour}
  c=p_1+l_1\in\F_Q.
\end{equation}
Given a vertex and a colour $c$, the first coordinate of its unique
neighbour is determined by $l_1=c-p_1$ or $p_1=c-l_1$, and
\eqref{eq:incidence} then determines all remaining coordinates.
Hence every vertex is incident to exactly one edge of each of the
$Q$ colours.  In particular, every vertex has degree $Q$, so
$D(\nu,Q)$ is $Q$-regular.

Fix now a colour $c\in\F_Q$.  Since every point and every line is
incident to exactly one colour-$c$ edge, the colour-$c$ edges are
pairwise disjoint and cover all vertices.  They therefore form a
perfect matching between the point and line classes.  Let
\begin{equation}\label{eq:matching-swap}
  J_c(v)
  =
  \text{the other endpoint of the unique colour-$c$ edge incident to }v .
\end{equation}
Then $J_c$ is an involutive permutation of the vertex set:
$J_c^2=\id$.

For illustration, $D(2,2)$ has four points and four lines, and adjacency
reduces to
\[
  l_{11}-p_{11}=l_1p_1
  \qquad\text{in }\F_2.
\]
For example, from $p=(1,0)$ the choices $l_1=0,1$ give the two
neighbours $[0,0]$ and $[1,1]$, of colours $1$ and $0$, respectively.
The resulting graph is the eight-cycle shown in
Figure~\ref{fig:colours}.

\begin{figure}[htbp]
\centering
\begin{tikzpicture}[scale=0.98,>=Stealth,
 pt/.style={circle,draw,fill=white,minimum size=9mm,inner sep=1pt},
 ln/.style={rectangle,draw,fill=black!7,minimum size=9mm,inner sep=1pt},
 rededge/.style={->,thick,draw=red!65!black},
 blueedge/.style={->,thick,dashed,draw=blue!65!black}]
\node[pt] (p0) at (90:2.15) {\small$(0,0)$};
\node[ln] (l0) at (45:2.15) {\small$[0,0]$};
\node[pt] (p1) at (0:2.15) {\small$(1,0)$};
\node[ln] (l1) at (-45:2.15) {\small$[1,1]$};
\node[pt] (p2) at (-90:2.15) {\small$(0,1)$};
\node[ln] (l2) at (-135:2.15) {\small$[0,1]$};
\node[pt] (p3) at (180:2.15) {\small$(1,1)$};
\node[ln] (l3) at (135:2.15) {\small$[1,0]$};
\foreach \i in {0,1,2,3}{\draw[rededge] (p\i)--(l\i);}
\draw[blueedge] (l0)--(p1);
\draw[blueedge] (l1)--(p2);
\draw[blueedge] (l2)--(p3);
\draw[blueedge] (l3)--(p0);
\node[align=center,font=\small] at (0,0)
 {$D(2,2)$\\$l_{11}-p_{11}=l_1p_1$};
\draw[rededge] (3.0,0.9)--(4.0,0.9)
 node[right,black,font=\small] {colour $0$};
\draw[blueedge] (3.0,0.2)--(4.0,0.2)
 node[right,black,font=\small] {colour $1$};
\end{tikzpicture}
\caption{The coordinate graph $D(2,2)$.  Circles and squares represent points and lines, while the two edge styles represent the two colours.
}
\label{fig:colours}
\end{figure}

An individual $J_c$ cannot model a free generator, since
$J_c^2=\id$.  When $Q$ is even, fix an ordering
$c_0,\ldots,c_{Q-1}$ of $\F_Q$ and pair the colours as
\[
  (\alpha_i,\beta_i)
  =(c_{2i-2},c_{2i-1}),
  \qquad 1\le i\le Q/2.
\]
Define
\begin{equation}\label{eq:general-paired-rule}
  u_i(v)=
  \begin{cases}
   J_{\alpha_i}(v),&v\in\mathcal P,\\
   J_{\beta_i}(v),&v\in\mathcal L.
  \end{cases}
\end{equation}
Then $u_i$ is a permutation, with inverse
\begin{equation}\label{eq:general-paired-inverse}
  u_i^{-1}(v)=
  \begin{cases}
   J_{\beta_i}(v),&v\in\mathcal P,\\
   J_{\alpha_i}(v),&v\in\mathcal L.
  \end{cases}
\end{equation}
Let $q=|V|$, and let $(e_v)_{v\in V}$ be the standard basis of
$\C^q$.  Each permutation $u_i$ determines a permutation matrix
$P_i\in O(q)$ by
\begin{equation}\label{eq:general-permutation-matrix}
  P_i e_v=e_{u_i(v)},
  \qquad 1\le i\le Q/2.
\end{equation}

Thus, once the coloured graph $D(\nu,Q)$ and the pairing of its
$Q$ colours are fixed, we obtain $Q/2$ distinguished permutation
matrices $P_1,\ldots,P_{Q/2}$.  Let
\[
  \mathbb F_{Q/2}
  =
  \langle g_1,\ldots,g_{Q/2}\rangle
\]
be the free group on $Q/2$ generators.  The assignment
\[
  g_i\longmapsto P_i
\]
extends uniquely to a representation
\begin{equation}\label{eq:graph-representation}
  \pi:\mathbb F_{Q/2}\longrightarrow O(q).
\end{equation}
Accordingly, for
\[
  w=g_{i_1}^{\varepsilon_1}\cdots
    g_{i_s}^{\varepsilon_s},
  \qquad \varepsilon_j\in\{\pm1\},
\]
we write
\begin{equation}\label{eq:word-walk-convention}
  P_w:=\pi(w)
  =
  P_{i_1}^{\varepsilon_1}\cdots
  P_{i_s}^{\varepsilon_s}.
\end{equation}

Each letter $g_i^{\pm1}$ therefore acts through the permutation
$u_i^{\pm1}$ and traverses exactly one graph edge.  By construction,
two consecutive steps traverse the same edge in opposite directions
if and only if the corresponding letters are inverse.  Hence a
reduced word determines a nonbacktracking walk of the same length. The girth condition turns this correspondence into exact short-word trace matching.

\begin{lemma}[Short-word trace matching]
\label{lem:coloured-short-traces}
Assume that $Q$ is even, and let
\[
  \pi:\mathbb F_{Q/2}\longrightarrow O(2Q^\nu),
  \qquad
  \pi(g_i)=P_i,
\]
be the representation associated with the paired-colour permutations
of $D(\nu,Q)$.  If
\[
  \operatorname{girth} D(\nu,Q)>L,
\]
then, for every $w\in\mathbb F_{Q/2}$ with $|w|\le L$,
\begin{equation}\label{eq:general-short-traces}
  \tau_{2Q^\nu}(P_w)
  =
  \tau_{\mathbb F_{Q/2}}(\lambda(w))
  =
  \mathbf 1_{\{w=e\}}.
\end{equation}
\end{lemma}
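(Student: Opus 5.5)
The plan is to compute $\tau_{2Q^\nu}(P_w)$ as the fraction of vertices fixed by the permutation $\pi(w)$, and then use the girth hypothesis to show that no vertex is fixed unless $w=e$. Since $P_w$ is a permutation matrix, its trace is the number of vertices $v$ with $P_we_v=e_v$. So $\tau_{2Q^\nu}(P_w)$ is the proportion of fixed points of the permutation $u_w$ obtained by composing the $u_i^{\pm1}$ according to \eqref{eq:word-walk-convention}. For $w=e$ both sides of \eqref{eq:general-short-traces} equal $1$. Hence it suffices to show that for every nonempty reduced word $w$ with $1\le|w|\le L$, the permutation $u_w$ has no fixed point.

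Fix such a $w=g_{i_1}^{\varepsilon_1}\cdots g_{i_s}^{\varepsilon_s}$ and a vertex $v$. Reading the letters from right to left, as in the action $P_we_v$, produces a walk $v=v_0,v_1,\ldots,v_s=u_w(v)$. Here $v_j$ is the image of $v_{j-1}$ under $u_{i_{s-j+1}}^{\varepsilon_{s-j+1}}$, and each step traverses one edge of $D(\nu,Q)$ whose colour is one of $\alpha_i,\beta_i$.

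First I will verify the claim stated before the lemma: consecutive steps traverse the same edge backwards exactly when the letters are mutually inverse. Suppose the walk goes $v_{j-1}\to v_j\to v_{j+1}$ with $v_{j+1}=v_{j-1}$. Then both steps use the unique edge between $v_{j-1}$ and $v_j$, so they share its colour $c$. Since the walk alternates between $\mathcal P$ and $\mathcal L$, the two steps start on opposite sides. By \eqref{eq:general-paired-rule}–\eqref{eq:general-paired-inverse}, a step of colour $c$ from a point is $u_i$ if $c=\alpha_i$ and $u_i^{-1}$ if $c=\beta_i$, and from a line the roles are swapped. Because the colour pairs are disjoint, the colour determines $i$, and the side determines the exponent. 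Therefore the two letters are $g_i^{\pm1}$ with opposite exponents, which a reduced word excludes. So the walk is nonbacktracking, and its length $s$ satisfies $1\le s\le L$.

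Suppose now that $u_w(v)=v$. Then the walk is a closed nonbacktracking walk of length $s$ with $1\le s\le L$. Bipartiteness forces $s$ to be even, and $s\neq0$, so $s\ge2$. For $s=2$, the condition $v_2=v_0$ is itself a backtrack, which is impossible.

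The main step is to extract a cycle of length at most $s$ from a closed nonbacktracking walk. I would take a pair $j<k$ with $v_j=v_k$ and $k-j$ minimal, where $k-j\ge1$. The vertices $v_j,\ldots,v_{k-1}$ are then distinct. If $k-j\ge3$, they form a cycle of length $k-j\le s\le L$, contradicting $\operatorname{girth}>L$. The value $k-j=1$ is impossible because the graph has no loops, and $k-j=2$ would be a backtrack at an interior index.

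There is one subtlety: such a pair $j<k$ must exist with $k-j<s$, or else the whole walk $v_0,\ldots,v_{s-1}$ must itself be a cycle. If all of $v_0,\ldots,v_{s-1}$ are distinct, the closed walk is a cycle of length $s\ge3$ provided it does not backtrack at the closing point. Here the non-cyclically-reduced case needs care, since a closed walk may backtrack at the wrap-around, as in $w=aba^{-1}$. If $v_0,\ldots,v_{s-1}$ are not distinct, the minimal repeated pair has interior indices and the previous argument applies.

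For the wrap-around case I would use a length argument instead of the cycle argument. Cyclic reduction shortens $w$ to a conjugate $w'=xwx^{-1}$ that is cyclically reduced and nonempty, with $|w'|<|w|\le L$. Since $\pi(w')$ is conjugate to $\pi(w)$ by a permutation, it has the same number of fixed points. So it suffices to treat cyclically reduced $w$. For such words the closed walk is nonbacktracking cyclically as well, because the last and first letters are not inverse. The minimal-repeat argument, now applied to indices taken cyclically, always yields a genuine cycle of length between $3$ and $s\le L$, which contradicts the girth bound. Hence $u_w$ has no fixed points, and so $\tau_{2Q^\nu}(P_w)=0=\tau_{\mathbb F_{Q/2}}(\lambda(w))$.
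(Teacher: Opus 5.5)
Your proof is correct and follows the paper's argument: a fixed point of $P_w$ gives a closed nonbacktracking walk of length $|w|\le L$, which would contain a cycle shorter than the girth. You additionally supply details the paper only asserts, namely the colour/side check for nonbacktracking and the minimal-repeat cycle extraction. The cyclic-reduction detour is harmless but unnecessary. Your minimal-repeat argument already covers the pair $(0,s)$: if $v_0,\ldots,v_{s-1}$ are distinct and $s\ge3$, then $v_1\ne v_{s-1}$ automatically, so no wrap-around backtrack can occur.
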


\begin{proof}
It suffices to consider $w\neq e$.  Write a reduced expression
\[
  w=x_s\cdots x_1,
  \qquad
  x_j=g_{i_j}^{\varepsilon_j},
  \qquad
  s=|w|\le L.
\]
For $v_0\in V(D(\nu,Q))$, define
\[
  v_j=u_{i_j}^{\varepsilon_j}(v_{j-1}),
  \qquad 1\le j\le s.
\]
Then
\[
  P_we_{v_0}=e_{v_s}.
\]
Since $w$ is reduced, the walk
\[
  v_0,v_1,\ldots,v_s
\]
is nonbacktracking.  Since $P_w$ is a permutation matrix,
\[
  \Tr(P_w)
  =
  \#\{v\in V(D(\nu,Q)):P_we_v=e_v\}.
\]
Suppose that $P_we_{v_0}=e_{v_0}$. Then $v_s=v_0$, so the
nonbacktracking walk is closed and contains a cycle of length at most
$s\le L<\operatorname{girth}D(\nu,Q)$, a contradiction. Hence no
basis vector is fixed and $\tau_{2Q^\nu}(P_w)=0$. For $w=e$,
$P_w=I$, completing the proof.
\end{proof}

The full graph $D(\nu,Q)$ has $2Q^\nu$ vertices.  To pass to a smaller
vertex set without losing the coloured-neighbour structure, we use the
coordinate invariants from the proof of
\cite[Lemma~3.1]{LazebnikUstimenkoWoldar1995}.  For notational
uniformity, introduce the auxiliary boundary conventions
\begin{equation}\label{eq:aux-coordinates}
\begin{aligned}
 p_{00}=l_{00}&=-1,&
 p'_{00}=l'_{00}&=1,\\
 p'_{11}&=p_{11},&
 l'_{11}&=l_{11},\\
 p_{01}&=p_1,&
 l_{10}&=l_1,\\
 p_{10}=l_{01}&=0,&
 p_{0,-1}=l_{0,-1}&=0.
\end{aligned}
\end{equation}
These are conventions, not additional free coordinates.

Set
\[
  b=\left\lfloor\frac{\nu+2}{4}\right\rfloor
\]
and, for $2\le j\le b$, define
\begin{equation}\label{eq:graph-invariants}
  a_j(u)
  =
  \sum_{i=0}^{j}
  \bigl(
    u_{ii}u'_{j-i,j-i}
    -
    u_{i,i+1}u_{j-i,j-i-1}
  \bigr).
\end{equation}
The second product at $i=j$ is defined to be zero, consistently with
$u_{0,-1}=0$; thus no unretained coordinate $u_{j,j+1}$ is needed.
The relevant property is that these quantities are invariant across
edges:
\begin{equation}\label{eq:adjacency-invariance}
  p\sim l
  \quad\Longrightarrow\quad
  a_j(p)=a_j(l),
  \qquad
  2\le j\le b.
\end{equation}
For
\[
  \xi=(\xi_2,\ldots,\xi_b)\in\F_Q^{b-1},
\]
define
\begin{equation}\label{eq:general-invariant-fibre}
  V_\xi
  =
  \{u:a_j(u)=\xi_j,\ 2\le j\le b\}.
\end{equation}
If $u\in V_\xi$ and $v\sim u$, then
\eqref{eq:adjacency-invariance} gives
\[
  a_j(v)=a_j(u)=\xi_j,
  \qquad 2\le j\le b,
\]
so $v\in V_\xi$.  Thus every neighbour of a vertex in $V_\xi$
also lies in $V_\xi$.  The induced subgraph on $V_\xi$ therefore
retains all $Q$ coloured edges at every vertex.  In particular,
each $J_c$ preserves $V_\xi$, and hence so does every paired-colour
permutation $u_i$ when $Q$ is even.
Moreover, passing to an induced subgraph cannot create a shorter
cycle, so
\[
  \operatorname{girth}(D(\nu,Q)|_{V_\xi})
  \ge
  \operatorname{girth}D(\nu,Q).
\]

The fibres admit an explicit parametrization.  Isolating
$u'_{jj}$ in the equation $a_j(u)=\xi_j$ gives
\begin{equation}\label{eq:general-fibre-recursion}
  u'_{jj}
  =
  u_{jj}
  +\sum_{i=1}^{j-1}
     u_{ii}u'_{j-i,j-i}
  -\sum_{i=0}^{j-1}
     u_{i,i+1}u_{j-i,j-i-1}
  -\xi_j.
\end{equation}
Thus, after all coordinates except
$u'_{22},\ldots,u'_{bb}$ have been chosen freely,
\eqref{eq:general-fibre-recursion} determines these $b-1$ coordinates
successively.  Hence each vertex class has $Q^{\nu-b+1}$ elements and
\begin{equation}\label{eq:general-fibre-count}
  V_\xi
  \cong
  \{\mathrm P,\mathrm L\}\times\F_Q^{\nu-b+1},
  \qquad
  |V_\xi|=2Q^{\nu-b+1}.
\end{equation}

We collect the classical properties of $D(\nu,Q)$ used in the sequel. The girth estimate \eqref{eq:LUW-girth} is
\cite[Proposition~2.1(iii)]{LazebnikUstimenkoWoldar1995}, while
\eqref{eq:LUW-invariance} is established in the proof of
\cite[Lemma~3.1]{LazebnikUstimenkoWoldar1995}. 

\begin{proposition}[Classical properties of the coordinate graph]
\label{prop:LUW}
Let $Q$ be a prime power and $\nu\ge2$.  Then $D(\nu,Q)$ is simple,
bipartite, and $Q$-regular, with $2Q^\nu$ vertices.  If $\nu\ge3$ is
odd, then
\begin{equation}\label{eq:LUW-girth}
  \operatorname{girth}D(\nu,Q)\ge\nu+5.
\end{equation}
Moreover,
\begin{equation}\label{eq:LUW-invariance}
  p\sim l
  \quad\Longrightarrow\quad
  a_j(p)=a_j(l),
  \qquad
  2\le j\le
  \left\lfloor\frac{\nu+2}{4}\right\rfloor .
\end{equation}
\end{proposition}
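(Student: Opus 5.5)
Most of this proposition is read off from the definitions. The two substantive claims, the girth bound \eqref{eq:LUW-girth} and the invariance \eqref{eq:LUW-invariance}, I will either take from \cite{LazebnikUstimenkoWoldar1995} or re-verify by direct computation.

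\textbf{Combinatorial properties.} Both vertex classes in \eqref{eq:coordinate-vertex-classes} are copies of $\F_Q^\nu$, so there are $2Q^\nu$ vertices. Every incidence in \eqref{eq:incidence} relates a point to a line, so the graph is bipartite. Edges are two-element sets $\{p,l\}$ with $p\in\mathcal P$ and $l\in\mathcal L$. Hence there are no loops, and there are no multiple edges, so the graph is simple. Regularity was argued right after \eqref{eq:incidence-initial}. For fixed $p$, each choice of $l_1\in\F_Q$ determines the remaining line coordinates one after another, in the order of \eqref{eq:graph-coordinates}. Each equation expresses a new $l$-coordinate through $p$ and $l$-coordinates already fixed, namely $l_1$, $l_{11}$, $l_{ii}$, $l_{i,i-1}$. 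Symmetrically, fixing $l$ and $p_1$ determines $p$. So each vertex has exactly $Q$ neighbours, and $D(\nu,Q)$ is $Q$-regular.

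\textbf{Girth.} For \eqref{eq:LUW-girth} I will simply invoke \cite[Proposition~2.1(iii)]{LazebnikUstimenkoWoldar1995}. The underlying idea is as follows.
\begin{itemize}
\item The edge-transitive automorphisms of $D(\nu,Q)$ reduce the question to walks starting at the zero point.
\item Along a nonbacktracking walk of length $2m$, the successive coordinates become explicit polynomials in the consecutive differences of first coordinates. These differences are nonzero because the walk does not backtrack.
\item For $2m\le\nu+4$ with $\nu$ odd, some retained coordinate of the endpoint is a nonzero monomial in these differences. So the walk cannot close, and no cycle has length below $\nu+5$.
\end{itemize}
Reproducing this bookkeeping is the main obstacle, and I would not redo it; the citation is the intended justification.

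\textbf{Invariance.} For \eqref{eq:LUW-invariance} I will verify the identity directly.
\begin{enumerate}
\item Check that the range $j\le b=\lfloor(\nu+2)/4\rfloor$ is the right one. For such $j$, every coordinate appearing in $a_j$ (up to $u_{jj}$, $u'_{jj}$, $u_{j,j-1}$) lies among the first $\nu$ coordinates, so the corresponding incidence equations are imposed.
\item Substitute the incidence equations into $a_j(l)$. The substitutions are $l_{ii}=p_{ii}+l_1p_{i-1,i}$, $l'_{ii}=p'_{ii}+l_{i,i-1}p_1$, $l_{i,i+1}=p_{i,i+1}+l_{ii}p_1$ and $l_{i+1,i}=p_{i+1,i}+l_1p'_{ii}$.
\item The boundary conventions \eqref{eq:aux-coordinates} are exactly what make these formulas hold uniformly for $i=0$ and $i=1$. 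For example, $l_{11}-p_{11}=l_1p_1=l_1p_{01}$, and $l_{10}=l_1$ matches $p_{10}=0$.
\item Expand $a_j(l)-a_j(p)$. The cross terms should pair off: the term $l_1p_{i-1,i}\,p'_{j-i,j-i}$ cancels against the corresponding term coming from $-l_{i-1,i}\,l_1p'_{j-i,j-i}$ after reindexing $i\mapsto i-1$. The remaining $p_1$-terms cancel in the same way.
\item Conclude that the difference telescopes to zero, which gives \eqref{eq:LUW-invariance}.
\end{enumerate}
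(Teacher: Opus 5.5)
Your proposal is correct and matches the paper, which also reads the vertex count, bipartiteness, simplicity and $Q$-regularity off the incidence equations and takes the girth bound from \cite[Proposition~2.1(iii)]{LazebnikUstimenkoWoldar1995}; the only difference is that you verify \eqref{eq:LUW-invariance} by direct substitution, whereas the paper cites the proof of \cite[Lemma~3.1]{LazebnikUstimenkoWoldar1995}, which carries out that substitution. The computation does close, but the pairing as you state it in step 4 leaves an $l_1p_1$ remainder; with $k=j-i$, write $l_{ii}l'_{kk}-p_{ii}p'_{kk}=(l_{ii}-p_{ii})p'_{kk}+l_{ii}(l'_{kk}-p'_{kk})$ and $l_{i,i+1}l_{k,k-1}-p_{i,i+1}p_{k,k-1}=(l_{i,i+1}-p_{i,i+1})l_{k,k-1}+p_{i,i+1}(l_{k,k-1}-p_{k,k-1})$, so that the $p_1$-parts are both $p_1l_{ii}l_{k,k-1}$ and cancel termwise, while $l_1p_{i-1,i}p'_{j-i,j-i}$ cancels exactly against $-l_1p_{i-1,i}p'_{j-i,j-i}$ after $i\mapsto i-1$, with the boundary terms vanishing because $l_{00}=p_{00}$, $l'_{00}=p'_{00}$, and the $i=j$ second product is zero.
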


\subsection{A fixed presentation of the field}\label{sec:fields}
The preceding graph construction was defined over any finite field. For the prescribed matrices we need both a fixed field and a fixed ordering of its elements, so that vertices and colour pairs can be indexed without choices. We use a field $\mathbb K$ of order $256$, described explicitly below. Its elements only label permutations; the channel matrices still act on complex Hilbert spaces.

Starting from $\F_2$, successively adjoin $t_1,t_2,t_3$ satisfying
\begin{equation}\label{eq:field256}
 t_1^2+t_1+1=0,\qquad
 t_2^2+t_2+t_1=0,\qquad
 t_3^2+t_3+t_1t_2=0.
\end{equation}
This gives a field $\mathbb K$ of order $256$. Indeed, over a finite field of characteristic two, $x^2+x+c$ is irreducible exactly when $c$ has absolute trace one \cite{LidlNiederreiter1997}. The map $x\mapsto x^2+x$ has kernel $\{0,1\}$ and image the trace-zero subspace. For \eqref{eq:field256},
\[
 \operatorname{Tr}_{\F_4/\F_2}(t_1)=1,\qquad
 \operatorname{Tr}_{\F_{16}/\F_2}(t_1t_2)
 =\operatorname{Tr}_{\F_4/\F_2}(t_1)=1,
\]
because $\operatorname{Tr}_{\F_{16}/\F_4}(t_2)=1$. The successive degrees are therefore $2,2,2$.

Order the field by binary coefficients in the basis
\[
 1,t_1,t_2,t_1t_2,t_3,t_1t_3,t_2t_3,t_1t_2t_3.
\]
The coefficient vector $(b_0,\ldots,b_7)$ has index $\sum_{j=0}^7b_j2^j$. Denote the resulting ordered elements by $c_0,\ldots,c_{255}$. This fixes all field operations and orderings used in the finite model.

\subsection{Partitions, Gaussian moments, and spherical averages}
\label{sec:moments}

We record the spherical moment formulas used later in the construction of the filter $F$ in \eqref{eq:channel}. For $j\ge1$, let $\mathcal P_2(2j)$ denote the set of pair partitions
of $\{1,\ldots,2j\}$.  Thus every
$\pi\in\mathcal P_2(2j)$ is a partition into $j$ two-element blocks.
If $G_1,\ldots,G_d$ are independent standard real Gaussian variables,
Isserlis's formula \cite{Isserlis1918} gives
\begin{equation}\label{eq:isserlis}
  \E\!\left[\prod_{r=1}^{2j}G_{i_r}\right]
  =
  \sum_{\pi\in\mathcal P_2(2j)}
  \prod_{\{a,b\}\in\pi}\delta_{i_a i_b},
\end{equation}
while all odd moments vanish.

Let $\sigma_d$ be the normalized surface measure on
\[
  S_{\R}^{d-1}
  =
  \{x\in\R^d:\|x\|_2=1\}.
\]
Writing a standard Gaussian vector as $G=Rx$, with
$x\sim\sigma_d$, the radius $R$ is independent of $x$ and satisfies
\[
  \E R^{2j}
  =
  d(d+2)\cdots(d+2j-2).
\]
Hence
\begin{equation}\label{eq:real-spherical-moment}
  \int
  x_{i_1}\cdots x_{i_{2j}}\,
  d\sigma_d(x)
  =
  \frac{
    \displaystyle
    \sum_{\pi\in\mathcal P_2(2j)}
    \prod_{\{a,b\}\in\pi}\delta_{i_a i_b}
  }{
    d(d+2)\cdots(d+2j-2)
  }.
\end{equation}
Consequently, for fixed Hermitian matrices
$X_1,\ldots,X_d$ and
\[
  T_x=\sum_{a=1}^d x_aX_a,
\]
one has
\begin{equation}\label{eq:wick-real}
  \int T_x^{2j}\,d\sigma_d(x)
  =
  \frac{
    \displaystyle
    \sum_{i_1,\ldots,i_{2j}=1}^d
    \left(
      \sum_{\pi\in\mathcal P_2(2j)}
      \prod_{\{a,b\}\in\pi}\delta_{i_a i_b}
    \right)
    X_{i_1}\cdots X_{i_{2j}}
  }{
    d(d+2)\cdots(d+2j-2)
  }.
\end{equation}
The order of the matrix factors in
\eqref{eq:wick-real} is unchanged.

We also use the corresponding complex spherical moments.  Let
$\nu_D$ be the normalized measure on
\[
  S_{\C}^{D-1}
  =
  \{z\in\C^D:\|z\|_2=1\}.
\]
For $j\ge1$,
\begin{equation}\label{eq:wick-complex}
  \int
  z_{i_1}\cdots z_{i_j}
  \overline z_{k_1}\cdots\overline z_{k_j}
  \,d\nu_D(z)
  =
  \frac{
    \displaystyle
    \sum_{\pi\in\Sh_j}
    \prod_{a=1}^j
    \delta_{i_a k_{\pi(a)}}
  }{
    D(D+1)\cdots(D+j-1)
  }.
\end{equation}
Here $\Sh_j$ is the symmetric group on $\{1,\ldots,j\}$.
This is the complex analogue of
\eqref{eq:real-spherical-moment}; it follows by applying the Gaussian
moment formula to standard circular complex Gaussian variables and
separating radius from direction.  In particular, after polynomial
expansion, \eqref{eq:wick-real} and \eqref{eq:wick-complex} express
the spherical averages used below as finite sums with explicit
coefficients.

Finally, we record the two spherical-cap formulas needed in the filter
estimates.  The first coordinate of a uniformly distributed point on
$S_{\R}^{d-1}$ has density
\begin{equation}\label{eq:real-density}
  c_d(1-t^2)^{(d-3)/2},
  \qquad -1<t<1,
  \qquad
  c_d
  =
  \frac{\Gamma(d/2)}
       {\sqrt{\pi}\,\Gamma((d-1)/2)}
  \ge\frac12
  \quad(d\ge3).
\end{equation}
For $D\ge2$ and any unit vector $v\in\C^D$,
\begin{equation}\label{eq:complex-cap}
  \nu_D
  \left\{
    z:
    |\langle v,z\rangle|^2\ge c
  \right\}
  =
  (1-c)^{D-1},
  \qquad
  0\le c\le1.
\end{equation}

\subsection{Chebyshev polynomials and algebraic functional calculus}
\label{sec:chebyshev-background}

We record the two elementary ingredients used below in the construction of the filter $F$. Let $\Tch_m$ denote the Chebyshev polynomial of the first kind, defined by
\[
  \Tch_0(x)=1,\qquad
  \Tch_1(x)=x,\qquad
  \Tch_{m+1}(x)=2x\Tch_m(x)-\Tch_{m-1}(x).
\]
For $m\ge1$,
\begin{equation}\label{eq:chebyshev-bound}
  |\Tch_m(x)|\le1,
  \qquad -1\le x\le1,
\end{equation}
and, for $u>1$,
\begin{equation}\label{eq:chebyshev-growth}
  \Tch_m\!\left(\frac{u+u^{-1}}2\right)
  =
  \frac{u^m+u^{-m}}2.
\end{equation}
All roots of $\Tch_m$ are real and lie in $(-1,1)$.  Consequently,
if $T=T^*$ and $B>0$, then
\begin{equation}\label{eq:chebyshev-matrix-bound}
  \|T\|\le B
  \quad\Longrightarrow\quad
  0\preceq
  \Tch_m(T/B)^2
  \preceq I.
\end{equation}
The recurrence also shows that $\Tch_m$ has integer coefficients, so
$\Tch_m(T/B)^2$ is a finite matrix polynomial whenever $B$ is
algebraic. We will also use the following elementary algebraic functional-calculus fact.  If $A$ is a finite-dimensional real symmetric positive definite matrix with real algebraic entries, then its unique positive inverse square root $A^{-1/2}$ is again real symmetric with real algebraic entries.

\section{A finite-dimensional entropy criterion}\label{sec:free}

We record the one-copy and Bell-output estimates needed for certification, independently of the subsequent graph and filter constructions. The free norm bounds and Bell quotient are taken from \cite{ZhenEtAlAllOrders2026,ShouGorshkov2026}.

\subsection{The free model and its norm estimates}
Let $k\ge1$, $\gamma\ge2$, and $r\ge1$ be integers. Put
\begin{equation}\label{eq:product-model}
 G_0=\mathbb F_k^\gamma,\qquad
 G=G_0^r=\prod_{\beta=1}^r\prod_{\alpha=1}^{\gamma}\mathbb F_k^{(\alpha,\beta)},
 \qquad K=2\gamma k,\quad D=K^r.
\end{equation}
In one block use the inverse-closed alphabet
\[
 \mathcal S=\{g_{j,\alpha}^{\varepsilon}:1\le j\le k,\ 1\le\alpha\le\gamma,
                         \ \varepsilon\in\{+1,-1\}\},
 \qquad \mathcal W=\mathcal S^r.
\]
For
\[
  s=(s_1,\ldots,s_r)\in\mathcal W,
  \qquad
  s_\beta=g_{j_\beta,\alpha_\beta}^{\varepsilon_\beta}\in\mathcal S,
\]
define $g_s\in G$ by
\begin{equation}\label{eq:branch-group-element}
  (g_s)_{\alpha,\beta}
  =
  \begin{cases}
    g_{j_\beta}^{\varepsilon_\beta},
      & \alpha=\alpha_\beta,\\
    e,
      & \alpha\neq\alpha_\beta,
  \end{cases}
  \qquad
  1\le\alpha\le\gamma,\quad
  1\le\beta\le r.
\end{equation}
Then
\begin{equation}\label{eq:free-branch-operator}
  u_s:=\lambda_G(g_s).
\end{equation} Order $\mathcal S$ lexicographically by $(\alpha,j,\varepsilon)$, with $+1$ before $-1$, and give $\mathcal W$ the induced order.

Let
\[
 \mathcal A_D=\{A\in M_D(\C):A=A^*,\ \diag A=0\},
 \qquad d=\dim_\R\mathcal A_D=D(D-1).
\]
The free tests are
\begin{equation}\label{eq:free-tests}
 T_A^{\mathrm f}=\sum_{s\ne t}A_{st}u_s^*u_t,
 \qquad L_z^{\mathrm f}=\sum_s z_su_s,
 \qquad A\in\mathcal A_D,\quad z\in\C^D.
\end{equation}
The product Haagerup estimate gives the following bounds, in the form proved in \cite[Lemma~B.2, Eqs.~(B.6), (B.14)--(B.15)]{ShouGorshkov2026}:
\begin{equation}\label{eq:free-bounds}
 \|T_A^{\mathrm f}\|\le J\|A\|_{\HS},\qquad
 \|L_z^{\mathrm f}\|\le(4\gamma)^{r/2}\|z\|_2,
 \qquad J^2=(K+16\gamma^2-7\gamma)^r-K^r.
\end{equation}

\subsection{One-copy estimates}
Let $\pi:G\to O(n)$ be a real orthogonal representation and $W_s=\pi(g_s)$. For a real symmetric matrix $0\preceq F\preceq I_n$, define $\Phi_F$ by \eqref{eq:channel} and set
\begin{equation}\label{eq:finite-tests}
 T_A=\sum_{s\ne t}A_{st}W_s^*W_t,
 \qquad L_z=\sum_s z_sW_s.
\end{equation}

The following proposition converts the two compression estimates into
the one-copy spectral and R\'enyi-entropy bounds used in the
nonadditivity certificates below.

\begin{proposition}[One-copy spectral control]
\label{prop:one-copy}
Let $\pi:G\to O(n)$ be a real orthogonal representation, let
$W_s=\pi(g_s)$, and let $F\in M_n(\R)$ satisfy
\[
  F=F^T,
  \qquad
  0\preceq F\preceq I_n.
\]
Suppose there exist constants $C,B\ge0$ such that, for every
$A\in\mathcal A_D$ and $z\in\C^D$,
\begin{equation}\label{eq:compression-interface}
  \|FT_AF\|
  \le
  C\|A\|_{\HS},
  \qquad
  FL_z^*L_zF
  \preceq
  B\|z\|_2^2 I_n.
\end{equation}
Set
\begin{equation}\label{eq:one-copy-constants}
  a_0:=1+\frac{C^2}{D},
  \qquad
  b_0:=1+B.
\end{equation}

Then every output state
\[
  \rho\in\Phi_F\bigl(\mathcal D(\C^n)\bigr)
\]
satisfies
\begin{equation}\label{eq:one-copy-spectral}
  \left\|\rho-\frac{I_D}{D}\right\|_{\HS}
  \le\frac{C}{D},
  \qquad
  \Tr\rho^2
  \le\frac{a_0}{D},
  \qquad
  \|\rho\|
  \le\frac{b_0}{D}.
\end{equation}
Consequently,
\begin{equation}\label{eq:one-copy-entropy}
  H_{\min,p}(\Phi_F)
  \ge
  \begin{cases}
    \log D-\log a_0,
      & 0<p\le2,\\[1mm]
    \displaystyle
    \log D-
    \frac{\log a_0+(p-2)\log b_0}{p-1},
      & 2<p<\infty,\\[2mm]
    \log D-\log b_0,
      & p=\infty.
  \end{cases}
\end{equation}
\end{proposition}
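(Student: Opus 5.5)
The approach is to write the output state entrywise and test it against Hermitian matrices. Fix $\rho=\Phi_F(X)$ with $X\in\Dens(\C^n)$, and put $Y=FXF\succeq0$, so that $\Tr Y\le1$ and $\Tr((I-F^2)X)=1-\Tr Y$.

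\textbf{Step 1 (diagonal).} Since each $W_s$ is orthogonal, the diagonal entries are
\[
\rho_{ss}=\frac1D\Tr(W_sYW_s^*)+\frac1D(1-\Tr Y)=\frac1D .
\]
Thus $\rho-I_D/D$ has zero diagonal and is Hermitian, so it lies in $\mathcal A_D$. Its off-diagonal entries are $\rho_{st}=D^{-1}\Tr(W_sYW_t^*)=D^{-1}\Tr(W_t^*W_sFXF)$.

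\textbf{Step 2 (Hilbert--Schmidt bound).} For $A\in\mathcal A_D$,
\[
\sum_{s\ne t}\overline{A_{st}}\rho_{st}
=\frac1D\Tr\Bigl(F\Bigl(\sum_{s\ne t}\overline{A_{st}}W_t^*W_s\Bigr)FX\Bigr).
\]
Relabel $s\leftrightarrow t$ and use $\overline{A_{ts}}=A_{st}$; the inner sum becomes $T_A$. Hence $|\langle A,\rho-I/D\rangle_{\HS}|\le D^{-1}\|FT_AF\|\,\|X\|_1\le D^{-1}C\|A\|_{\HS}$. Take $A=\rho-I/D$, which lies in $\mathcal A_D$ by Step 1. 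This gives $\|\rho-I/D\|_{\HS}^2\le D^{-1}C\|\rho-I/D\|_{\HS}$, and therefore $\|\rho-I/D\|_{\HS}\le C/D$. Then
\[
\Tr\rho^2=\frac1D+\|\rho-I/D\|_{\HS}^2\le\frac{1+C^2/D}{D}=\frac{a_0}{D},
\]
where the cross term vanishes because $\rho-I/D$ is traceless.

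\textbf{Step 3 (operator norm).} For a unit vector $z\in\C^D$,
\[
\langle z,\rho z\rangle=\frac1D\sum_{s,t}\overline{z_s}z_t\Tr(W_sYW_t^*)+\frac{1-\Tr Y}D .
\]
Assume the conjugation convention $L_z=\sum_s z_sW_s$. Rewrite the double sum as $\Tr(L_{\bar z}YL_{\bar z}^*)$, with suitable indexing. This equals $\Tr(X\,FL^*LF)$ for $L=L_{w}$, where $w$ is $z$ or $\bar z$; since $\|w\|_2=1$ in either case, the choice does not matter. By \eqref{eq:compression-interface} this is at most $B$. Hence $\langle z,\rho z\rangle\le(B+1)/D=b_0/D$. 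The only care needed here is the index and conjugation bookkeeping, checking that the quadratic form really is $\Tr(XFL_w^*L_wF)$ for some vector $w$ of the same norm.

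\textbf{Step 4 (entropies).} For $0<p\le2$, use \eqref{eq:entropy-elementary}: $H_p(\rho)\ge H_2(\rho)=-\log\Tr\rho^2\ge\log D-\log a_0$. For $p=\infty$, $H_\infty(\rho)=-\log\|\rho\|\ge\log D-\log b_0$. For $2<p<\infty$, \eqref{eq:purity-interpolation} gives $\Tr\rho^p\le(b_0/D)^{p-2}a_0/D$. Multiplying the logarithm by $-1/(p-1)$ yields
\[
H_p(\rho)\ge\frac{(p-1)\log D-\log a_0-(p-2)\log b_0}{p-1},
\]
which is the stated bound. Finally, minimize over $X$.
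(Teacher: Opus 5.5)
Your proposal is correct and follows the paper's route: the diagonal is constant, $\rho_{ss}=1/D$; Hilbert--Schmidt duality gives $\Tr\bigl(A(\rho-I_D/D)\bigr)=D^{-1}\Tr(XFT_AF)$; the peak bound comes from $z^*\rho z=D^{-1}\Tr(XFL_{\bar z}^*L_{\bar z}F)+D^{-1}\Tr((I_n-F^2)X)$; and the entropy bounds follow from monotonicity in the order together with $\Tr\rho^p\le\|\rho\|^{p-2}\Tr\rho^2$. The conjugation you left open is resolved as $w=\bar z$, as in the paper, and as you note it does not matter because the hypothesis holds for every $z$.
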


\begin{proof}
This is the filtered version of the quadratic-radius argument in \cite[Section~4.1]{ZhenEtAlAllOrders2026}, with the separate peak bound of \cite[Proposition~B.3]{ShouGorshkov2026}. For $\rho=\Phi_F(X)$,
\[
 \rho_{ss}=D^{-1},\qquad
 \Tr\bigl(A(\rho-I_D/D)\bigr)=D^{-1}\Tr(XFT_AF).
\]
Hilbert--Schmidt duality gives the first bound, and
$\Tr\rho^2=D^{-1}+\|\rho-I_D/D\|_{\HS}^2$ gives the second. For $\|z\|_2=1$,
\[
 z^*\rho z=D^{-1}\Tr(XFL_{\bar z}^*L_{\bar z}F)
            +D^{-1}\Tr((I_n-F^2)X)\le(B+1)/D.
\]
The entropy bounds follow from \eqref{eq:entropy-elementary} and \eqref{eq:purity-interpolation}.
\end{proof}

\begin{remark}
The first inequality in \eqref{eq:compression-interface} alone yields
the Hilbert--Schmidt and purity bounds in
\eqref{eq:one-copy-spectral}, and hence the entropy estimate for
$0<p\le2$.  The second inequality is needed only for the
largest-eigenvalue bound and the estimates at $p>2$.
\end{remark}

\subsection{The Bell profile and its stability}

We next identify the two-copy Bell output before filtering and then
control the perturbation caused by the filter.  The relevant
combinatorial object is the relative product of two branch labels. For $a,b\in\mathcal S\subset G_0$, define their one-block relative
product by
\begin{equation}\label{eq:relative-product-one-block}
  \operatorname{rel}_0(a,b):=ab^{-1}\in G_0.
\end{equation}
A \emph{relative-product fibre} is a level set of the map
$\operatorname{rel}_0:\mathcal S^2\to G_0$.  Thus two ordered branch
pairs $(a,b)$ and $(c,d)$ belong to the same fibre precisely when
\[
  ab^{-1}=cd^{-1}.
\]
Such coincidences will become identical Bell branches in the
unfiltered two-copy output.

The fibres can be read directly from the normal form in the direct
product of free groups.  The identity fibre consists of the $K$ pairs
$(a,a)$.  If $a$ and $b$ lie in different free factors, then they
commute and
\[
  ab^{-1}=b^{-1}a
         =\operatorname{rel}_0(b^{-1},a^{-1}),
\]
giving fibres of size two.  There are
\[
  \frac{\gamma-1}{2\gamma}K^2
\]
such fibres.  Finally, if $a\neq b$ lie in the same free factor,
uniqueness of reduced words gives a singleton fibre; their number is
\[
  \frac{K^2}{\gamma}-K.
\]
Consequently, the number of distinct one-block relative products is
\begin{equation}\label{eq:relative-product-count}
  M_{\gamma,K}
  =
  \frac{\gamma+1}{2\gamma}K^2-K+1.
\end{equation}
Equivalently, the associated one-block Bell profile, padded to length
$K^2$, is
\begin{equation}\label{eq:base-profile}
  v_{\gamma,K}
  =
  \left(
    \frac1K,\,
    \left(\frac2{K^2}\right)^{\!\times
      (\gamma-1)K^2/(2\gamma)},\,
    \left(\frac1{K^2}\right)^{\!\times(K^2/\gamma-K)},\,
    0^{\times(K^2-M_{\gamma,K})}
  \right).
\end{equation}
This relative-product quotient is the one computed in
\cite[Corollary~4.5]{ZhenEtAlAllOrders2026}; its product form is also
recorded in \cite[Lemma~B.5]{ShouGorshkov2026}.

For $s,t\in\mathcal W=\mathcal S^r$, the corresponding relative
product is
\begin{equation}\label{eq:relative-product}
  \operatorname{rel}(s,t)
  :=
  g_sg_t^{-1}
  =
  \bigl(
    \operatorname{rel}_0(s_\beta,t_\beta)
  \bigr)_{\beta=1}^r
  \in G_0^r.
\end{equation}
Since
\[
  \operatorname{rel}(s,t)
  =
  \bigl(
    \operatorname{rel}_0(s_\beta,t_\beta)
  \bigr)_{\beta=1}^r,
\]
two branch pairs $(s,t)$ and $(s',t')$ have the same relative product
if and only if this equality holds in every block.  Hence each global
relative-product fibre is the Cartesian product of the corresponding
one-block fibres, and its cardinality is the product of their
cardinalities.

We now explain the finite-dimensional condition needed to transfer
this combinatorial profile exactly.  Write
\[
  \Phi_0:=\Phi_{I_n},
  \qquad
  Z_0:=\Phi_0^{\otimes2}(\Omega_n),
  \qquad
  Z_F:=\Phi_F^{\otimes2}(\Omega_n).
\]
A Bell Gram-matrix entry compares two relative products and involves the group element
\[
  \operatorname{rel}(s',t')^{-1}
  \operatorname{rel}(s,t)
  =
  (g_{s'}g_{t'}^{-1})^{-1}g_sg_t^{-1}.
\]
On each free-group factor $\mathbb F_k^{(\alpha,\beta)}$, this word
has reduced length at most four.  It is therefore sufficient to require
the finite representation to agree with the regular trace on all such
words:
\begin{equation}\label{eq:Bell-moment-interface}
  \tau_n(\pi(x))
  =
  \mathbf1_{\{x=e\}}
  \quad
  \text{for every }
  x=(x_{\alpha,\beta})\in G
  \text{ satisfying }
  |x_{\alpha,\beta}|\le4
  \text{ for all }(\alpha,\beta),
\end{equation}
where $|x_{\alpha,\beta}|$ denotes reduced word length in
$\mathbb F_k^{(\alpha,\beta)}$.  The high-girth realization constructed
in Section~\ref{sec:finite} will satisfy a stronger finite-moment
condition and hence, in particular, \eqref{eq:Bell-moment-interface}.

\begin{proposition}[Exact unfiltered Bell profile]
\label{prop:exact-Bell-profile}
Suppose that the representation $\pi:G\to O(n)$ satisfies
\eqref{eq:Bell-moment-interface}.  Then
\begin{equation}\label{eq:exact-Bell}
  \lambda(Z_0)
  =
  v_{\gamma,K}^{\otimes r},
  \qquad
  \rank Z_0
  =
  M_{\gamma,K}^r.
\end{equation}
\end{proposition}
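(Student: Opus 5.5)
We compute $Z_0=\Phi_0^{\otimes2}(\Omega_n)$ explicitly and read off its spectrum from a Gram matrix. With $F=I_n$ the replacement term vanishes, so $\Phi_0(X)=\sum_{s,t}D^{-1}\Tr(W_sXW_t^*)\ket{s}\bra{t}$ is the complementary channel of the isometry $V:\ket{\psi}\mapsto D^{-1/2}\sum_s W_s\ket{\psi}\otimes\ket{s}$. Hence $\Phi_0^{\otimes2}(\Omega_n)$ is the reduced state on the two output registers of the pure vector
\[
 \ket{\Psi}=D^{-1}\sum_{s,t}(W_s\otimes W_t)\ket{\Omega_n}\otimes\ket{s,t}
 =D^{-1}n^{-1/2}\sum_{s,t}\vecop(W_sW_t^{T})\otimes\ket{s,t},
\]
by \eqref{eq:vectorization}. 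Since $W_t$ is real orthogonal, $W_sW_t^T=\pi(g_sg_t^{-1})=\pi(\operatorname{rel}(s,t))$. The output-register marginal of $\ket\Psi$ has the same nonzero spectrum as the Gram matrix of the vectors $x_{(s,t)}=D^{-1}n^{-1/2}\vecop(\pi(\operatorname{rel}(s,t)))$, namely
\[
 \Gamma_{(s',t'),(s,t)}=D^{-2}\,\tau_n\bigl(\pi(\operatorname{rel}(s',t')^{-1}\operatorname{rel}(s,t))\bigr).
\]

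Next we evaluate these traces with \eqref{eq:Bell-moment-interface}. Each one-block relative product $ab^{-1}$ has length at most $1$ in every free factor: it is $e$, a single letter, a pair of letters in different factors, or a length-two word in one factor. Hence $\operatorname{rel}(s',t')^{-1}\operatorname{rel}(s,t)$ has every coordinate of reduced length at most $4$ (in fact at most $2$). The hypothesis then gives $\Gamma_{(s',t'),(s,t)}=D^{-2}\mathbf 1\{\operatorname{rel}(s,t)=\operatorname{rel}(s',t')\}$. So $\Gamma$ is block diagonal, with one all-ones block $D^{-2}J_{m\times m}$ for each relative-product fibre of size $m$. Each such block has the single nonzero eigenvalue $m/D^2$.

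It remains to count fibres. By the discussion preceding \eqref{eq:relative-product-count}, global fibres are Cartesian products of one-block fibres, and their sizes multiply. The one-block fibre sizes are $K$ (once), $2$ (with multiplicity $(\gamma-1)K^2/(2\gamma)$), and $1$ (with multiplicity $K^2/\gamma-K$). Dividing by $K^2$ gives exactly the nonzero entries of $v_{\gamma,K}$. Since $D^2=K^{2r}$, the nonzero spectrum of $Z_0$ is the multiset of products, i.e.\ the nonzero part of $v_{\gamma,K}^{\otimes r}$. Both vectors have length $K^{2r}=D^2$, so after padding with zeros $\lambda(Z_0)=v_{\gamma,K}^{\otimes r}$. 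The rank equals the number of fibres, $M_{\gamma,K}^r$.

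The main obstacle is the bookkeeping for the fibre structure and word lengths. Two points need care. First, the commuting case: $ab^{-1}=b^{-1}a$ when $a,b$ lie in different factors must be the only coincidence beyond the diagonal, which follows from uniqueness of normal forms in $G_0$. Second, the Gram computation uses $\tau_n$ only on elements whose coordinates have length at most $4$, so that the hypothesis genuinely applies.
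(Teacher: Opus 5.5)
Your proof is correct and is essentially the paper's argument: you identify $Z_0$ with the Gram matrix $D^{-2}\tau_n\bigl(\pi(\operatorname{rel}(s',t')^{-1}\operatorname{rel}(s,t))\bigr)$, use \eqref{eq:Bell-moment-interface} to reduce it to all-ones blocks over relative-product fibres, and multiply the one-block fibre sizes. The only difference is that you reach the Gram form via the Stinespring purification, while the paper reads it off directly from \eqref{eq:vectorization}. One correction: for $a\neq b$ in the same free factor, $ab^{-1}$ has reduced length $2$ on that leg, so the comparison word can have length exactly $4$ there (e.g.\ $g_2g_1^{-1}g_3g_4^{-1}$). Your remarks ``length at most $1$ in every free factor'' and ``in fact at most $2$'' are therefore false---harmlessly, since the bound $4$ you actually invoke is correct, sharp, and exactly what the hypothesis covers.
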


\begin{proof}
By \eqref{eq:vectorization}, for
$s,t,s',t'\in\mathcal W$,
\begin{align}
  [Z_0]_{(s,t),(s',t')}
  &=
  D^{-2}
  \tau_n\!\left(
    (W_{s'}W_{t'}^T)^*
    W_sW_t^T
  \right)
  \notag\\
  &=
  D^{-2}
  \tau_n\!\left(
    \pi\!\left(
      \operatorname{rel}(s',t')^{-1}
      \operatorname{rel}(s,t)
    \right)
  \right).
  \label{eq:finite-gram}
\end{align}
The group element in the last line has word length at most four on
each factor, so \eqref{eq:Bell-moment-interface} gives
\begin{equation}\label{eq:collision-gram}
  [Z_0]_{(s,t),(s',t')}
  =
  \frac1{D^2}
  \mathbf1_{\{
    \operatorname{rel}(s,t)
    =
    \operatorname{rel}(s',t')
  \}}.
\end{equation}
After ordering the branch pairs by their relative-product fibres,
$Z_0$ is therefore a direct sum of all-ones blocks.  A fibre of size
$m$ contributes exactly one nonzero eigenvalue $m/D^2$.

For one block these eigenvalues are precisely the entries of
$v_{\gamma,K}$ in \eqref{eq:base-profile}.  Since
\eqref{eq:relative-product} factorizes over the $r$ blocks, the fibre
multiplicities multiply, and hence the spectrum tensorizes:
\[
  \lambda(Z_0)=v_{\gamma,K}^{\otimes r}.
\]
The number of nonzero eigenvalues is consequently
$M_{\gamma,K}^r$.
\end{proof}

The exact spectrum \eqref{eq:exact-Bell} will be used in different
ways at different R\'enyi orders.  Its rank gives the support bound
needed at low orders, its von Neumann entropy can be computed exactly
at $p=1$, and a simpler vector majorized by the exact profile provides an explicit upper bound at high orders.  At the von Neumann order,
\begin{equation}\label{eq:Bell-von-Neumann}
  H_1(v_{\gamma,K})
  =
  2\log K
  -\frac{\log K}{K}
  -\frac{\gamma-1}{\gamma}\log2.
\end{equation}
For the high-order estimates it is convenient to replace the exact
profile by the simpler vector
\begin{equation}\label{eq:relative-profile}
  w
  =
  \left(
    \frac1K,\,
    \left(\frac1{K^2}\right)^{\!\times(K^2-K)},\,
    0^{\times(K-1)}
  \right).
\end{equation}
Indeed, replacing each pair
\[
  \left(\frac2{K^2},0\right)
  \quad\text{in }v_{\gamma,K}
\]
by
\[
  \left(\frac1{K^2},\frac1{K^2}\right)
\]
is a doubly stochastic operation.  Hence
\begin{equation}\label{eq:profile-majorization}
  w\prec v_{\gamma,K},
  \qquad
  w^{\otimes r}
  \prec
  v_{\gamma,K}^{\otimes r}
  =
  \lambda(Z_0).
\end{equation}
Moreover, for $u\ge0$,
\begin{equation}\label{eq:relative-entropy}
  H_{2+u}(w)
  =
  2\log K
  -
  \frac{\log(K^u+1-1/K)}{1+u}.
\end{equation}
Thus \eqref{eq:exact-Bell} provides the rank control used at low
orders, \eqref{eq:Bell-von-Neumann} gives the exact order-one entropy,
and \eqref{eq:profile-majorization}--\eqref{eq:relative-entropy}
provide the simpler comparison used at high orders.

The actual channel uses $F$ rather than $I_n$, so its Bell output
need not retain the exact spectrum in \eqref{eq:exact-Bell}.  The
following estimate shows that only the average mass removed by the
filter is needed to control this perturbation.

\begin{lemma}[Average filter loss and Bell distance]
\label{lem:Bell-stability}
For any real orthogonal branch tuple and any real symmetric
$F\in M_n(\R)$ satisfying
\[
  0\preceq F\preceq I_n,
\]
one has
\begin{equation}\label{eq:Bell-distance}
  \frac12\|Z_F-Z_0\|_1
  \le
  (D+1)\tau_n(I_n-F^2).
\end{equation}
\end{lemma}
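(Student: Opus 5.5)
The plan is to split the filtered channel into its unfiltered mixed-unitary part and the replacement part, and to estimate the two contributions to $Z_F-Z_0$ separately. Write $\Psi:=\Phi_{I_n}=\Phi_0$ for the first summand of \eqref{eq:channel} with $F=I_n$, so that
\[
  \Phi_F(X)=\Psi(FXF)+R(X),
  \qquad
  R(X):=\frac{\Tr((I_n-F^2)X)}{D}\,I_D .
\]
Expanding the tensor square gives
\[
  Z_F=(\Psi\otimes\Psi)\bigl((F\otimes F)\Omega_n(F\otimes F)\bigr)+\mathcal R,
\]
where $\mathcal R$ collects the three cross terms involving $R$. Each of these is a tensor product of completely positive maps applied to $\Omega_n\succeq0$, so $\mathcal R\succeq0$. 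We then write
\[
  Z_F-Z_0=Y+\mathcal R,
  \qquad
  Y:=(\Psi\otimes\Psi)\bigl((F\otimes F)\Omega_n(F\otimes F)\bigr)-Z_0,
\]
and bound $\|Y\|_1$ and $\|\mathcal R\|_1$ separately.

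\emph{The positive remainder.} Since $\mathcal R\succeq0$, we have $\|\mathcal R\|_1=\Tr\mathcal R$. Since $\Tr Z_F=1$ and $\Psi\otimes\Psi$ is trace preserving,
\[
  \Tr\mathcal R=1-\langle\Omega_n|F^2\otimes F^2|\Omega_n\rangle .
\]
By \eqref{eq:vectorization} and $F=F^T$, the last expectation equals $\tau_n(F^4)$. Applying $x^2\ge2x-1$ to the eigenvalues of $F^2$ gives
\[
  \Tr\mathcal R=1-\tau_n(F^4)\le2\tau_n(I_n-F^2).
\]

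\emph{The main term.} I expect this to be the main obstacle: the naive route, through trace-norm contractivity of $\Psi\otimes\Psi$ and a rank-two pure-state estimate, only gives a bound of order $\tau_n(I_n-F^2)^{1/2}$. To obtain a linear bound, I would work entrywise, exactly as in \eqref{eq:finite-gram}. By \eqref{eq:vectorization},
\[
  (W_sF\otimes W_tF)\ket{\Omega_n}=n^{-1/2}\vecop(W_sF^2W_t^T).
\]
Hence
\[
  Y_{(s,t),(s',t')}
  =D^{-2}\Bigl[\tau_n\bigl(W_{t'}F^2W_{s'}^TW_sF^2W_t^T\bigr)-\tau_n\bigl(W_{t'}W_{s'}^TW_sW_t^T\bigr)\Bigr].
\]
Set $E=I_n-F^2\succeq0$. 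Telescoping, the difference is
\[
  -\tau_n(W_{t'}EW_{s'}^TW_sF^2W_t^T)-\tau_n(W_{t'}W_{s'}^TW_sEW_t^T).
\]
For $0\preceq X$ and contractions $U,V$ one has $|\tau_n(UXV)|\le\tau_n(X)$. This follows by Cauchy--Schwarz applied to $X^{1/2}$, and here $W_{s'}^TW_sF^2W_t^TW_{t'}$ is a contraction. Each entry therefore satisfies $|Y_{(s,t),(s',t')}|\le2D^{-2}\tau_n(E)$. Since $Y$ is a $D^2\times D^2$ matrix,
\[
  \|Y\|_{\HS}\le D^2\cdot2D^{-2}\tau_n(E)=2\tau_n(E),
  \qquad
  \|Y\|_1\le\sqrt{D^2}\,\|Y\|_{\HS}\le2D\,\tau_n(E).
\]

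\emph{Conclusion.} Combining the two bounds,
\[
  \tfrac12\|Z_F-Z_0\|_1\le\tfrac12\bigl(2D\tau_n(E)+2\tau_n(E)\bigr)=(D+1)\tau_n(I_n-F^2),
\]
which is \eqref{eq:Bell-distance}. No moment hypothesis on $\pi$ is needed, only orthogonality of the $W_s$ and $0\preceq F\preceq I_n$, consistent with the statement.
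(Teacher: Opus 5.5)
Your proof is correct and follows essentially the same route as the paper. Both arguments isolate the positive replacement contributions, whose total trace is $1-\tau_n(F^4)$, and bound the remaining Gram difference entrywise by $|\tau_n(E\cdot\text{contraction})|\le\tau_n(E)$ before paying the factor $D$ from $\|\cdot\|_1\le D\|\cdot\|_{\HS}$. The only difference is bookkeeping: the paper expands into $\Gamma(E,E)-\Gamma(E,I_n)-\Gamma(I_n,E)$ and lets $\|\Gamma(E,E)\|_1=\tau_n(E^2)$ cancel against $\Tr Y=2\tau_n(E)-\tau_n(E^2)$, whereas your two-term telescoping absorbs $\Gamma(E,E)$ into the cross terms and discards $\tau_n(E^2)$, giving the same constant $D+1$.
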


\begin{proof}
Set
\[
  E:=I_n-F^2,
  \qquad
  e:=\tau_n(E),
  \qquad
  f:=\tau_n(E^2),
\]
and define
\[
  \xi_A:=n^{-1/2}\vecop(A),
  \qquad
  \Gamma(A,B)
  :=
  \Phi_0^{\otimes2}
  \bigl(\ket{\xi_A}\bra{\xi_B}\bigr).
\]
Then
\[
  \Gamma(A,A)\succeq0,
  \qquad
  \Tr\Gamma(A,A)=\tau_n(A^*A).
\]
By \eqref{eq:vectorization}, the contribution in which both copies
pass through the filtered part of the channel is
$\Gamma(I_n-E,I_n-E)$.  The remaining contributions form a positive
matrix $Y$ with
\[
  \Tr Y=2e-f.
\]
Consequently,
\[
  Z_F-Z_0
  =
  \Gamma(E,E)
  -\Gamma(E,I_n)
  -\Gamma(I_n,E)
  +Y.
\]
Each entry of $\Gamma(E,I_n)$ is $D^{-2}$ times the normalized trace
of $E$ multiplied by a real orthogonal matrix, and therefore has
modulus at most $e/D^2$.  Hence
\[
  \|\Gamma(E,I_n)\|_1
  \le
  D\|\Gamma(E,I_n)\|_{\HS}
  \le
  De.
\]
Since
\[
  \|\Gamma(E,E)\|_1=f,
  \qquad
  \|Y\|_1=2e-f,
\]
the triangle inequality gives
\[
  \|Z_F-Z_0\|_1
  \le
  f+2De+(2e-f)
  =
  2(D+1)e,
\]
which is \eqref{eq:Bell-distance}.
\end{proof}

Thus the two-copy analysis separates into two interfaces: the
four-letter trace condition determines the unfiltered Bell spectrum
exactly, while the scalar quantity
$\tau_n(I_n-F^2)$ controls the effect of inserting the filter.

\subsection{Spectral perturbation estimates}

The preceding subsection determines the spectrum of the unfiltered
Bell output $Z_0$ exactly and controls the trace distance between
$Z_F$ and $Z_0$.  We now record the two perturbation estimates needed
to transfer this spectral information to $Z_F$.

At low R\'enyi orders, a trace-distance perturbation may create many
small eigenvalues, so the rank of $Z_0$ itself is not stable.  What is
stable is the amount of mass outside its support.  The following
estimate converts this observation into an entropy bound.  It is the
pinching argument used in
\cite[proof of Proposition~B.4, Eqs.~(B.34)--(B.37)]{ShouGorshkov2026},
written here for a general support dimension.

\begin{lemma}[Support-tail bound]
\label{lem:support-tail}
Let $\rho,\sigma\in M_b(\C)$ be states, and let $1\le M<b$ be an
integer.  Suppose
\[
  \rank\sigma\le M,
  \qquad
  \frac12\|\rho-\sigma\|_1\le\delta.
\]
Then, for every $0<\theta\le1/2$,
\begin{equation}\label{eq:support-tail}
  H_\theta(\rho)
  \le
  \log M
  +
  2\frac{b}{M}\delta^\theta.
\end{equation}
\end{lemma}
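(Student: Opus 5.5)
The plan is to compress $\rho$ onto the support of $\sigma$ by a pinching, which can only increase R\'enyi entropy, and then to bound the two diagonal blocks separately: one by a rank argument and one by a trace-distance argument. Fix a projection $P$ of rank exactly $M$ whose range contains the support of $\sigma$. This is possible since $\rank\sigma\le M<b$. Let $P^\perp=I_b-P$, and let
\[
  \mathcal P(X)=PXP+P^\perp XP^\perp
\]
be the pinching \eqref{eq:pinching} associated with the pair $P,P^\perp$. By \eqref{eq:majorization-entropy},
\[
  H_\theta(\rho)\le H_\theta(\mathcal P(\rho)),
\]
so it suffices to bound the right-hand side. Since $\mathcal P(\rho)$ is block diagonal,
\[
  \Tr\mathcal P(\rho)^\theta=\Tr(P\rho P)^\theta+\Tr(P^\perp\rho P^\perp)^\theta .
\]

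The first step controls the mass outside the support of $\sigma$. Put $\varepsilon:=\Tr(P^\perp\rho)$. Because $P^\perp\sigma=0$ and $0\preceq P^\perp\preceq I$, the variational formula \eqref{eq:trace-distance} gives
\[
  \varepsilon=\Tr\bigl(P^\perp(\rho-\sigma)\bigr)\le\delta .
\]
The trace of the other block is therefore $\Tr(P\rho P)=1-\varepsilon\le1$.

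The second step bounds each block by concavity of $x\mapsto x^\theta$. For a positive matrix of rank at most $m$ and trace $t$, Jensen's inequality gives a $\theta$-power trace of at most $m^{1-\theta}t^\theta$. Applying this to the two blocks,
\[
  \Tr\mathcal P(\rho)^\theta
  \le M^{1-\theta}(1-\varepsilon)^\theta+(b-M)^{1-\theta}\varepsilon^\theta
  \le M^{1-\theta}\Bigl(1+\bigl(\tfrac bM\bigr)^{1-\theta}\delta^\theta\Bigr).
\]
Here we used $(1-\varepsilon)^\theta\le1$ and $\varepsilon^\theta\le\delta^\theta$, the latter because $\theta>0$. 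Since $b/M>1$ and $1-\theta\le1$, we also have $(b/M)^{1-\theta}\le b/M$.

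The final step converts this into an entropy bound. Because $0<\theta<1$, the prefactor $1/(1-\theta)$ is positive, so
\[
  H_\theta(\rho)\le\frac{1}{1-\theta}\log\Tr\mathcal P(\rho)^\theta
  \le\log M+\frac{1}{1-\theta}\log\Bigl(1+\tfrac bM\delta^\theta\Bigr).
\]
Using $\log(1+x)\le x$ and $1/(1-\theta)\le2$ for $\theta\le1/2$ yields \eqref{eq:support-tail}.

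There is no serious obstacle. The only points requiring care are enlarging the support projection to rank exactly $M$, so that the complementary block has rank $b-M$, and keeping track of signs. Positivity of $1-\theta$ is what lets the upper bound on $\Tr\mathcal P(\rho)^\theta$ become an upper bound on the entropy. The restriction $\theta\le1/2$ is used only to replace $1/(1-\theta)$ by the constant $2$.
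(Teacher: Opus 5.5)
Your proof is correct and follows the paper's argument. Both use the same rank-$M$ projection containing $\operatorname{supp}\sigma$, the same tail bound $\varepsilon=\Tr(P^\perp\rho)\le\delta$ from the variational formula, the pinching onto $P\oplus P^\perp$, and the same closing estimates $1/(1-\theta)\le2$, $(\cdot)^{1-\theta}\le b/M$ and $\log(1+x)\le x$. The only difference is cosmetic: you flatten each diagonal block with Jensen's inequality for $x\mapsto x^\theta$, whereas the paper uses unitary averaging within the blocks plus Schur concavity; both give the identical bound $M^{1-\theta}(1-\varepsilon)^\theta+(b-M)^{1-\theta}\varepsilon^\theta$.
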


\begin{proof}
Let $P$ be a rank-$M$ projection whose range contains
$\operatorname{supp}\sigma$.  By \eqref{eq:trace-distance},
\[
  \varepsilon
  :=
  \Tr((I-P)\rho)
  =
  \Tr((I-P)(\rho-\sigma))
  \le\delta.
\]
Pinching with respect to $P\oplus(I-P)$, followed by unitary averaging
within the two blocks, maps $\rho$ to
\[
  \rho'
  =
  (1-\varepsilon)\frac{P}{M}
  +
  \varepsilon\frac{I-P}{b-M}.
\]
By \eqref{eq:unitary-average-majorization} and the Schur concavity of
R\'enyi entropy,
\[
  H_\theta(\rho)\le H_\theta(\rho').
\]
The spectrum of $\rho'$ gives
\[
  H_\theta(\rho')
  =
  \log M
  +
  \frac1{1-\theta}
  \log\left[
    (1-\varepsilon)^\theta
    +
    \left(\frac{b-M}{M}\right)^{1-\theta}
    \varepsilon^\theta
  \right].
\]
Since $0<\theta\le1/2$,
\[
  \frac1{1-\theta}\le2,
  \qquad
  \left(\frac{b-M}{M}\right)^{1-\theta}
  \le\frac bM,
\]
and $\log(1+x)\le x$.  Using $\varepsilon\le\delta$ yields
\eqref{eq:support-tail}.
\end{proof}

For the higher orders we use more detailed information: the unfiltered spectrum majorizes an explicit reference profile.  The next lemma shows that such a majorization relation survives a sufficiently small trace-distance perturbation after adding a uniform spectral floor.

\begin{lemma}[Smoothed majorization]
\label{lem:smoothing}
Let $\rho,\sigma\in M_b(\C)$ be states, and let
$x=(x_1,\ldots,x_b)$ be a decreasing probability vector satisfying
\[
  x_b=0,
  \qquad
  x_1\ge\frac2b,
  \qquad
  x\prec\lambda(\sigma).
\]
Suppose
\[
  \frac12\|\rho-\sigma\|_1\le\delta<\frac1b,
\]
and set
\[
  \varepsilon:=b\delta,
  \qquad
  x_\delta
  :=
  (1-\varepsilon)x+\frac{\varepsilon}{b}\mathbf1_b
  =
  (1-b\delta)x+\delta\mathbf1_b,
\]
where $\mathbf1_b=(1,\ldots,1)$.  Then
\begin{equation}\label{eq:smoothing}
  x_\delta\prec\lambda(\rho).
\end{equation}
Consequently, for $1<p<\infty$,
\begin{equation}\label{eq:renyi-smoothing}
  H_p(\rho)
  \le
  H_p(x)
  -
  \frac{p}{p-1}\log(1-\varepsilon),
\end{equation}
while
\begin{equation}\label{eq:infty-smoothing}
  H_\infty(\rho)
  \le
  H_\infty(x)-\log(1-\varepsilon).
\end{equation}
At the von Neumann order,
\begin{equation}\label{eq:von-Neumann-perturbation}
  H_1(\rho)
  \le
  H_1(x)
  +
  \varepsilon\log b
  +
  h_{\mathrm b}(\varepsilon),
\end{equation}
where
\[
  h_{\mathrm b}(t)
  :=
  -t\log t-(1-t)\log(1-t).
\]
\end{lemma}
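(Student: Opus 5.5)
The plan is to prove the majorization \eqref{eq:smoothing} directly from partial sums, and then read off the three entropy bounds from it. Write $y=\lambda(\rho)$ and $z=\lambda(\sigma)$, both decreasing. By \eqref{eq:ky-fan-trace-perturbation}, for every $k$,
\[
  \sum_{i\le k} y_i\ge \sum_{i\le k} z_i-\delta\ge \sum_{i\le k}x_i-\delta .
\]
The smoothed vector $x_\delta=(1-b\delta)x+\delta\mathbf 1_b$ is again decreasing, and its partial sums are
\[
  S_k(x_\delta)=\sum_{i\le k}x_i-\delta\Bigl(b\sum_{i\le k}x_i-k\Bigr).
\]
So it suffices to show $b\sum_{i\le k}x_i-k\ge1$ for every $1\le k<b$; the case $k=b$ is equality of totals.

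This inequality is where the hypotheses $x_1\ge 2/b$ and $x_b=0$ are used. The function $k\mapsto b\sum_{i\le k}x_i-k$ is concave in $k$, because its increments $bx_k-1$ are decreasing. At $k=1$ it equals $bx_1-1\ge1$. At $k=b-1$ it equals $b(1-x_b)-(b-1)=1$. Concavity then gives a value $\ge1$ on the whole range $1\le k\le b-1$. Hence $S_k(x_\delta)\le\sum_{i\le k}x_i-\delta\le S_k(y)$, which proves $x_\delta\prec\lambda(\rho)$. The hypothesis $\delta<1/b$ is only needed so that $x_\delta$ is a probability vector with $1-\varepsilon>0$.

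Schur concavity then gives $H_p(\rho)\le H_p(x_\delta)$ at every order, and it remains to compare $H_p(x_\delta)$ with $H_p(x)$.
\begin{itemize}
\item For $1<p<\infty$: entrywise $x_\delta\ge(1-\varepsilon)x$, so $\sum (x_\delta)_i^p\ge(1-\varepsilon)^p\sum x_i^p$. Since $1/(1-p)<0$, this yields \eqref{eq:renyi-smoothing}.
\item For $p=\infty$: $\max x_\delta\ge(1-\varepsilon)\max x$, which gives \eqref{eq:infty-smoothing}.
\item For $p=1$: use concavity of entropy under mixing together with the mixing bound
\[
  H\bigl((1-\varepsilon)x+\varepsilon u\bigr)\le(1-\varepsilon)H(x)+\varepsilon H(u)+h_{\mathrm b}(\varepsilon),
\]
where $u=\mathbf 1_b/b$ is the uniform vector. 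This bound follows by realizing the mixture as the marginal of a joint distribution with a binary flag. Since $H(u)=\log b$ and $(1-\varepsilon)H(x)\le H(x)$, we get \eqref{eq:von-Neumann-perturbation}.
\end{itemize}

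The main obstacle is the partial-sum inequality $b\sum_{i\le k}x_i\ge k+1$; the concavity observation settles it. Everything else is routine.
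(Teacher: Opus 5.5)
Your proposal is correct and follows the paper's proof essentially step for step: the same concavity argument (your $b S_k(x)-k$ is the paper's $S_j(x)-j/b$ scaled by $b$), the same use of $x\prec\lambda(\sigma)$ and the Ky Fan bound \eqref{eq:ky-fan-trace-perturbation}, and the same coordinatewise bound $x_\delta\ge(1-\varepsilon)x$ and mixture-entropy step. One wording slip: the $p=1$ step does not use concavity of entropy, which gives a lower bound; it uses only the upper mixing bound, which you correctly derive via the binary flag.
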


\begin{proof}
For a decreasing probability vector $y$, write
\[
  S_j(y):=\sum_{i=1}^j y_i.
\]
Since $x$ is decreasing, the sequence
\[
  S_j(x)-\frac jb
\]
is concave in $j$.  Moreover,
\[
  S_1(x)-\frac1b\ge\frac1b,
  \qquad
  S_{b-1}(x)-\frac{b-1}{b}
  =
  \frac1b,
\]
where the second equality uses $x_b=0$.  Hence
\[
  S_j(x)-\frac jb\ge\frac1b,
  \qquad 1\le j<b.
\]
It follows that
\[
  S_j(x_\delta)
  =
  S_j(x)
  -
  b\delta
  \left(
    S_j(x)-\frac jb
  \right)
  \le
  S_j(x)-\delta.
\]
Using $x\prec\lambda(\sigma)$ and
\eqref{eq:ky-fan-trace-perturbation},
\[
  S_j(x_\delta)
  \le
  S_j(\lambda(\sigma))-\delta
  \le
  S_j(\lambda(\rho)),
  \qquad 1\le j<b.
\]
The total masses agree, proving \eqref{eq:smoothing}.

Since
\[
  x_\delta\ge(1-\varepsilon)x
\]
coordinatewise, Schur concavity together with
\eqref{eq:smoothing} gives, for $1<p<\infty$,
\[
  H_p(\rho)
  \le
  H_p(x_\delta)
  \le
  H_p(x)
  -
  \frac{p}{p-1}\log(1-\varepsilon),
\]
which is \eqref{eq:renyi-smoothing}.  The same argument using the
largest coordinate gives \eqref{eq:infty-smoothing}.

Finally,
\[
  x_\delta
  =
  (1-\varepsilon)x
  +
  \varepsilon\frac{\mathbf1_b}{b},
\]
so the classical mixture-entropy bound gives
\[
  H_1(x_\delta)
  \le
  (1-\varepsilon)H_1(x)
  +
  \varepsilon\log b
  +
  h_{\mathrm b}(\varepsilon).
\]
Since $x_\delta\prec\lambda(\rho)$,
$H_1(\rho)\le H_1(x_\delta)$, and
\eqref{eq:von-Neumann-perturbation} follows.
\end{proof}

\subsection{Nonadditivity certificates}

We first specialize the preceding estimates to the parameters used in
Theorem~\ref{thm:main}:
\begin{equation}\label{eq:base-parameters}
 k=2048,\qquad\gamma=8,\qquad K = 2^{15} = 32768,
 \qquad D=K^r.
\end{equation}
Set
\begin{equation}\label{eq:compression-targets}
 C_r^2=(K+1000)^r-K^r,\qquad
 B_r=34^r-1,\qquad a=\frac{K+1000}{K}=\frac{4221}{4096}.
\end{equation}

\begin{theorem}[Uniform positive-order certificate]\label{thm:certificate}
Let $p_0>0$, $r\ge1$, and $\theta=\min\{p_0,1/2\}$. Suppose a representation of \eqref{eq:product-model}, with \eqref{eq:base-parameters}, satisfies \eqref{eq:Bell-moment-interface}. Let $F$ be real symmetric with $0\preceq F\preceq I_n$, and assume \eqref{eq:compression-interface} with $C=C_r$, $B=B_r$. If there exists $\delta>0$ such that
\begin{equation}\label{eq:cost-interface}
 (D+1)\tau_n(I_n-F^2)\le\delta,\qquad
 D^2\delta\le2^{-16},\qquad
 \delta^\theta\le2^{-(r+16)},
\end{equation}
then
\begin{equation}\label{eq:certified-gap}
 \inf_{p\in[p_0,\infty]}\Delta_p(\Phi_F;\Omega_n)
 \ge\frac r2+\frac{16r-1}{16384}>\frac r2.
\end{equation}
\end{theorem}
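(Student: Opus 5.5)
The plan is to compare the one-copy lower bound of \Cref{prop:one-copy} with an upper bound on $H_p(Z_F)$, where $Z_F=\Phi_F^{\otimes2}(\Omega_n)$. The upper bound is transferred from the exact unfiltered profile of \Cref{prop:exact-Bell-profile} through \Cref{lem:Bell-stability}. That lemma together with \eqref{eq:cost-interface} gives $\tfrac12\|Z_F-Z_0\|_1\le\delta$, where $Z_0$ has $b=D^2$ rows. The range $[p_0,\infty]$ is split into two regimes, $p_0\le p\le 2$ and $2<p\le\infty$. The target $\tfrac r2+\tfrac{16r-1}{16384}$ is checked in each regime.

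\emph{One-copy side.} With $C=C_r$ and $B=B_r$ we get $a_0=1+C_r^2/D=((K+1000)/K)^r=a^r$ and $b_0=34^r$. So \eqref{eq:one-copy-entropy} reads as follows. For $0<p\le2$ it gives $H_{\min,p}\ge r\log K-r\log a$. For $p=2+u$ with $u>0$ it gives $H_{\min,p}\ge r\log K-r\,\frac{\log a+u\log 34}{1+u}$. For $p=\infty$ it gives $H_{\min,\infty}\ge r\log K-r\log 34$.

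\emph{Low orders, $p_0\le p\le2$.} Since $H_p$ is nonincreasing in $p$ and $\theta\le p$, we have $H_p(Z_F)\le H_\theta(Z_F)$. We apply \Cref{lem:support-tail} with $\sigma=Z_0$, $M=M_{\gamma,K}^r$ and $b=D^2$. For $\gamma=8$, \eqref{eq:relative-product-count} gives $M_{\gamma,K}=\tfrac9{16}K^2-K+1$, so $D^2/M\le(16/9+\epsilon)^r\le 2^r$. Combined with $\delta^\theta\le2^{-(r+16)}$, this gives $H_\theta(Z_F)\le r\log M_{\gamma,K}+2^{-15}$. The gap is therefore at least
\[
r\bigl(2\log K-2\log a-\log M_{\gamma,K}\bigr)-2^{-15}\ge r\bigl(\log\tfrac{16}{9}-2\log a-O(K^{-1})\bigr)-2^{-15}.
\]
Here $\log\frac{16}{9}\approx0.5754$ and $2\log a\approx0.0600$, so this exceeds $\tfrac r2+\tfrac{16r-1}{16384}$ with room to spare. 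The exact rational constants will be recorded with elementary numerical bounds on the logarithms.

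\emph{High orders, $p>2$ and $p=\infty$.} We apply \Cref{lem:smoothing} with $\sigma=Z_0$ and $x=w^{\otimes r}$, padded and sorted. By \eqref{eq:profile-majorization} we have $x\prec\lambda(Z_0)$. Also $x_b=0$ because $w$ has $K-1$ zeros, and $x_1=K^{-r}\ge 2/K^{2r}$. The condition $D^2\delta\le2^{-16}$ gives both $\delta<1/b$ and $\varepsilon=b\delta\le2^{-16}$. Hence $H_p(Z_F)\le rH_p(w)-\frac p{p-1}\log(1-2^{-16})$, and the correction term is at most about $3\cdot2^{-16}$. Inserting \eqref{eq:relative-entropy}, the gap for $p=2+u$ is at least
\[
\frac{r}{1+u}\Bigl[\log(K^u+1-1/K)-2\log a-2u\log34\Bigr]-3\cdot2^{-16}.
\]
It then suffices to show $g(u):=\log(K^u+1-1/K)-2\log a-2u\log 34-(1+u)\bigl(\tfrac12+\tfrac{16}{16384}\bigr)\ge0$ for all $u\ge0$, with enough slack to absorb the constant. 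The function $u\mapsto\log(K^u+c)$ is convex. So I would check $g(0)=\log(2-1/K)-2\log a-\tfrac12-\dots\approx0.07>0$, and check that the right derivative $g'(0)$ is positive. For the derivative, use $\log(K^u+c)$ with slope $\ge\frac{K^u\log K}{K^u+c}$ and compare against $15\log2\approx10.4$ versus $2\log34+\tfrac12\approx7.6$. Convexity then makes $g$ increasing. The case $p=\infty$ is the limit: the gap is at least $r(\log K-2\log34)-2^{-15}$, which is clearly large.

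The main obstacle I anticipate is the bookkeeping in the high-order regime near $u=0$. There the margin $g(0)\approx0.07$ must be certified uniformly in $u$ together with the $\frac{p}{p-1}\log(1-\varepsilon)$ term and the extra $\tfrac{16r-1}{16384}$ in the stated bound. I would first verify $g(0)$ and the derivative bound with explicit rational inequalities for $\log2$, $\log3$ and $\log17$, and only then combine the two regimes by taking the minimum.
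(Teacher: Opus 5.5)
Your architecture is the paper's. You use the one-copy constants $a_0=a^r$ and $b_0=34^r$, and for $p_0\le p\le 2$ the support-tail lemma at order $\theta$. Your $M=M_{\gamma,K}^r$ works as well as the paper's $(9K^2/16)^r$. Since $M_{\gamma,K}\le 9K^2/16$, the $-O(K^{-1})$ correction is not needed. For $p>2$ and $p=\infty$ you use the smoothing lemma with $x=w^{\otimes r}$, as the paper does. Your reduction of the case $p=2+u$ to the scalar inequality $g(u)\ge0$ for all $u\ge0$ is also correct, and the $-3\cdot2^{-16}$ is absorbed by $-1/16384$.

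\textbf{The gap: your certification of $g\ge0$ fails.} The slope of $u\mapsto\log(K^u+1-1/K)$ at $u=0$ is $\log K/(2-1/K)\approx5.20$, not $\log K\approx10.4$. The factor $K^u/(K^u+1-1/K)$ is about $1/2$ at $u=0$ and tends to $1$ only as $u\to\infty$. Hence
\[
g'(0)=\frac{\log K}{2-1/K}-2\log34-\frac12-\frac1{1024}\approx-2.35<0 .
\]
So $g$ is \emph{decreasing} near $0$, and convexity does not make it increasing. Its minimum is interior, near $u\approx0.094$ (where $K^u\approx2.66$), with value about $0.026$. This is well below $g(0)$, which is about $0.13$, not $0.07$. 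The inequality $g\ge0$ is true, but checking $g(0)$ and $g'(0)$ certifies nothing about it.

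\textbf{The fix.} You need a global affine lower bound for the convex function $\log(K^u+1-1/K)$, that is, a tangent line at an interior point. The paper uses the one given by weighted AM--GM with weights $3/4$ and $1/4$:
\[
\log(K^u+1-1/K)\ge h_{\mathrm b}(3/4)+\tfrac34u\log K+\tfrac14\log(1-1/K).
\]
This makes your bracket at least $c_1+uc_2$, where
\[
c_1=h_{\mathrm b}(3/4)+\tfrac14\log(1-1/K)-2\log a\approx0.5022,
\qquad
c_2=\tfrac34\log K-2\log34\approx0.745 .
\]
Then $\frac{c_1+uc_2}{1+u}\ge\min\{c_1,c_2\}>\frac12+\frac1{1024}$ uniformly in $u\ge0$. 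The margin $c_1-\frac12-\frac1{1024}\approx1.2\times10^{-3}$ is thin, which is why the rational certificates for $\log2$ and $\log3$ are needed. Choosing a tangent point closer to the true minimizer would give more slack.
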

\begin{proof}
By \cref{prop:one-copy}, the one-copy entropy bound \eqref{eq:one-copy-entropy} holds with $a_0=a^r$ and $b_0=34^r$. On the two-copy side, the four-letter trace condition
\eqref{eq:Bell-moment-interface} identifies the Gram matrix of the
unfiltered Bell output with the relative-product collision matrix.
Hence, by \eqref{eq:exact-Bell},
\[
  \lambda(Z_0)=v_{\gamma,K}^{\otimes r},
  \qquad
  \rank Z_0=M_{\gamma,K}^r.
\]
Moreover, \cref{lem:Bell-stability} together with
\eqref{eq:cost-interface} gives
\[
  \frac12\|Z_F-Z_0\|_1
  \le
  (D+1)\tau_n(I_n-F^2)
  \le
  \delta.
\]

For $p_0\le p\le2$, use \cref{lem:support-tail} with $b=D^2$ and $M=(9K^2/16)^r$. By \eqref{eq:cost-interface},
\begin{equation}\label{eq:low-order-gap}
 \Delta_p(\Phi_F;\Omega_n)
 \ge r\bigl(\log(16/9)-2\log a\bigr)
       -2(16/9)^r\delta^\theta
 \ge r c_0-2^{-15},
\end{equation}
where $c_0=\log(16/9)-2\log a$.

For $p=2+u$, $u\ge0$, apply \cref{lem:smoothing} with $x$ the decreasing rearrangement of $w^{\otimes r}$, and put $\zeta=D^2\delta$. Since
$x_\delta\ge(1-\zeta)x$ coordinatewise and $H_p(x)=rH_p(w)$,
\[
 H_p(Z_F)\le rH_p(w)-\frac p{p-1}\log(1-\zeta)
 \le rH_p(w)-2\log(1-\zeta).
\]
The weighted arithmetic--geometric mean inequality gives
\[
 \log(K^u+1-1/K)
 \ge h_{\mathrm b}(3/4)+\tfrac34u\log K+\tfrac14\log(1-1/K).
\]
Combining this with \eqref{eq:relative-entropy} and \eqref{eq:one-copy-entropy},
\begin{equation}\label{eq:high-order-gap}
 \Delta_{2+u}(\Phi_F;\Omega_n)
 \ge r\frac{c_1+u c_2}{1+u}+2\log(1-\zeta)
 \ge r\min\{c_1,c_2\}-2^{-14},
\end{equation}
where
\begin{equation}\label{eq:comparison-constants}
 c_1=h_{\mathrm b}(3/4)+\tfrac14\log(1-1/K)-2\log a,
 \qquad c_2=\tfrac34\log K-2\log34.
\end{equation}
Here $\log(1-x)\ge-2x$ for $0\le x\le1/2$ was used. At $p=\infty$, the same smoothing bound gives
$\Delta_\infty\ge r(\log K-2\log34)+\log(1-\zeta)$,
which is at least $r c_2-2^{-14}$.

For completeness, the required constants have rational certificates. The expansion
\begin{equation}\label{eq:log-series}
 \log x=2\sum_{j=0}^\infty\frac{z^{2j+1}}{2j+1},\qquad
 z=\frac{x-1}{x+1},
\end{equation}
gives
\begin{equation}\label{eq:log-certificates}
 \log2>\frac{53056}{76545}>\frac{6931}{10000},\qquad
 \log3<\frac{33223}{30240}<\frac{10987}{10000}.
\end{equation}
For the first bound take four terms at $z=1/3$; for the second take four terms at $z=1/2$ and bound the remainder by $2z^9/[9(1-z^2)]$. With $\log a\le125/4096$ and $\tfrac14\log(1-1/K)\ge-1/65536$,
\begin{align}
 c_0&>\frac{23}{40}-\frac{125}{2048}
      =\frac{5263}{10240}>\frac12+\frac1{1024},\label{eq:c0-certificate}\\
 c_1&>\frac{22487}{40000}-\frac{125}{2048}-\frac1{65536}
      =\frac{20526063}{40960000}>\frac12+\frac1{1024},\label{eq:c1-certificate}\\
 c_2&=\tfrac54\log2-2\log(17/16)
      >\frac56-\frac18=\frac{17}{24}>\frac12+\frac1{1024}.\label{eq:c2-certificate}
\end{align}
Equations \eqref{eq:low-order-gap}--\eqref{eq:high-order-gap} now give \eqref{eq:certified-gap}.
\end{proof}

For the separate $p=1$ specialization, the purity estimate from \cref{prop:one-copy} already provides the required one-copy control, so the linear test $L_z$ and its associated compression bound are not needed.  Moreover, since only a fixed positive gap is required in \cref{cor:von-neumann}, a single block $r=1$ already suffices and no outer amplification is necessary. We record the specialization used for \cref{cor:von-neumann}.

\begin{proposition}[A von Neumann certificate]
\label{prop:von-Neumann-certificate}
Let $k\ge1$ and $\gamma\ge2$, and specialize
\eqref{eq:product-model} to $r=1$.  Thus
\[
  G=\mathbb F_k^\gamma,
  \qquad
  K=2\gamma k,
  \qquad
  D=K.
\]
Let
\[
  \pi:G\to O(n)
\]
be a real orthogonal representation satisfying the four-letter trace
condition \eqref{eq:Bell-moment-interface}, and let
$F\in M_n(\R)$ satisfy
\[
  F=F^T,
  \qquad
  0\preceq F\preceq I_n.
\]
Suppose there exist constants $C\ge0$ and
$0<\delta<K^{-2}$ such that
\begin{equation}\label{eq:von-Neumann-interface}
  \|FT_AF\|
  \le
  C\|A\|_{\HS}
  \quad
  \text{for every }A\in\mathcal A_K,
  \qquad
  (K+1)\tau_n(I_n-F^2)
  \le
  \delta.
\end{equation}
Set
\[
  \varepsilon:=K^2\delta.
\]
Then
\begin{equation}\label{eq:von-Neumann-certificate}
  \Delta_1(\Phi_F;\Omega_n)
  \ge
  \frac{\gamma-1}{\gamma}\log2
  +
  \frac{\log K}{K}
  -
  2\log\left(1+\frac{C^2}{K}\right)
  -
  2\varepsilon\log K
  -
  h_{\mathrm b}(\varepsilon).
\end{equation}
\end{proposition}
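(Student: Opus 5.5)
Since $\Delta_1(\Phi_F;\Omega_n)=2H_{\min,1}(\Phi_F)-H_1(Z_F)$, the plan is to bound the two terms separately and subtract.

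\emph{One-copy term.} We apply \cref{prop:one-copy} with $D=K$, using only the first inequality of \eqref{eq:compression-interface}. As noted in the remark following that proposition, the linear test $L_z$ is not needed. We may formally take $B$ arbitrary; it does not enter the bound for $p\le2$. Then every output state has $\Tr\rho^2\le a_0/K$ with $a_0=1+C^2/K$. By \eqref{eq:entropy-elementary}, $H_1\ge H_2$, so
\[
  H_{\min,1}(\Phi_F)\ge\log K-\log\Bigl(1+\frac{C^2}{K}\Bigr).
\]
Doubling this produces the term $2\log K-2\log(1+C^2/K)$.

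\emph{Two-copy term.} First, \cref{prop:exact-Bell-profile} with $r=1$ gives $\lambda(Z_0)=v_{\gamma,K}$, using the four-letter trace condition. Second, \cref{lem:Bell-stability} together with \eqref{eq:von-Neumann-interface} gives $\tfrac12\|Z_F-Z_0\|_1\le\delta$. We then apply \cref{lem:smoothing} with $b=K^2$, $\sigma=Z_0$, $\rho=Z_F$, and $x$ the decreasing rearrangement of $v_{\gamma,K}$ itself, so that $x\prec\lambda(\sigma)$ holds trivially. The hypotheses check out as follows:
\begin{itemize}
\item $\delta<1/b$ is assumed.
\item $x_1=1/K\ge2/K^2$ because $K\ge2$.
\item $x_b=0$ because $M_{\gamma,K}<K^2$, since $\frac{\gamma+1}{2\gamma}K^2-K+1<K^2$.
\end{itemize}
With $\varepsilon=K^2\delta$, \eqref{eq:von-Neumann-perturbation} then yields
\[
  H_1(Z_F)\le H_1(v_{\gamma,K})+\varepsilon\log K^2+h_{\mathrm b}(\varepsilon)
  =H_1(v_{\gamma,K})+2\varepsilon\log K+h_{\mathrm b}(\varepsilon).
\]

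\emph{Combining.} We insert the exact value \eqref{eq:Bell-von-Neumann}, $H_1(v_{\gamma,K})=2\log K-\frac{\log K}{K}-\frac{\gamma-1}{\gamma}\log2$, into the two-copy bound and subtract it from twice the one-copy bound. The $2\log K$ terms cancel and we are left with exactly \eqref{eq:von-Neumann-certificate}.

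The only delicate point is verifying the hypotheses of \cref{lem:smoothing}, in particular that the padded profile has a zero entry and a sufficiently large top entry. Both follow from the explicit counts above, so I expect no genuine obstacle. One should also confirm that \eqref{eq:Bell-von-Neumann} is correct: it is a direct evaluation of $-\sum v\log v$ over the three entry types.
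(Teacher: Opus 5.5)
Your proposal is correct and follows the paper's proof essentially step for step. The one-copy term comes from the purity part of \cref{prop:one-copy}, and the two-copy term from \cref{prop:exact-Bell-profile}, \cref{lem:Bell-stability}, and the von Neumann case of \cref{lem:smoothing} with $b=K^2$ and $x$ the rearranged $v_{\gamma,K}$, combined with \eqref{eq:Bell-von-Neumann}. The only loose phrase is ``take $B$ arbitrary''. Strictly, the second inequality of \eqref{eq:compression-interface} holds with $B=K$, since $\|L_z\|\le\sqrt K\|z\|_2$ and $\|F\|\le1$; as you note, it plays no role in the purity bound.
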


\begin{proof}
We estimate the one-copy and two-copy terms separately. By the purity part of \cref{prop:one-copy}, the first inequality in
\eqref{eq:von-Neumann-interface} gives
\begin{equation}\label{eq:von-Neumann-one-copy}
  H_{\min,1}(\Phi_F)
  \ge
  \log K
  -
  \log\left(1+\frac{C^2}{K}\right).
\end{equation}

For the two-copy term, let
\[
  Z_0=\Phi_0^{\otimes2}(\Omega_n),
  \qquad
  Z_F=\Phi_F^{\otimes2}(\Omega_n).
\]
Since $\pi$ satisfies \eqref{eq:Bell-moment-interface},
\eqref{eq:exact-Bell} with $r=1$ gives
\[
  \lambda(Z_0)=v_{\gamma,K}.
\]
Moreover, by \cref{lem:Bell-stability} and the second inequality in
\eqref{eq:von-Neumann-interface},
\begin{equation}\label{eq:von-Neumann-Bell-distance}
  \frac12\|Z_F-Z_0\|_1
  \le
  (K+1)\tau_n(I_n-F^2)
  \le
  \delta.
\end{equation}

Let $x$ be the decreasing rearrangement of $v_{\gamma,K}$, viewed as
a probability vector of length $K^2$.  Its largest entry is
\[
  x_1=\frac1K\ge\frac2{K^2},
\]
and, since $M_{\gamma,K}<K^2$, its last entry satisfies
$x_{K^2}=0$.  Also $\delta<K^{-2}$.  Hence
\cref{lem:smoothing} applies with
\[
  b=K^2,
  \qquad
  \rho=Z_F,
  \qquad
  \sigma=Z_0,
  \qquad
  \varepsilon=b\delta=K^2\delta.
\]
Using \eqref{eq:von-Neumann-perturbation},
\begin{equation}\label{eq:von-Neumann-two-copy}
  H_1(Z_F)
  \le
  H_1(v_{\gamma,K})
  +
  2\varepsilon\log K
  +
  h_{\mathrm b}(\varepsilon).
\end{equation}
The exact one-block Bell entropy is
\begin{equation*}
  H_1(v_{\gamma,K})
  =
  2\log K
  -
  \frac{\log K}{K}
  -
  \frac{\gamma-1}{\gamma}\log2
\end{equation*}
by \eqref{eq:Bell-von-Neumann}.  Combining this with
\eqref{eq:von-Neumann-one-copy} and
\eqref{eq:von-Neumann-two-copy} in
\[
  \Delta_1(\Phi_F;\Omega_n)
  =
  2H_{\min,1}(\Phi_F)-H_1(Z_F)
\]
gives \eqref{eq:von-Neumann-certificate}.
\end{proof}

\section{A high-girth finite realization}\label{sec:finite}

We construct the branch matrices needed in \Cref{thm:main,cor:von-neumann} and verify their finite-moment identities. Throughout, $\mathbb K$ is the ordered field of $256$ elements fixed in Section~\ref{sec:fields}. The polynomial degree parameter $m$ will be specified in Section~\ref{sec:filter}.

\subsection{An explicitly indexed coordinate fibre}
For an integer $\ell\ge2$, specialize the coordinate graph and its zero fibre from Section~\ref{sec:graph-background} to
\begin{equation}\label{eq:graph-parameters}
 \nu=4\ell-3,\qquad b=\ell-1,\qquad
 \mathcal V_\ell
 =\{v\in V(D(4\ell-3,256)):a_j(v)=0\ (2\le j\le\ell-1)\}.
\end{equation}
For $\ell=2$, the list of constraints is empty, and hence $\mathcal V_2=V(D(5,256))$. Let $\Gamma_\ell$ be the induced graph on $\mathcal V_\ell$. We order the vertices canonically by placing all point vertices before all line vertices and, within each class, ordering the tuples of free coordinates lexicographically according to the coordinate order \eqref{eq:graph-coordinates} and the field order
$c_0<\cdots<c_{255}$.

\begin{proposition}[Size, girth, and prescribed indexing]
\label{prop:fibre}
For every integer $\ell\ge2$, let $\Gamma_\ell$ be the subgraph of
$D(4\ell-3,256)$ induced by $\mathcal V_\ell$.  Then, for every
$v\in\mathcal V_\ell$ and every colour $c\in\mathbb F_{256}$, the
unique colour-$c$ neighbour of $v$ in $D(4\ell-3,256)$ also belongs
to $\mathcal V_\ell$.  Consequently, $\Gamma_\ell$ is
$256$-regular, and every vertex of $\Gamma_\ell$ is incident to
exactly one edge of each of the $256$ colours.

Moreover,
\[
  \operatorname{girth}\Gamma_\ell\ge4\ell+2,
\]
and its vertex set admits the explicit indexing
\begin{equation}\label{eq:leg-dimension}
  \mathcal V_\ell
  \cong
  \{\mathrm P,\mathrm L\}\times\mathbb K^{3\ell-1},
  \qquad
  q_\ell:=|\mathcal V_\ell|
  =
  2\cdot256^{3\ell-1}
  =
  2^{24\ell-7}.
\end{equation}
\end{proposition}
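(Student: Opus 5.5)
The plan is to specialize the general fibre discussion of Section~\ref{sec:graph-background} to $\nu=4\ell-3$, $Q=256$, and the zero fibre $\xi=0$. First check that the number of invariants matches: with $\nu=4\ell-3$,
\[
  \left\lfloor\frac{\nu+2}{4}\right\rfloor
  =\left\lfloor\frac{4\ell-1}{4}\right\rfloor
  =\ell-1=b .
\]
So the constraints $a_j=0$ for $2\le j\le\ell-1$ in \eqref{eq:graph-parameters} are exactly the invariants covered by \eqref{eq:LUW-invariance} in Proposition~\ref{prop:LUW}.

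\emph{Closure under coloured neighbours.} For closure, take $v\in\mathcal V_\ell$ and a colour $c$. By \eqref{eq:edge-colour} and the successive solvability of \eqref{eq:incidence}, $v$ has a unique colour-$c$ neighbour $v'$ in $D(4\ell-3,256)$. Since $v\sim v'$, \eqref{eq:LUW-invariance} gives $a_j(v')=a_j(v)=0$, so $v'\in\mathcal V_\ell$. Hence every vertex of the induced graph $\Gamma_\ell$ keeps all of its $256$ coloured edges, one of each colour. This gives $256$-regularity and the colour property. For $\ell=2$ there are no constraints and the statement is trivial.

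\emph{Girth.} Since $\nu=4\ell-3\ge5$ is odd and at least $3$, \eqref{eq:LUW-girth} gives
\[
  \operatorname{girth}D(4\ell-3,256)\ge4\ell+2 .
\]
A cycle in an induced subgraph is also a cycle of the ambient graph, so the same bound holds for $\Gamma_\ell$.

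\emph{Indexing and size.} By \eqref{eq:general-fibre-recursion}, the coordinates $u'_{jj}$ for $2\le j\le\ell-1$ are determined recursively from the others. This needs the index bookkeeping to work. The coordinate list through level $i$ has $4i-4$ entries, so $\nu=4\ell-3$ retains every coordinate through level $\ell-1$ together with $u_{\ell\ell}$. The right side of \eqref{eq:general-fibre-recursion} for $j\le\ell-1$ involves $u_{ii}$ and $u'_{ii}$ with $i\le j-1$, plus $u_{jj}$ and $u_{i,i+1}$, $u_{j-i,j-i-1}$ with indices at most $j$. With the boundary conventions \eqref{eq:aux-coordinates}, all of these are retained coordinates. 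The $u'$ terms have strictly smaller index, so the recursion determines $u'_{jj}$ in increasing order of $j$. The map from the remaining $\nu-(\ell-2)=3\ell-1$ free coordinates to $\mathcal V_\ell$ is then a bijection within each class: it is injective because those coordinates are kept unchanged, and surjective because every such tuple extends uniquely. This gives
\[
  \mathcal V_\ell\cong\{\mathrm P,\mathrm L\}\times\mathbb K^{3\ell-1},
  \qquad
  q_\ell=2\cdot256^{3\ell-1}=2^{1+8(3\ell-1)}=2^{24\ell-7}.
\]

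\emph{Main obstacle.} The only delicate step is this bookkeeping: checking that the recursion for $j\le\ell-1$ never refers to a coordinate outside the first $\nu$, in particular $u_{j,j+1}$, $u_{j+1,j}$ and $u'_{\ell\ell}$. The check reduces to the convention that the $i=j$ second product in \eqref{eq:graph-invariants} is zero, together with $u_{\ell\ell}$ being the last retained coordinate.
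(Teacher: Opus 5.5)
Your proposal is correct and follows the same route as the paper's proof. You use the LUW invariance \eqref{eq:LUW-invariance} for closure under coloured neighbours, and the bound \eqref{eq:LUW-girth} at the odd value $\nu=4\ell-3$ together with passage to an induced subgraph for the girth. You then count $3\ell-1$ free coordinates per class via the recursion \eqref{eq:general-fibre-recursion}, and your added index bookkeeping is a welcome elaboration.

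One small arithmetic slip: the coordinate list through level $i$ has $4i$ entries, not $4i-4$ (e.g.\ $D(8,Q)$ ends exactly at level $2$). With the correct count, your conclusion still holds: $\nu=4\ell-3$ retains every coordinate through level $\ell-1$ together with $u_{\ell\ell}$.
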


\begin{proof}
Here
\[
  \left\lfloor\frac{\nu+2}{4}\right\rfloor
  =
  \ell-1.
\]
If $v\in\mathcal V_\ell$ and $w\sim v$ in
$D(4\ell-3,256)$, then \eqref{eq:adjacency-invariance} gives
\[
  a_j(w)=a_j(v)=0,
  \qquad
  2\le j\le\ell-1.
\]
Hence $w\in\mathcal V_\ell$.  Since every vertex of
$D(4\ell-3,256)$ has a unique neighbour of each colour
$c\in\mathbb F_{256}$, the same is true in the induced graph
$\Gamma_\ell$.  Thus $\Gamma_\ell$ is $256$-regular.

By \eqref{eq:general-fibre-recursion}--\eqref{eq:general-fibre-count},
the constraints leave
\[
  \nu-(\ell-1)+1=3\ell-1
\]
free coordinates in each of the point and line classes, while all
remaining coordinates are determined successively.  Hence
\[
  \mathcal V_\ell
  \cong
  \{\mathrm P,\mathrm L\}\times\mathbb K^{3\ell-1},
\]
which gives \eqref{eq:leg-dimension}.  Under this identification, the
vertex ordering fixed above is obtained by placing $\mathrm P$ before
$\mathrm L$ and ordering the free-coordinate tuples lexicographically
with respect to the coordinate order \eqref{eq:graph-coordinates} and
the field order $c_0<\cdots<c_{255}$.  Thus the indexing is completely
prescribed.

Finally, every cycle in $\Gamma_\ell$ is also a cycle in
$D(4\ell-3,256)$, so
\[
  \operatorname{girth}\Gamma_\ell
  \ge
  \operatorname{girth}D(4\ell-3,256).
\]
By \cref{prop:LUW},
\[
  \operatorname{girth}D(4\ell-3,256)
  \ge
  (4\ell-3)+5
  =
  4\ell+2.
\]
\end{proof}

Pair the colours as $(c_0,c_1),\ldots,(c_{254},c_{255})$. Since the fibre is invariant under the paired-colour permutations in \eqref{eq:general-paired-rule}, their restrictions define
\begin{equation}\label{eq:paired-permutations}
 P_i^{(\ell)}e_v=e_{u_i(v)},\qquad
 1\le i\le128,\quad v\in\mathcal V_\ell.
\end{equation}
The neighbour rules \eqref{eq:incidence} and \eqref{eq:edge-colour}, together with the fibre recursion, specify every matrix entry. For $w\in\mathbb F_{128}$, write $P_w^{(\ell)}$ for its evaluation at these matrices, as in \eqref{eq:word-walk-convention}.

\begin{lemma}[Raw short-word traces]\label{lem:graph-trace}
For every integer $\ell\ge2$ and every $w\in\mathbb F_{128}$ with $|w|\le4\ell$,
\begin{equation}\label{eq:raw-traces}
 \tau_{q_\ell}(P_w^{(\ell)})=\mathbf1_{\{w=e\}}.
\end{equation}
\end{lemma}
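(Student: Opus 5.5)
The plan is to reduce to the argument of \cref{lem:coloured-short-traces}, applied to the induced graph $\Gamma_\ell$ rather than to the full coordinate graph. The girth input comes from \cref{prop:fibre}, which gives $\operatorname{girth}\Gamma_\ell\ge4\ell+2>4\ell$. \Cref{lem:coloured-short-traces} is stated for $D(\nu,Q)$ itself, so I will not cite it directly. Instead I will repeat its short proof with $V(D(\nu,Q))$ replaced by $\mathcal V_\ell$ and $L=4\ell$. The argument only uses three facts: every vertex has exactly one edge of each colour, the paired-colour maps $u_i$ are permutations, and cycles are shorter than the girth. All three hold in $\Gamma_\ell$ by \cref{prop:fibre}.

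The first step is the case $w=e$. Here $P_e^{(\ell)}=I_{q_\ell}$, so the normalized trace is $1$.

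The second step handles $w\ne e$. Write $w$ in reduced form $x_s\cdots x_1$ with $1\le s\le4\ell$. Fix $v_0\in\mathcal V_\ell$ and set $v_j=u_{i_j}^{\varepsilon_j}(v_{j-1})$. Each $u_i^{\pm1}$ acts by a colour matching $J_{\alpha_i}$ or $J_{\beta_i}$, chosen according to the class of the current vertex, as in \eqref{eq:general-paired-rule}--\eqref{eq:general-paired-inverse}. So each step traverses exactly one edge of $\Gamma_\ell$, and $v_j$ stays in $\mathcal V_\ell$ by fibre invariance.

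The main point to check carefully is that the walk is nonbacktracking. Since the graph is bipartite, consecutive steps alternate between $\mathcal P$ and $\mathcal L$. Suppose step $j$ goes from $\mathcal P$ to $\mathcal L$ using the letter $g_i$ along colour $\alpha_i$. The only letter that returns along the same colour-$\alpha_i$ edge from the line side is $g_i^{-1}$, which uses $J_{\alpha_i}$ on $\mathcal L$. The letter $g_i$ would instead use $J_{\beta_i}$, and the letters $g_{i'}^{\pm1}$ with $i'\ne i$ use other colours. The same reasoning applies when starting from $\mathcal L$ and for the inverse letters.

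Because the colours at a vertex are distinct, two different edge colours give two different neighbours. Hence $v_{j+1}=v_{j-1}$ forces the same edge and the same colour, and therefore inverse letters, which a reduced word excludes. Distinctness of colours also uses that $\alpha_i\ne\beta_i$, since $c_{2i-2}\ne c_{2i-1}$, and that distinct pairs are disjoint.

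The third step is the counting. Suppose $P_w^{(\ell)}e_{v_0}=e_{v_0}$. Then the walk is a closed nonbacktracking walk of length $s\ge1$. Since the graph is simple, the walk has length at least $2$. Take the first index at which a vertex repeats. The segment between the two occurrences is a cycle: it has length at least $3$ because the walk is nonbacktracking, and length at most $s\le4\ell<4\ell+2$. This contradicts the girth bound.

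So $P_w^{(\ell)}$ is a permutation matrix with no fixed basis vectors, its trace is $0$, and \eqref{eq:raw-traces} follows.
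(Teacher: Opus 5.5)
Your proof is correct and is essentially the paper's argument. The paper applies \cref{lem:coloured-short-traces} to the full graph $D(4\ell-3,256)$, whose girth is at least $4\ell+2$ by \cref{prop:LUW}, and then restricts the resulting fixed-point-free permutation to the invariant fibre $\mathcal V_\ell$; you instead rerun the same nonbacktracking-walk argument directly on $\Gamma_\ell$ with the girth bound of \cref{prop:fibre}, spelling out the nonbacktracking check that the paper only asserts before \cref{lem:coloured-short-traces}.
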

\begin{proof}
Apply \cref{lem:coloured-short-traces} to the full graph $D(4\ell-3,256)$. For $w\ne e$ in the stated range, the resulting permutation has no fixed vertex. Its restriction to the invariant set $\mathcal V_\ell$ therefore also has trace zero. For $w=e$, the restricted matrix is $I_{q_\ell}$.
\end{proof}

\subsection{The two generator alphabets}

The main construction needs $2048$ generators on each free factor, whereas \eqref{eq:paired-permutations} supplies $128$. We therefore obtain the required $2048$-generator family through a short free-subgroup encoding.  In the smaller von Neumann construction only $64$ generators are needed, so a subset of the raw generators can be used directly.

\begin{lemma}[A length-three free encoding]
\label{lem:free-encoding}
Let
\[
  A=\mathbb F_{64}
   =\langle a_1,\ldots,a_{64}\rangle,
  \qquad
  B=\mathbb F_{32}
   =\langle b_1,\ldots,b_{32}\rangle,
\]
and regard
\[
  A*B
  =
  \langle
    a_1,\ldots,a_{64},
    b_1,\ldots,b_{32}
  \rangle
  \cong\mathbb F_{96}.
\]
For
\[
  1\le i\le64,
  \qquad
  1\le j\le32,
\]
define
\begin{equation}\label{eq:abstract-encoding}
  c_{ij}:=a_i b_j a_i^{-1}.
\end{equation}
Then the $2048$ elements $c_{ij}$ freely generate a subgroup of
$A*B$ isomorphic to $\mathbb F_{2048}$.  Moreover, every reduced word
of length $L$ in the letters $c_{ij}^{\pm1}$ expands to a word of
reduced length at most $3L$ in the free basis
\[
  \{a_1,\ldots,a_{64},b_1,\ldots,b_{32}\}.
\]
\end{lemma}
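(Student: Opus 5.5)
The plan is to prove freeness by a direct cancellation (ping-pong style) argument on reduced words in $A*B$. The length bound is immediate once the expansion is written down.

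First, the length bound. A word $c_{i_1j_1}^{\varepsilon_1}\cdots c_{i_Lj_L}^{\varepsilon_L}$ expands to
\[
  a_{i_1}b_{j_1}^{\varepsilon_1}a_{i_1}^{-1}\,a_{i_2}b_{j_2}^{\varepsilon_2}a_{i_2}^{-1}\cdots a_{i_L}b_{j_L}^{\varepsilon_L}a_{i_L}^{-1},
\]
since $c_{ij}^{-1}=a_ib_j^{-1}a_i^{-1}$. This has $3L$ letters, and free reduction only shortens it. Hence the reduced length is at most $3L$.

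For freeness, I will determine exactly which cancellations can occur. Take a reduced word in the $c_{ij}^{\pm1}$ with $L\ge1$. The only possible cancellations in the expansion occur at the junctions $a_{i_t}^{-1}a_{i_{t+1}}$, and these happen exactly when $i_t=i_{t+1}$. The $b$-letters never cancel, because between two consecutive $b$-letters there are only $a$-letters, and these cancel completely only when $i_t=i_{t+1}$.

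When $i_t=i_{t+1}$, the two adjacent $b$-letters $b_{j_t}^{\varepsilon_t}$ and $b_{j_{t+1}}^{\varepsilon_{t+1}}$ become adjacent, so I must check that they do not cancel. If $j_t=j_{t+1}$ and $\varepsilon_{t+1}=-\varepsilon_t$, then $c_{i_tj_t}^{\varepsilon_t}c_{i_tj_t}^{-\varepsilon_t}$ would be an inverse pair, contradicting reducedness of the word in the $c$-letters. Otherwise they form either a different letter or a positive power $b^{\pm2}$, and in neither case do they cancel.

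Thus the reduced form, in the basis $\{a_1,\ldots,a_{64},b_1,\ldots,b_{32}\}$, retains all $L$ letters $b_{j_t}^{\varepsilon_t}$. It also retains the leading $a_{i_1}$ and the trailing $a_{i_L}^{-1}$. In particular it is nonempty, so no nontrivial reduced word in the $c_{ij}$ equals $e$. The same observation gives injectivity on generators: the reduced form of $c_{ij}$ is $a_ib_ja_i^{-1}$, so distinct pairs $(i,j)$ give distinct elements. It follows that the $2048$ elements $c_{ij}$ freely generate a subgroup isomorphic to $\mathbb F_{2048}$, because the natural homomorphism $\mathbb F_{2048}\to A*B$ has trivial kernel.

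The only delicate step is the junction analysis, which must show that cancellation is confined to the $a$-letters and cannot propagate across a surviving $b$-letter. I would handle this by induction on $L$. The induction hypothesis is that the reduced form of a prefix ends in $b_{j_t}^{\varepsilon_t}a_{i_t}^{-1}$. Appending the next $c$-letter then either cancels only that final $a_{i_t}^{-1}$, which happens when $i_{t+1}=i_t$, or cancels nothing. In both cases the invariant is preserved, since the appended block again ends in $b_{j_{t+1}}^{\varepsilon_{t+1}}a_{i_{t+1}}^{-1}$.
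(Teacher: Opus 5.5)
Your proof is correct and follows essentially the same route as the paper. The paper groups consecutive letters with equal first index into maximal runs $a_i(b_{j_1}^{\varepsilon_1}\cdots b_{j_t}^{\varepsilon_t})a_i^{-1}$ and invokes the normal form in $A*B$. You carry out the same cancellation analysis letter by letter in $\mathbb F_{96}$: only the junctions $a_{i_t}^{-1}a_{i_{t+1}}$ with $i_t=i_{t+1}$ cancel, and the resulting word is reduced and nonempty. The length bound $3L$ is obtained the same way in both.
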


\begin{proof}
Let
\[
  w=c_{i_1j_1}^{\varepsilon_1}\cdots
    c_{i_Lj_L}^{\varepsilon_L},
  \qquad
  \varepsilon_\nu\in\{\pm1\},
\]
be a nonempty reduced word in the $c_{ij}$.
Group consecutive letters having the same first index $i$ into
maximal runs.  A run with first index $i$ has the form
\[
  c_{ij_1}^{\varepsilon_1}\cdots
  c_{ij_t}^{\varepsilon_t}
  =
  a_i
  \bigl(
    b_{j_1}^{\varepsilon_1}\cdots
    b_{j_t}^{\varepsilon_t}
  \bigr)
  a_i^{-1}.
\]
Because the original word is reduced, the word between parentheses is
a nonempty reduced word in the free basis of $B$. Between two consecutive maximal runs the first indices are distinct.
Hence the intervening factor
\[
  a_i^{-1}a_{i'}
  \qquad(i\neq i')
\]
is a nontrivial reduced word in $A$.  The expansion of $w$ is
therefore a nonempty reduced alternating product of nontrivial
elements of $A$ and $B$.  By the normal form in the free product
$A*B$, it cannot represent the identity.  Thus the $c_{ij}$ freely
generate a free subgroup of rank
\[
  64\cdot32=2048.
\]

Finally, each letter $c_{ij}^{\pm1}$ is represented by three letters in the original free basis.  Hence a word of length $L$ expands to a word of length at most $3L$.
\end{proof}

We now implement this encoding using the $128$ permutation matrices
from \eqref{eq:paired-permutations}.  Fix an integer $m\ge2$ and, for
the main construction, take
\[
  \ell=3m,
  \qquad
  q:=q_{3m}.
\]
Using the ordering of the paired-colour permutations induced by the
field order $c_0,\ldots,c_{255}$ fixed in Section~\ref{sec:fields},
identify the abstract generators above with the first $96$ raw
generators by
\[
  a_i\longleftrightarrow P_i^{(3m)},
  \qquad 1\le i\le64,
\]
and
\[
  b_j\longleftrightarrow P_{64+j}^{(3m)},
  \qquad 1\le j\le32.
\]
Accordingly, define the $2048$ encoded permutation matrices
\begin{equation}\label{eq:encoded-permutations}
  U_{32(i-1)+j}
  :=
  P_i^{(3m)}
  P_{64+j}^{(3m)}
  \bigl(P_i^{(3m)}\bigr)^{-1},
  \qquad
  1\le i\le64,\quad
  1\le j\le32.
\end{equation}
These are the matrices that represent the $2048$ generators required
on each free factor in the main construction.

To verify the required short-word behaviour, let
$w\in\mathbb F_{2048}$ be represented by a reduced word of length at
most $4m$ in the encoded generators.  By
\cref{lem:free-encoding}, if $w\neq e$, its expansion in the raw free
basis is still nontrivial and has length at most
\[
  3|w|\le12m.
\]
Since $\ell=3m$, \cref{lem:graph-trace} applies up to raw word length
\[
  4\ell=12m.
\]
Therefore
\begin{equation}\label{eq:encoded-traces}
  \tau_q(U_w)
  =
  \mathbf1_{\{w=e\text{ in }\mathbb F_{2048}\}},
  \qquad
  |w|\le4m.
\end{equation}
Thus the finite family
\[
  U_1,\ldots,U_{2048}\in O(q)
\]
has exactly the regular free-group traces required below throughout
the word-length range $4m$.

For the smaller von Neumann construction only $64$ generators are
needed on each free factor, so no encoding is required.  We use the
first $64$ raw generators directly:
\begin{equation}\label{eq:direct-alphabet}
  \widetilde U_j
  :=
  P_j^{(m)}
  \in O(q_m),
  \qquad
  1\le j\le64.
\end{equation}
Applying \cref{lem:graph-trace} with $\ell=m$ and restricting the raw
alphabet to these generators gives
\begin{equation}\label{eq:direct-traces}
  \tau_{q_m}(\widetilde U_w)
  =
  \mathbf1_{\{w=e\text{ in }\mathbb F_{64}\}},
  \qquad
  |w|\le4m.
\end{equation}

\subsection{Tensor-leg representations and the moment interface}

The preceding subsection provides a finite-dimensional real
permutation-matrix realization for each individual free factor:
the main construction uses
\[
  U_1,\ldots,U_{2048}\in O(q),
  \qquad q=q_{3m},
\]
with the short-word trace identity \eqref{eq:encoded-traces}, while
the smaller von Neumann construction uses
\[
  \widetilde U_1,\ldots,\widetilde U_{64}\in O(q_m)
\]
with \eqref{eq:direct-traces}.  We now place copies of these
representations on separate tensor legs to realize the full
direct-product group $G$ in \eqref{eq:product-model}.  The tensor
structure makes the commutation between distinct free factors exact.

For the main construction, recall from \eqref{eq:base-parameters} that
\[
  \gamma=8,
  \qquad
  k=2048,
  \qquad
  G
  =
  \prod_{\beta=1}^r
  \prod_{\alpha=1}^{8}
  \mathbb F_{2048}^{(\alpha,\beta)}.
\]
Since $\mathbb F_{2048}$ is free, the assignment
\[
  g_j\longmapsto U_j,
  \qquad
  1\le j\le2048,
\]
extends uniquely to an orthogonal representation
\[
  \rho_m:\mathbb F_{2048}\longrightarrow O(q).
\]
For each block $\beta$, set
\[
  \mathcal H_\beta
  :=
  \bigotimes_{\alpha=1}^{8}\C^q
\]
and define
\begin{equation}\label{eq:block-representation}
  \pi_\beta:
  \prod_{\alpha=1}^{8}
  \mathbb F_{2048}^{(\alpha,\beta)}
  \longrightarrow
  O(q^8),
  \qquad
  \pi_\beta\bigl((x_{\alpha,\beta})_{\alpha=1}^{8}\bigr)
  :=
  \bigotimes_{\alpha=1}^{8}\rho_m(x_{\alpha,\beta}).
\end{equation}
Thus the $\alpha$th free factor in block $\beta$ acts only on the
$\alpha$th tensor leg of $\mathcal H_\beta$.  In particular, the
actions of distinct free factors commute exactly.

The full input space is
\[
  \mathcal H
  :=
  \bigotimes_{\beta=1}^r\mathcal H_\beta
  =
  \bigotimes_{\beta=1}^r
  \bigotimes_{\alpha=1}^{8}\C^q,
\]
and the block representations combine to
\begin{equation}\label{eq:finite-product-representation}
  \pi:G\longrightarrow O(n),
  \qquad
  \pi\bigl((x_{\alpha,\beta})_{\alpha,\beta}\bigr)
  :=
  \bigotimes_{\beta=1}^r
  \bigotimes_{\alpha=1}^{8}
  \rho_m(x_{\alpha,\beta}),
  \qquad
  n=q^{8r}=2^{8r(72m-7)}.
\end{equation}

We order the tensor legs first by the block index $\beta$ and then by
the factor index $\alpha$.  On each leg we use the vertex ordering
fixed in \cref{prop:fibre}.  For the branch labels we retain the order from Section~\ref{sec:free}: $\mathcal S$ is ordered
lexicographically by $(\alpha,j,\varepsilon)$, with $+1$ before
$-1$, and $\mathcal W=\mathcal S^r$ carries the induced
lexicographic order.  These conventions fix the input and output
matrix indices completely.

Now let
\[
  s=(s_1,\ldots,s_r)\in\mathcal W,
  \qquad
  s_\beta=g_{j_\beta,\alpha_\beta}^{\varepsilon_\beta}.
\]
The group element $g_s$ defined in
\eqref{eq:branch-group-element} has exactly one nontrivial factor in
each block.  Its finite-dimensional representative is therefore
\begin{equation}\label{eq:finite-branches}
  \begin{aligned}
  W_s
  :=
  \pi(g_s)
  &=
  \bigotimes_{\beta=1}^r
  \bigotimes_{\alpha=1}^{8}
  B_{\alpha,\beta}(s),
  \\
  B_{\alpha,\beta}(s)
  &=
  \begin{cases}
    U_{j_\beta}^{\varepsilon_\beta},
      & \alpha=\alpha_\beta,\\
    I_q,
      & \alpha\neq\alpha_\beta.
  \end{cases}
  \end{aligned}
\end{equation}
These are the branch matrices entering the channel
$\Phi_F$ in \eqref{eq:channel}.

For the smaller von Neumann construction, take
\[
  r=1,
  \qquad
  \gamma=2,
  \qquad
  k=64.
\]
The assignment
\[
  g_j\longmapsto\widetilde U_j,
  \qquad 1\le j\le64,
\]
defines
\[
  \widetilde\rho_m:
  \mathbb F_{64}\longrightarrow O(q_m).
\]
Hence
\[
  \widetilde\pi:
  \mathbb F_{64}^{(1)}
  \times
  \mathbb F_{64}^{(2)}
  \longrightarrow
  O(q_m^2),
  \qquad
  \widetilde\pi(x_1,x_2)
  =
  \widetilde\rho_m(x_1)
  \otimes
  \widetilde\rho_m(x_2).
\]
The corresponding $K=256$ branch matrices are
\begin{equation}\label{eq:small-branches}
  \begin{aligned}
  \widetilde W_{(1,j,\varepsilon)}
  &=
  \widetilde U_j^\varepsilon\otimes I_{q_m},
  \\
  \widetilde W_{(2,j,\varepsilon)}
  &=
  I_{q_m}\otimes\widetilde U_j^\varepsilon,
  \qquad
  1\le j\le64,\quad
  \varepsilon\in\{+1,-1\}.
  \end{aligned}
\end{equation}
Its input dimension is
\[
  q_m^2=2^{48m-14}.
\]

\begin{proposition}[Exact finite moments]
\label{prop:finite-moments}
Let $m\ge2$, and let
\[
  \pi:G\to O(n)
\]
be either of the finite-dimensional representations constructed
above.  Suppose
\[
  x=(x_{\alpha,\beta})\in G
\]
satisfies
\[
  |x_{\alpha,\beta}|\le4m
  \qquad
  \text{for every tensor leg }(\alpha,\beta).
\]
Then
\begin{equation}\label{eq:product-trace}
  \tau_n(\pi(x))
  =
  \tau_G(\lambda_G(x))
  =
  \mathbf1_{\{x=e\}}.
\end{equation}
In particular, the four-letter condition
\eqref{eq:Bell-moment-interface} holds.

Moreover, for every $A\in\mathcal A_D$ and $z\in\C^D$, the finite
tests in \eqref{eq:finite-tests} satisfy
\begin{equation}\label{eq:linear-moment-transfer}
  \begin{aligned}
  \tau_n(T_A^j)
  &=
  \tau_G\bigl((T_A^{\mathrm f})^j\bigr),
  &&0\le j\le2m,
  \\
  \tau_n\bigl((L_z^*L_z)^m\bigr)
  &=
  \tau_G\!\left(
    \bigl((L_z^{\mathrm f})^*L_z^{\mathrm f}\bigr)^m
  \right).
  \end{aligned}
\end{equation}
\end{proposition}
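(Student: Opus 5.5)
The argument has three stages: factor the trace over tensor legs, apply the one-factor short-word identity on each leg, and then expand the moments of the test operators as finite sums of group elements. Every leg of that sum stays within the length range $4m$.

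\emph{Trace identity.} Since $\pi(x)=\bigotimes_{\alpha,\beta}\rho(x_{\alpha,\beta})$, with $\rho=\rho_m$ or $\widetilde\rho_m$, and normalized traces are multiplicative on tensor products, we get
\[
\tau_n(\pi(x))=\prod_{\alpha,\beta}\tau_{q'}\bigl(\rho(x_{\alpha,\beta})\bigr),
\]
where $q'=q$ or $q_m$. Each factor with $|x_{\alpha,\beta}|\le 4m$ equals $\mathbf 1_{\{x_{\alpha,\beta}=e\}}$. For the main construction this is \eqref{eq:encoded-traces}; for the von Neumann construction it is \eqref{eq:direct-traces}. Here $|x_{\alpha,\beta}|$ is the reduced length in the encoded generators, and $\rho(x_{\alpha,\beta})=U_{x_{\alpha,\beta}}$ by definition of $\rho$. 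The product is therefore $\mathbf 1_{\{x=e\}}$. The product formula for $\tau_G$ on $\bigotimes_j\lambda_{\mathbb F_k}(w_j)$ from Section~\ref{sec:haagerup} identifies this with $\tau_G(\lambda_G(x))$. Since $m\ge2$ gives $4\le 4m$, condition \eqref{eq:Bell-moment-interface} follows as a special case.

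\emph{Moment transfer.} Both $\pi$ and $\lambda_G$ are representations of $G$. Hence $T_A=\pi(\mathbf T)$ and $T_A^{\mathrm f}=\lambda_G(\mathbf T)$ for the same group-algebra element $\mathbf T=\sum_{s\ne t}A_{st}\,g_s^{-1}g_t\in\C[G]$, using $W_s^*=W_s^T=\pi(g_s^{-1})$ for real orthogonal $W_s$. Expanding the power gives
\[
\mathbf T^j=\sum A_{s_1t_1}\cdots A_{s_jt_j}\,g_{s_1}^{-1}g_{t_1}\cdots g_{s_j}^{-1}g_{t_j}.
\]
On each leg $(\alpha,\beta)$, every branch $g_s$ contributes a factor of length at most $1$. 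A product of $2j$ such factors therefore has leg length at most $2j\le 4m$ when $j\le 2m$. Apply the first part to each term and use linearity of $\tau_n\circ\pi$ and $\tau_G\circ\lambda_G$; this gives $\tau_n(T_A^j)=\tau_G((T_A^{\mathrm f})^j)$. The computation for $(L_z^*L_z)^m$ is the same. It expands into products $g_{s_1}^{-1}g_{t_1}\cdots g_{s_m}^{-1}g_{t_m}$ with coefficients $\bar z_{s_1}z_{t_1}\cdots$, so the leg lengths are at most $2m\le4m$.

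\emph{Main subtlety.} The one point needing care is that the word-length bound is stated per leg. A group element's length on a leg is at most the number of letters on that leg, and reduction can only shorten it, so \eqref{eq:encoded-traces} applies to the reduced form. For the main construction, lengths are measured in the encoded alphabet $\{U_j\}$, matching \eqref{eq:encoded-traces}. The factor-three expansion of each encoded letter was already absorbed there by the choice $\ell=3m$, so nothing further is needed. For the small construction, $\widetilde U_j$ are raw generators, and \eqref{eq:direct-traces} applies directly with $\ell=m$.
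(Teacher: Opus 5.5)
Your proof is correct and follows the paper's argument essentially step for step. You factor the normalized trace over the tensor legs, invoke \eqref{eq:encoded-traces} or \eqref{eq:direct-traces} on each leg, and then expand $T_A^j$ and $(L_z^*L_z)^m$ into monomials with at most $2j\le4m$ and $2m$ letters on each leg. Writing $T_A=\pi(\mathbf T)$ and $T_A^{\mathrm f}=\lambda_G(\mathbf T)$ for a common group-algebra element is only a more formal packaging of the paper's term-by-term application of \eqref{eq:product-trace}.
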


\begin{proof}
Consider first the main construction.  By
\eqref{eq:finite-product-representation}, normalized traces factor
over the tensor legs:
\[
  \tau_n(\pi(x))
  =
  \prod_{\beta=1}^r
  \prod_{\alpha=1}^{8}
  \tau_q\bigl(\rho_m(x_{\alpha,\beta})\bigr).
\]
Since $|x_{\alpha,\beta}|\le4m$ on every leg,
\eqref{eq:encoded-traces} gives
\[
  \tau_q\bigl(\rho_m(x_{\alpha,\beta})\bigr)
  =
  \mathbf1_{\{x_{\alpha,\beta}=e\}}.
\]
Therefore
\[
  \tau_n(\pi(x))
  =
  \prod_{\alpha,\beta}
  \mathbf1_{\{x_{\alpha,\beta}=e\}}
  =
  \mathbf1_{\{x=e\}},
\]
which is the regular trace of $G$.  The same argument for the smaller
construction uses \eqref{eq:direct-traces} on its two tensor legs.
This proves \eqref{eq:product-trace}.  Since $m\ge2$, the same identity
in particular holds whenever every leg has word length at most four,
and hence implies \eqref{eq:Bell-moment-interface}.

It remains to verify \eqref{eq:linear-moment-transfer}.  Each branch
matrix $W_s=\pi(g_s)$ contributes at most one generator on any fixed
tensor leg.  Consequently, a monomial in
\[
  T_A^j
  =
  \left(
    \sum_{s\neq t}A_{st}W_s^*W_t
  \right)^j
\]
contains at most $2j\le4m$ letters on each leg.  Applying
\eqref{eq:product-trace} term by term therefore gives
\[
  \tau_n(T_A^j)
  =
  \tau_G\bigl((T_A^{\mathrm f})^j\bigr),
  \qquad
  0\le j\le2m.
\]
Likewise, every monomial in $(L_z^*L_z)^m$ contains at most $2m$
letters on each leg, so the same argument yields the second identity
in \eqref{eq:linear-moment-transfer}.
\end{proof}

\section{The filter and its parameter estimates}\label{sec:filter}

We construct $F$ and verify the compression and average-loss conditions of \Cref{thm:certificate}. The final subsection verifies the single-penalty specialization required by \cref{prop:von-Neumann-certificate}.

\subsection{The complete parameter prescription}
\label{sec:filter-parameters}

We now fix the parameters used in the construction of
\Cref{thm:main}.  They are chosen so that the error requirements in
\Cref{thm:certificate} are satisfied while keeping the resulting
finite-dimensional realization explicit.

For $p_0>0$ and an integer $r\ge1$, use
\eqref{eq:base-parameters} and set
\begin{equation}\label{eq:parameter-list}
  \begin{aligned}
  D&=2^{15r},
  & d&=D(D-1),
  & \theta&=\min\{p_0,1/2\},
  \\
  s&=\max\left\{
       30r+16,\,
       \left\lceil\frac{r+16}{\theta}\right\rceil
      \right\},
  & \delta&=2^{-s},
  \\
  t&=s+15r+3,
  \\
  h&=\max\left\{
      \left\lceil\frac{4d+t+3}{16}\right\rceil,\,
      6(D-1)+t
     \right\},
  & m&=64h.
  \end{aligned}
\end{equation}

The choice of $s$ immediately gives the two error bounds required in
\eqref{eq:cost-interface}.  Indeed, since $D^2=2^{30r}$,
\begin{equation}\label{eq:error-budget}
  D^2\delta
  =
  2^{30r-s}
  \le
  2^{-16},
  \qquad
  \delta^\theta
  =
  2^{-s\theta}
  \le
  2^{-(r+16)}.
\end{equation}
The remaining quantity $t$ is chosen so that
\begin{equation}\label{eq:trace-budget-reduction}
  (D+1)2^{1-t}
  =
  \frac{D+1}{4D}\,2^{-s}
  <
  \delta.
\end{equation}
Thus it will be enough below to obtain a normalized trace bound of
order $2^{1-t}$.

We will also use the following two numerical margins.  By
\eqref{eq:free-bounds},
\[
  J_r^2=(K+968)^r-K^r,
\]
whereas the targets in \eqref{eq:compression-targets} satisfy
\begin{equation}\label{eq:target-ratios}
  \frac{C_r^2}{J_r^2}
  \ge
  \frac{1000}{968}
  =
  \frac{125}{121},
  \qquad
  \frac{B_r}{32^r}
  \ge
  \frac{33}{32}.
\end{equation}
For the first inequality, compare the positive-degree terms in the
two binomial expansions.  For the second,
\[
  34^r-1
  =
  33\sum_{j=0}^{r-1}34^j
  \ge
  33\cdot32^{r-1}.
\]

The elementary estimates
\begin{equation}\label{eq:growth-certificates}
  \left(\frac{10}{9}\right)^8>2,
  \qquad
  \left(\frac{2079}{2048}\right)^{64}>2
\end{equation}
will also be used below.  The first is immediate from
$10^8>2\cdot9^8$, while
\[
  \left(1+\frac{31}{2048}\right)^{64}
  >
  1+64\frac{31}{2048}
   +\binom{64}{2}\left(\frac{31}{2048}\right)^2
  >2.
\]

Finally, the parameter $m$ fixes the member of the high-girth family
constructed in Section~\ref{sec:finite}.  Since
\[
  q=q_{3m}=2^{72m-7},
\]
the resulting input dimension is
\[
  n=q^{8r}=2^{N(r,p_0)},
\]
where
\begin{equation}\label{eq:input-budget}
  N(r,p_0)
  =
  8r(72m-7)
  =
  36864rh-56r.
\end{equation}
To obtain the simpler bound stated in \Cref{thm:main}, note that
$4\mid D$ and
\[
  \left\lceil\frac{4d+t+3}{16}\right\rceil
  =
  \frac d4+
  \left\lceil\frac{t+3}{16}\right\rceil
  \le
  \frac d4+t,
  \qquad
  6(D-1)\le\frac d4.
\]
Hence
\[
  h\le\frac d4+t,
  \qquad
  N(r,p_0)
  <
  9216rD^2+36864rt.
\]
Moreover,
\[
  s\le30r+17+\frac{r+16}{\theta},
  \qquad
  t\le45r+20+\frac{r+16}{\theta},
\]
which, together with $D^2=2^{30r}$ and
$\theta=\min\{p_0,1/2\}$, gives
\eqref{eq:main-dimension-bound}.

\subsection{A real spherical Chebyshev penalty}

The first penalty enforces the first inequality in \eqref{eq:compression-interface} uniformly over $\mathcal A_D$. Give this real Hilbert space the ordered orthonormal basis
\begin{equation}\label{eq:coefficient-basis}
 \frac{E_{ij}+E_{ji}}{\sqrt2},\qquad
 \frac{\mathrm i(E_{ij}-E_{ji})}{\sqrt2},
 \qquad 1\le i<j\le D,
\end{equation}
ordered by $(i,j)$, with the first matrix before the second. Writing
\[
  A=\sum_{a=1}^d x_aA_a
\]
in this basis gives
\[
  \|A\|_{\HS}^2=\sum_{a=1}^d x_a^2.
\]
Hence the Hilbert--Schmidt unit sphere of $\mathcal A_D$ is
identified with $S_\R^{d-1}$.  We use $\sigma_d$ for the normalized
surface measure on this sphere, equivalently for its pushforward to
$\{A\in\mathcal A_D:\|A\|_{\HS}=1\}$ under
$x\mapsto\sum_a x_aA_a$.

\begin{lemma}[A spherical quadratic penalty]\label{lem:quadratic-penalty}
Let $X_1,\ldots,X_d\in M_n(\C)$ be Hermitian, with $d\ge3$, and put $T_x=\sum_a x_aX_a$. Let $m\ge1$ be an integer. Suppose $B,C>0$ and $u>1$ satisfy
\begin{equation}\label{eq:quadratic-cap-condition}
 \frac{127C}{128B}\ge\frac{u+u^{-1}}2,
 \qquad u^{2m}\ge4.
\end{equation}
Then the matrix
\[
 R_2=\alpha\int\Tch_m(T_x/B)^2\,d\sigma_d(x),
 \qquad \alpha=2^{4d+3}u^{-2m},
\]
is positive semidefinite and satisfies
\[
 \pm T_x\preceq C(I_n+R_2)\qquad(\|x\|_2=1).
\]
\end{lemma}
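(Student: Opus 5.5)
Positivity of $R_2$ is immediate: each $\Tch_m(T_x/B)^2$ is the square of a Hermitian matrix, so it is positive semidefinite, and averaging against $\sigma_d$ with $\alpha>0$ preserves this. For the operator inequality, fix a unit $x$ and a unit vector $\psi\in\C^n$. It suffices to prove
\[
  |\langle\psi,T_x\psi\rangle|\le C\bigl(1+\langle\psi,R_2\psi\rangle\bigr).
\]
Since $\|T_x\|$ may be large, I reduce to the spectral decomposition of $T_x$ and handle $\pm$ together via $|T_x|$. Write $\psi=\psi_{\le}+\psi_{>}$ using spectral projections of $T_x$ onto $\{|\lambda|\le C\}$ and $\{|\lambda|>C\}$. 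The low part contributes at most $C\|\psi_{\le}\|^2$. The high part must be paid for by $R_2$. So the goal becomes
\[
  \langle\psi_>,|T_x|\psi_>\rangle\le C\langle\psi,R_2\psi\rangle .
\]
Because $R_2$ is not a function of $T_x$, I will instead prove the stronger operator statement
\[
  R_2\succeq \frac{|T_x|}{C}\,\mathbf 1_{\{|T_x|>C\}}(T_x)
\]
in a form that can be compared spectrally.

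\textbf{Comparison via a spherical cap.} For $y$ close to $x$ on the sphere, write $T_y=\langle x,y\rangle T_x+T_{y^\perp}$. The integrand $\Tch_m(T_y/B)^2$ is not directly comparable to a function of $T_x$, because $T_{y^\perp}$ need not commute with $T_x$. To get around this I use a symmetrization. Pair $y=tx+w$ with its reflection $y'=tx-w$; the measure $\sigma_d$ is invariant under this reflection. The odd-in-$w$ terms then cancel in the even polynomial averaged over the pair, since $\Tch_m^2$ is a polynomial in $\Tch_{2m}$, namely $\Tch_m^2=(1+\Tch_{2m})/2$. This lets me try to bound
\[
  \tfrac12\bigl(\Tch_m(T_y/B)^2+\Tch_m(T_{y'}/B)^2\bigr)
\]
from below on eigenvectors of $T_x$ with $|\lambda|>C$.

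A cleaner alternative, which I would try first, is to argue directly with the vector $\psi$ rather than with operators. By convexity of $\Tch_m^2$ outside $[-1,1]$ and a Jensen/Cauchy--Schwarz-type argument (using that $\langle\psi,f(T)\psi\rangle\ge f(\langle\psi,T\psi\rangle)$ fails in general for non-convex $f$), I would replace $\Tch_m^2$ by a convex minorant. A natural choice is $g(\lambda)=\max\{0,\,(\lambda^2-\text{const})\}$-type growth, or the convex envelope of $\Tch_m^2$, which vanishes on $[-1,1]$ and grows like $\Tch_m$ outside it. Then
\[
  \langle\psi,\Tch_m(T_y/B)^2\psi\rangle\ge g\bigl(\langle\psi,T_y\psi\rangle/B\bigr).
\]
Suppose $\langle\psi,T_x\psi\rangle=\mu$ with $|\mu|>C$. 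For $y$ in the cap $\{\langle x,y\rangle\ge 1-\eta\}$, the scalar $\langle\psi,T_y\psi\rangle=t\mu+\langle\psi,T_w\psi\rangle$. Averaging over the reflection $w\mapsto -w$ and using convexity of $g$ gives at least $g(t\mu/B)$. With $t\ge127/128$ and $|\mu|>C$, we get $t|\mu|/B\ge(u+u^{-1})/2$. By \eqref{eq:chebyshev-growth} this yields $g\ge (u^m/2)^2\cdot(\text{linear factor in }|\mu|/C)$. Here I use that the convex envelope of $\Tch_m^2$ beyond $(u+u^{-1})/2$ dominates $\tfrac14u^{2m}\cdot\frac{|\lambda|}{(u+u^{-1})/2}$, which follows from $u^{2m}\ge4$ and growth of $\Tch_m$.

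\textbf{Measure of the cap.} By \eqref{eq:real-density} with $c_d\ge1/2$, the cap $\{t\ge127/128\}$ has measure at least roughly $\tfrac12\int_{127/128}^1(1-t^2)^{(d-3)/2}dt\ge 2^{-4d-3}$, since $1-t^2\ge 2^{-7}$ there. This is exactly what $\alpha=2^{4d+3}u^{-2m}$ is designed to cancel. Combining, $\langle\psi,R_2\psi\rangle\ge |\mu|/C-1$, which gives $|\mu|\le C(1+\langle\psi,R_2\psi\rangle)$. When $|\mu|\le C$ the bound is trivial.

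\textbf{Main obstacle.} The main obstacle is the convexity step. I must produce an explicit convex minorant of $\Tch_m^2$ on $\R$ that is nonnegative, so that Jensen applies both to the spectral measure of $\psi$ for $T_y$ and to the reflection average. This minorant must also be linear-growth-comparable to $u^{2m}|\lambda|$ beyond the threshold $(u+u^{-1})/2$. I would use the tangent-line construction: $g(\lambda)=\max\{0,\ \ell(|\lambda|)\}$, with $\ell$ the secant through $(1,0)$ and the point at $(u+u^{-1})/2$, checking that $\Tch_m^2$ lies above $\ell$ because $\Tch_m^2$ is increasing and convex on $[1,\infty)$. The factor $127/128$ and the constant in $\alpha$ leave the slack needed for this step.
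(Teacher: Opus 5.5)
Your scalar strategy is the paper's: Jensen's inequality for the spectral measure of $T_y$ in the state $\psi$, applied to an even convex minorant $g$ of $\Tch_m^2$, then a spherical-cap lower bound, then linear growth from convexity. Your reflection $tx+w\mapsto tx-w$ about a cap centred at $x$ is a legitimate substitute for the paper's rotational-invariance step. The paper instead sets $w_a=\langle\psi,X_a\psi\rangle$, bounds $\langle\psi,R_2\psi\rangle\ge\alpha\int g(\|w\|_2x_1)\,d\sigma_d(x)$, and finishes with $|\langle\psi,T_x\psi\rangle|\le\|w\|_2$.

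The gap is in the step you yourself call the main obstacle. Put $\lambda_0=(u+u^{-1})/2$. The secant through $(1,0)$ and $(\lambda_0,\Tch_m(\lambda_0)^2)$ is not a minorant of $\Tch_m^2$. A line through a point of a differentiable graph stays below the graph on both sides only if it is tangent there. Convexity of $\Tch_m^2$ on $[1,\infty)$ gives lower bounds by tangents, not by secants. For $m=1$, $u=2$ (admissible, since $u^{2m}=4$) the line is $\tfrac{25}{4}(\lambda-1)$, which exceeds $\lambda^2$ at $\lambda=3/2$. For $u=10/9$, $m=64$, any line through $(1,0)$ whose height at $\lambda_0$ is at least $u^{2m}/8$ exceeds $\Tch_m^2$ at the midpoint of $[1,\lambda_0]$ by a factor of more than ten. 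So the inequality $\langle\psi,\Tch_m(T_y/B)^2\psi\rangle\ge g(\langle\psi,T_y\psi\rangle/B)$ is not available for your $g$.

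No secant is needed. You can take $g=0$ on $[-1,1]$ and $g=\Tch_m^2-1$ outside (the paper's choice, in the variable $y/B$). Alternatively, take the true convex envelope, which equals $\Tch_m^2$ outside $(-r_1,r_1)$ and $0$ inside, where $r_1$ is the largest root of $\Tch_m$. Both are convex because $\Tch_m^2$ is increasing and convex beyond $r_1$. Since $g(0)=0$, convexity gives $g(s\lambda)\ge s\,g(\lambda)$ for $s\ge1$. For the paper's $g$ one has $g(\lambda_0)\ge u^{2m}/8$, which is where $u^{2m}\ge4$ enters; for the envelope, $g(\lambda_0)\ge u^{2m}/4$.

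Your constants also do not close as written. The bound $\sigma_d\{\langle x,y\rangle\ge127/128\}\ge2^{-4d-3}$ is true, but your stated reason ($1-t^2\ge2^{-7}$ on $[127/128,1]$) fails near $t=1$. Restricting to $[127/128,255/256]$, where $1-t^2>2^{-8}$, and using $c_d\ge1/2$ gives the stronger bound $2^{-4d+3}$.

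That extra factor is actually needed. With measure $2^{-4d-3}$ and $g\ge\tfrac14u^{2m}|\mu|/C$, your product is only $\langle\psi,R_2\psi\rangle\ge|\mu|/(4C)$. That yields $\pm T_x\preceq4C(I_n+R_2)$, not the claimed $|\mu|/C-1$. With $2^{-4d+3}$, one gets $\langle\psi,R_2\psi\rangle\ge8|\mu|/C$ whenever $|\mu|>C$, even with the paper's $g$, and the lemma follows.

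Finally, drop the opening operator inequality $R_2\succeq C^{-1}|T_x|\mathbf 1_{\{|T_x|>C\}}(T_x)$ and the claimed cancellation of odd terms in the pair average. Neither is justified when $T_x$ and $T_{y^\perp}$ do not commute, and the scalar argument does not use them.
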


\begin{proof}
Set
\[
  P(y):=\Tch_m(y/B)^2.
\]
By the properties of the Chebyshev polynomial recorded in
Section~\ref{sec:chebyshev-background}, $P$ is even and nonnegative,
$P(B)=1$, and all its roots lie in $(-B,B)$.  Hence $P$ is increasing
and convex on $[B,\infty)$.  Therefore
\[
  g(y):=
  \begin{cases}
    0,& |y|\le B,\\
    P(y)-1,& |y|>B
  \end{cases}
\]
is even, nonnegative, and convex on $\R$, with $P\ge g$.

We first calibrate the spherical average at the level $C$.  By
\eqref{eq:real-density},
\begin{equation}\label{eq:real-cap-mass}
  \sigma_d
  \left\{
    x:
    \frac{127}{128}
    \le |x_1|
    \le
    \frac{255}{256}
  \right\}
  \ge
  2^{-4d}.
\end{equation}
Indeed, the two intervals have total contribution at least
\[
  2\cdot\frac1{256}\cdot\frac12
  \left(\frac1{256}\right)^{(d-3)/2},
\]
since
\[
  1-\left(\frac{255}{256}\right)^2
  =
  \frac{511}{65536}
  >
  \frac1{256}.
\]
On this cap,
\[
  \frac{|Cx_1|}{B}
  \ge
  \frac{127C}{128B}
  \ge
  \frac{u+u^{-1}}2,
\]
so \eqref{eq:chebyshev-growth} gives
\[
  g(Cx_1)
  \ge
  \frac{u^{2m}+u^{-2m}-2}{4}
  \ge
  \frac{u^{2m}}8,
\]
where the last inequality uses $u^{2m}\ge4$.  Thus, with
\[
  \mathcal J(v)
  :=
  \int g(vx_1)\,d\sigma_d(x),
\]
we have
\begin{equation}\label{eq:J-calibration}
  \alpha\mathcal J(C)\ge1.
\end{equation}
Moreover, $\mathcal J$ is convex and $\mathcal J(0)=0$.

Now fix a unit vector $\psi\in\C^n$ and put
\[
  w_a:=\langle\psi,X_a\psi\rangle\in\R,
  \qquad
  v:=\left(\sum_{a=1}^d w_a^2\right)^{1/2}.
\]
For $x\in S_\R^{d-1}$,
\[
  \langle\psi,T_x\psi\rangle
  =
  \sum_{a=1}^d x_aw_a.
\]
Since $P\ge g$, scalar Jensen's inequality applied to the spectral
measure of $T_x$ gives
\[
  \langle\psi,P(T_x)\psi\rangle
  \ge
  g\!\left(\langle\psi,T_x\psi\rangle\right).
\]
Averaging over $x$ and using rotational invariance of $\sigma_d$,
\begin{equation}\label{eq:R2-J-bound}
  \langle\psi,R_2\psi\rangle
  \ge
  \alpha
  \int
  g\!\left(\sum_{a=1}^d x_aw_a\right)
  d\sigma_d(x)
  =
  \alpha\mathcal J(v).
\end{equation}

If $v>C$, convexity of $\mathcal J$ and $\mathcal J(0)=0$ imply
\[
  \mathcal J(v)
  \ge
  \frac{v}{C}\mathcal J(C).
\]
Combining this with \eqref{eq:J-calibration} and
\eqref{eq:R2-J-bound} yields
\[
  v
  \le
  C\langle\psi,R_2\psi\rangle.
\]
If $v\le C$, trivially
\[
  v
  \le
  C\bigl(1+\langle\psi,R_2\psi\rangle\bigr).
\]
Hence in all cases
\begin{equation}\label{eq:v-R2-bound}
  v
  \le
  C\bigl(1+\langle\psi,R_2\psi\rangle\bigr).
\end{equation}

Finally, for every unit $x\in\R^d$,
\[
  |\langle\psi,T_x\psi\rangle|
  \le
  \|x\|_2\,\|w\|_2
  =
  v.
\]
Together with \eqref{eq:v-R2-bound}, this gives
\[
  |\langle\psi,T_x\psi\rangle|
  \le
  C\langle\psi,(I_n+R_2)\psi\rangle.
\]
Since this holds for every unit $\psi$,
\[
  -C(I_n+R_2)
  \preceq
  T_x
  \preceq
  C(I_n+R_2),
\]
which is the desired conclusion.
\end{proof}

Apply \cref{lem:quadratic-penalty} with $X_a=T_{A_a}$, $B=J_r$, $C=C_r$, and $u=10/9$. The first hypothesis follows from \eqref{eq:target-ratios} and
\begin{equation}\label{eq:cap-rational-check}
 \left(\frac{127}{128}\right)^2\frac{125}{121}
 >\left(\frac{181}{180}\right)^2,
 \qquad \frac{181}{180}=\frac12\left(\frac{10}{9}+\frac9{10}\right).
\end{equation}
After clearing denominators, this is $65322450000>64947503104$. The second follows from \eqref{eq:growth-certificates} and $m\ge64$. We obtain
\begin{equation}\label{eq:R2}
 R_2=\alpha_2\int_{\|A\|_{\HS}=1}
       \Tch_m(T_A/J_r)^2\,d\sigma_d(A),
 \qquad \alpha_2=2^{4d+3}(9/10)^{2m},
\end{equation}
with $\pm T_A\preceq C_r(I_n+R_2)$ for every unit $A\in\mathcal A_D$.

\subsection{A complex spherical penalty for the largest eigenvalue}
To enforce the second compression bound in \eqref{eq:compression-interface}, define
\begin{equation}\label{eq:Rinfty}
 R_\infty=\alpha_\infty\int_{\|z\|_2=1}
       \left(\frac{L_z^*L_z}{32^r}\right)^m\,d\nu_D(z),
 \qquad \alpha_\infty=2^{6(D-1)}\left(\frac{2048}{2079}\right)^m.
\end{equation}
The normalization $32^r$ is the squared linear free-norm bound in \eqref{eq:free-bounds} at $\gamma=8$.

\begin{lemma}[A spherical Gram penalty]\label{lem:Gram-penalty}
For the parameters \eqref{eq:parameter-list}, the matrix $R_\infty$ is positive semidefinite and satisfies
\begin{equation}\label{eq:Gram-order-bound}
 L_z^*L_z\preceq B_r(I_n+R_\infty)
 \qquad(\|z\|_2=1).
\end{equation}
\end{lemma}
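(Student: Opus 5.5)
We follow the pattern of \cref{lem:quadratic-penalty}, with a complex spherical cap replacing the real one and a power function replacing the Chebyshev square. Positivity of $R_\infty$ is immediate: each integrand $(L_z^*L_z/32^r)^m$ is positive semidefinite and $\alpha_\infty>0$. For the order bound, fix a unit vector $\psi\in\C^n$ and set
\[
  Y:=\sum_{s}\ket{s}\bra{s}\otimes\ldots,
  \qquad\text{more concretely}\qquad
  G_{st}:=\langle W_s\psi,W_t\psi\rangle ,
\]
the $D\times D$ Gram matrix of the vectors $W_s\psi$. Then $G\succeq0$ and
\[
  \langle\psi,L_z^*L_z\psi\rangle=\|L_z\psi\|^2=z^TG\bar z
  =\langle \bar z,G\bar z\rangle .
\]
Writing $\mu:=\|G\|$, we get $\langle\psi,L_z^*L_z\psi\rangle\le\mu$ for all unit $z$. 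It therefore suffices to prove
\[
  \mu\le B_r\bigl(1+\langle\psi,R_\infty\psi\rangle\bigr).
\]

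For the lower bound on $\langle\psi,R_\infty\psi\rangle$, apply operator Jensen/Hölder in the scalar form. Since $y\mapsto y^m$ is convex on $[0,\infty)$, we have
\[
  \langle\psi,(L_z^*L_z)^m\psi\rangle\ge\langle\psi,L_z^*L_z\psi\rangle^m .
\]
Hence
\[
  \langle\psi,R_\infty\psi\rangle
  \ge\alpha_\infty\int\bigl(\langle\bar z,G\bar z\rangle/32^r\bigr)^m\,d\nu_D(z).
\]
Let $v$ be a top unit eigenvector of $G$. Then $\langle\bar z,G\bar z\rangle\ge\mu|\langle v,\bar z\rangle|^2$. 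By the complex cap formula \eqref{eq:complex-cap} with $c=63/64$, the set where $|\langle v,\bar z\rangle|^2\ge c$ has measure $(1/64)^{D-1}=2^{-6(D-1)}$. This gives
\[
  \langle\psi,R_\infty\psi\rangle\ge\Bigl(\frac{2048}{2079}\Bigr)^m\Bigl(\frac{63\mu}{64\cdot32^r}\Bigr)^m .
\]

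Now split into cases. If $\mu\le B_r$ there is nothing to prove. Otherwise $\mu>B_r\ge\frac{33}{32}\,32^r$ by \eqref{eq:target-ratios}, so
\[
  \frac{63\mu}{64\cdot 32^r}\cdot\frac{2048}{2079}
  \ge\frac{\mu}{B_r}\cdot\frac{63\cdot33\cdot2048}{64\cdot32\cdot2079}
  =\frac{\mu}{B_r}\cdot\frac{2079}{2079}=\frac{\mu}{B_r}.
\]
Indeed $63\cdot33=2079$ and $64\cdot32=2048$, so the constants are designed to cancel exactly. Thus $\langle\psi,R_\infty\psi\rangle\ge(\mu/B_r)^m\ge\mu/B_r$, using $\mu/B_r>1$ and $m\ge1$. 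This yields $\mu\le B_r\langle\psi,R_\infty\psi\rangle$, and in both cases the target inequality holds.

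Combining, for every unit $\psi$ and unit $z$ we get $\langle\psi,L_z^*L_z\psi\rangle\le B_r\langle\psi,(I_n+R_\infty)\psi\rangle$, which is \eqref{eq:Gram-order-bound}.

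The main obstacle is the cap step: the direction $z$ enters conjugated, so we must use unitary invariance of $\nu_D$ under $z\mapsto\bar z$ to apply \eqref{eq:complex-cap}. We must also check that the constant in $\alpha_\infty$ exactly absorbs $(64/63)^m$ through the cancellation $63\cdot33=2079$. Note that the parameters \eqref{eq:parameter-list} are not otherwise needed here; they matter only later, for the trace of $R_\infty$.
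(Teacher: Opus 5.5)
Your argument is essentially the paper's proof: you use the Gram matrix $G_\psi$ of the vectors $W_s\psi$, scalar Jensen for $(L_z^*L_z)^m$, and the complex cap $\{|\langle v,z\rangle|^2\ge 63/64\}$, whose measure $2^{-6(D-1)}$ cancels the prefactor of $\alpha_\infty$. You also use the identity $63\cdot 33=2079$ with $B_r\ge\frac{33}{32}32^r$, and the case split at $\|G_\psi\|=B_r$, exactly as the paper does.

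One small correction: $\langle\psi,L_z^*L_z\psi\rangle=\|L_z\psi\|^2=z^*G_\psi z$, not $z^TG_\psi\bar z$, and the two differ when $\psi$ is complex. With the correct formula you can take the cap directly around the top eigenvector $v$, with no appeal to conjugation invariance of $\nu_D$. The ``main obstacle'' you flag comes from this slip, and the rest of your argument goes through unchanged.
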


\begin{proof}
Positivity of $R_\infty$ is immediate from its definition.  Fix a
unit vector $\psi\in\C^n$ and let
\[
  G_\psi
  :=
  [\langle W_s\psi,W_t\psi\rangle]_{s,t=1}^D
  \succeq0.
\]
Then, for every $y\in\C^D$,
\[
  \langle\psi,L_y^*L_y\psi\rangle
  =
  y^*G_\psi y.
\]
Let $\lambda=\|G_\psi\|$ and let $v$ be a corresponding unit
eigenvector.  By the spectral decomposition,
\[
  y^*G_\psi y
  \ge
  \lambda|\langle v,y\rangle|^2.
\]
Since $L_y^*L_y\succeq0$ and $m\ge1$, scalar Jensen's inequality gives
\[
  \langle\psi,(L_y^*L_y)^m\psi\rangle
  \ge
  \langle\psi,L_y^*L_y\psi\rangle^m.
\]

By \eqref{eq:complex-cap}, the set
\[
  \{y:|\langle v,y\rangle|^2\ge63/64\}
\]
has $\nu_D$-measure $64^{-(D-1)}$.  Restricting the integral in
\eqref{eq:Rinfty} to this cap therefore yields
\[
  \langle\psi,R_\infty\psi\rangle
  \ge
  \left(
    \frac{2048}{2079}
    \frac{63}{64}
    \frac{\lambda}{32^r}
  \right)^m
  =
  \left(
    \frac{32}{33}
    \frac{\lambda}{32^r}
  \right)^m.
\]
By \eqref{eq:target-ratios},
\[
  \frac{B_r}{32^r}\ge\frac{33}{32},
\]
and hence
\[
  \langle\psi,R_\infty\psi\rangle
  \ge
  \left(\frac{\lambda}{B_r}\right)^m.
\]
If $\lambda>B_r$, the right-hand side is at least
$\lambda/B_r$; if $\lambda\le B_r$, the desired estimate is
immediate.  Thus in both cases
\[
  \lambda
  \le
  B_r\bigl(1+\langle\psi,R_\infty\psi\rangle\bigr).
\]
Finally, for every unit $z\in\C^D$,
\[
  \langle\psi,L_z^*L_z\psi\rangle
  =
  z^*G_\psi z
  \le
  \lambda
  \le
  B_r\langle\psi,(I_n+R_\infty)\psi\rangle.
\]
Since this holds for every unit $\psi$, we obtain
\[
  L_z^*L_z\preceq B_r(I_n+R_\infty),
\]
as required.
\end{proof}

\subsection{The common filter and its two uniform bounds}

The two penalties can be combined without a commutation assumption. Set
\begin{equation}\label{eq:filter}
 R=R_2+R_\infty,\qquad F=(I_n+R)^{-1/2}.
\end{equation}
The spherical averages in \eqref{eq:R2} and \eqref{eq:Rinfty} are
explicit finite sums: after polynomial expansion, the scalar
coefficients are evaluated by \eqref{eq:wick-real} and
\eqref{eq:wick-complex}, respectively.

\begin{proposition}[Uniform compression and algebraicity]\label{prop:compression}
For the branch matrices \eqref{eq:finite-branches} and parameters \eqref{eq:parameter-list}, the matrix $R$ is real symmetric and positive semidefinite. The filter $F$ has real algebraic entries and satisfies
\[
 F=F^T,\qquad0\prec F\preceq I_n.
\]
For every $A\in\mathcal A_D$ and $z\in\C^D$,
\begin{equation}\label{eq:uniform-compression}
 \|FT_AF\|\le C_r\|A\|_{\HS},\qquad
 FL_z^*L_zF\preceq B_r\|z\|_2^2I_n.
\end{equation}
Consequently, \eqref{eq:channel} defines a channel $\Phi_F$ with real algebraic entries.
\end{proposition}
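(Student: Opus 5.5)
Three things need checking: that $R$ is real symmetric and positive semidefinite, that $F$ is real algebraic with $0\prec F\preceq I_n$, and the two operator inequalities in \eqref{eq:uniform-compression}. Both inequalities will come from the penalty lemmas by a single congruence with $F$.

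\emph{Structure of $R$.} $R_2$ is an average of squares $\Tch_m(T_A/J_r)^2$ of Hermitian matrices with positive weights. $R_\infty$ is an average of powers $(L_z^*L_z/32^r)^m$ of positive matrices. Hence both are positive semidefinite, and so is $R$.

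For reality, expand each integrand polynomially. By \eqref{eq:wick-real} and \eqref{eq:wick-complex}, each average becomes a finite sum, with rational coefficients, of products of $W_s^{\pm1}$ and $W_s^*$; the scalar prefactors are $\alpha_2$, $\alpha_\infty$ and powers of $J_r^{-1}$.
\begin{itemize}
\item In $R_\infty$ the complex moments pair each $z_i$ with a $\bar z_k$, so only real coefficients $\delta$-sums survive, multiplying real permutation matrices.
\item In $R_2$ the basis matrices $A_a$ have imaginary entries in the antisymmetric directions. Each $T_{A_a}$ is a real combination of $W_s^TW_t$ in the symmetric case and $\mathrm i$ times a real antisymmetric matrix in the antisymmetric case. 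The Wick pairing pairs each index with itself, so factors of $\mathrm i$ appear in pairs and contribute $\mathrm i^2=-1$. The result is real.
\end{itemize}
Alternatively, a shorter route to reality: $R$ is Hermitian and invariant under entrywise complex conjugation, because the measures are conjugation-invariant. The change of variables $z\mapsto\bar z$ preserves $\nu_D$, and $A\mapsto \bar A$ is an orthogonal map of $\mathcal A_D$ preserving $\sigma_d$. A Hermitian matrix equal to its own conjugate is real symmetric.

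\emph{Algebraicity.} $J_r=\sqrt{(K+968)^r-K^r}$ is algebraic, $\alpha_2$ and $\alpha_\infty$ are rational, and $\Tch_m$ has integer coefficients. Hence $R$ has real algebraic entries. Since $I_n+R\succeq I_n$ is positive definite, the functional-calculus fact from Section~\ref{sec:chebyshev-background} shows that $F=(I_n+R)^{-1/2}$ is real symmetric with real algebraic entries. Moreover $I_n+R\succeq I_n$ gives $0\prec F\preceq I_n$.

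\emph{Compression bounds.} By homogeneity, take $\|A\|_{\HS}=1$ and $\|z\|_2=1$.

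First verify the hypotheses of \cref{lem:quadratic-penalty} for $X_a=T_{A_a}$, $B=J_r$, $C=C_r$, $u=10/9$. This uses \eqref{eq:cap-rational-check}, \eqref{eq:growth-certificates} and $m=64h\ge64$; note $d\ge3$ since $D\ge2^{15}$. The lemma gives
\[
  \pm T_A\preceq C_r(I_n+R_2)\preceq C_r(I_n+R),
\]
since $R_\infty\succeq0$. Conjugating by $F$ yields $\pm FT_AF\preceq C_r F(I_n+R)F=C_rI_n$, and hence $\|FT_AF\|\le C_r$.

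Similarly, \cref{lem:Gram-penalty} and $R_2\succeq0$ give $L_z^*L_z\preceq B_r(I_n+R)$. Congruence by $F$ then gives $FL_z^*L_zF\preceq B_rI_n$.

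\emph{Channel property.} $\Phi_F$ is a channel by the remark after \eqref{eq:channel}: it is a sum of completely positive maps whose traces add to $\Tr X$. Its entries are real algebraic because $W_s$ and $F$ are.

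\emph{Main obstacle.} The only delicate point is the reality of $R_2$ despite the complex basis \eqref{eq:coefficient-basis}. I would handle it with the conjugation-invariance argument rather than tracking powers of $\mathrm i$ term by term.
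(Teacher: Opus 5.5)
Your proposal is correct and follows the paper's proof essentially step for step. Both arguments establish positivity of the two penalties and obtain reality from $\overline{T_A}=T_{\bar A}$, $\overline{L_z^*L_z}=L_{\bar z}^*L_{\bar z}$ and the conjugation-invariance of the two spherical measures. Algebraicity comes from the finite Wick expansions together with the inverse-square-root fact, and the compression bounds from \cref{lem:quadratic-penalty,lem:Gram-penalty} followed by congruence with $F$ using $F(I_n+R)F=I_n$. Your alternative term-by-term reality argument, in which factors of $\mathrm i$ occur in pairs under the Wick pairing, is also valid but not needed.
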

\begin{proof}
Both penalties are positive semidefinite. Since the $W_s$ are real,
\[
 \overline{T_A}=T_{\bar A},\qquad
 \overline{L_z^*L_z}=L_{\bar z}^*L_{\bar z}.
\]
Complex conjugation preserves the two coefficient spheres and their measures, so both averages are real symmetric. Their finite expansions have algebraic entries by Section~\ref{sec:moments}. The inverse-square-root fact in Section~\ref{sec:chebyshev-background} therefore gives the same property for $F$, and $R\succeq0$ gives $0\prec F\preceq I_n$.

For unit $A$ and $z$, \cref{lem:quadratic-penalty,lem:Gram-penalty} give
\[
 \pm T_A\preceq C_r(I_n+R_2)\preceq C_r(I_n+R),\qquad
 L_z^*L_z\preceq B_r(I_n+R_\infty)\preceq B_r(I_n+R).
\]
Conjugating by $F$ and using $F(I_n+R)F=I_n$ proves \eqref{eq:uniform-compression}; homogeneity gives arbitrary coefficients. Complete positivity and trace preservation follow from the channel formula as noted in Section~\ref{sec:main}, and its coefficients are real algebraic.
\end{proof}

\subsection{Exact transfer and the average cost}
It remains to verify that the compression has small average loss, as required in \eqref{eq:cost-interface}. For any $R\succeq0$ and $F=(I_n+R)^{-1/2}$,
\begin{equation}\label{eq:filter-loss-bound}
 0\preceq I_n-F^2=R(I_n+R)^{-1}\preceq R.
\end{equation}
Thus it suffices to estimate $\tau_n(R)$ using the finite moments from Section~\ref{sec:finite}.

\begin{proposition}[A small normalized trace]\label{prop:trace-cost}
For the parameters \eqref{eq:parameter-list},
\begin{equation}\label{eq:trace-cost}
 \tau_n(R)\le\alpha_2+\alpha_\infty<2^{1-t},
 \qquad (D+1)\tau_n(I_n-F^2)<\delta.
\end{equation}
\end{proposition}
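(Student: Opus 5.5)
The plan is to bound $\tau_n(R_2)$ and $\tau_n(R_\infty)$ separately by transferring each integrand to the free model through \cref{prop:finite-moments}, and then to compare the prefactors $\alpha_2,\alpha_\infty$ with $2^{-t}$ using the parameters \eqref{eq:parameter-list}. Since $\tau_n$ is linear and the spherical averages in \eqref{eq:R2} and \eqref{eq:Rinfty} are finite sums by Section~\ref{sec:moments}, trace and integral commute:
\[
\tau_n(R_2)=\alpha_2\int\tau_n\bigl(\Tch_m(T_A/J_r)^2\bigr)\,d\sigma_d(A),
\qquad
\tau_n(R_\infty)=\alpha_\infty\int\tau_n\bigl((L_z^*L_z/32^r)^m\bigr)\,d\nu_D(z).
\]

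\emph{Step 1: transfer of the integrands.} The polynomial $\Tch_m(y/J_r)^2$ has degree $2m$. Expanding it in powers of $T_A$, the first identity in \eqref{eq:linear-moment-transfer} covers every power $T_A^j$ with $j\le 2m$. It therefore gives
\[
\tau_n\bigl(\Tch_m(T_A/J_r)^2\bigr)=\tau_G\bigl(\Tch_m(T_A^{\mathrm f}/J_r)^2\bigr).
\]
By \eqref{eq:free-bounds}, $\|T_A^{\mathrm f}\|\le J_r$ for unit $A$. Then \eqref{eq:chebyshev-matrix-bound}, applied in the reduced group algebra, gives $0\preceq\Tch_m(T_A^{\mathrm f}/J_r)^2\preceq I$. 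Since $\tau_G$ is a state, the integrand is at most $1$.

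Likewise, the second identity in \eqref{eq:linear-moment-transfer} identifies $\tau_n((L_z^*L_z)^m)$ with its free counterpart. With $\gamma=8$, \eqref{eq:free-bounds} gives $\|L_z^{\mathrm f}\|^2\le 32^r$, so $\tau_G\bigl(((L_z^{\mathrm f})^*L_z^{\mathrm f}/32^r)^m\bigr)\le1$. Since $\sigma_d$ and $\nu_D$ are probability measures, this proves $\tau_n(R)\le\alpha_2+\alpha_\infty$.

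\emph{Step 2: the numerics.} With $m=64h$ and $(10/9)^8>2$ from \eqref{eq:growth-certificates},
\[
(9/10)^{2m}=(9/10)^{128h}<2^{-16h}.
\]
The choice $h\ge\lceil(4d+t+3)/16\rceil$ gives $16h\ge 4d+t+3$, hence
\[
\alpha_2=2^{4d+3}(9/10)^{2m}<2^{-t}.
\]
Similarly, $(2079/2048)^{64}>2$ gives $(2048/2079)^{m}<2^{-h}$. The choice $h\ge 6(D-1)+t$ then gives
\[
\alpha_\infty<2^{6(D-1)-h}\le2^{-t}.
\]
Adding the two bounds yields $\tau_n(R)<2^{1-t}$.

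\emph{Step 3: the filter loss.} By \eqref{eq:filter-loss-bound}, $\tau_n(I_n-F^2)\le\tau_n(R)$. Combining this with \eqref{eq:trace-budget-reduction} gives
\[
(D+1)\tau_n(I_n-F^2)<(D+1)2^{1-t}<\delta,
\]
which completes the proof of \eqref{eq:trace-cost}.

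The step I expect to need the most care is Step 1. Two points must be checked there. First, the degree of every monomial after expansion must stay within the transfer range: $2m$ factors of $T_A$ and $m$ factors of $L_z^*L_z$. Each such factor contributes at most two letters per tensor leg, which is exactly the range $4m$ covered by \cref{prop:finite-moments}. Second, the bound "at most one" on the free side must be justified correctly. This rests on positivity of $\tau_G$ together with the operator inequality $0\preceq\Tch_m(\cdot)^2\preceq I$ in the reduced $C^*$-algebra; it should not be argued from pointwise scalar bounds. The numerical comparisons in Step 2 are routine once the parameter inequalities above are in place.
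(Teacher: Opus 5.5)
Your proposal is correct and follows the paper's proof. You transfer each integrand to the free model via \cref{prop:finite-moments}, and bound it by $1$ using \eqref{eq:free-bounds}, the Chebyshev and power bounds, and positivity of $\tau_G$. You then obtain $\alpha_2,\alpha_\infty<2^{-t}$ from \eqref{eq:growth-certificates}, $m=64h$ and the two lower bounds on $h$, and conclude with \eqref{eq:filter-loss-bound} and \eqref{eq:trace-budget-reduction}, exactly as the paper does. Your explicit checks are consistent with the paper's argument and slightly more detailed. These are the degree range ($2m$ factors of $T_A$ and $m$ factors of $L_z^*L_z$, so at most $4m$ letters per leg) and the use of the operator inequality rather than a pointwise bound on the free side.
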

\begin{proof}
For unit $A$, \eqref{eq:free-bounds} gives $\|T_A^{\mathrm f}\|\le J_r$. The Chebyshev bound \eqref{eq:chebyshev-bound} and \cref{prop:finite-moments} imply
\[
 \tau_n(R_2)
 =\alpha_2\int\tau_G\!\left(\Tch_m(T_A^{\mathrm f}/J_r)^2\right)d\sigma_d(A)
 \le\alpha_2.
\]
Likewise, $\|(L_z^{\mathrm f})^*L_z^{\mathrm f}\|\le32^r$ for unit $z$, and the second identity in \eqref{eq:linear-moment-transfer} gives
\[
 \tau_n(R_\infty)
 =\alpha_\infty\int\tau_G\!\left(
   \left(\frac{(L_z^{\mathrm f})^*L_z^{\mathrm f}}{32^r}\right)^m
   \right)d\nu_D(z)
 \le\alpha_\infty.
\]
Both measures have total mass one. By \eqref{eq:growth-certificates}, $m=64h$, and the two lower bounds defining $h$,
\[
 \alpha_2<2^{4d+3-16h}\le2^{-t},\qquad
 \alpha_\infty<2^{6(D-1)-h}\le2^{-t}.
\]
Finally, \eqref{eq:filter-loss-bound} and $t=s+\log_2D+3$ yield
\[
 (D+1)\tau_n(I_n-F^2)
 \le(D+1)\tau_n(R)
 <(D+1)2^{1-t}
 =\frac{D+1}{4D}\,2^{-s}<\delta.
\]
\end{proof}

\subsection{A quadratic filter for the von Neumann example}\label{sec:small-filter}
The certificate \cref{prop:von-Neumann-certificate} requires only quadratic compression and small average loss. For the branch family \eqref{eq:small-branches}, set
\begin{equation}\label{eq:small-parameters}
 \begin{aligned}
 D_*&=256,&d_*&=256\cdot255=65280,\\
 m_*&=3(4d_*+53)=783519,&\delta_*&=2^{-40},\\
 C_*&=\sqrt{103/2},&J_*&=\sqrt{50}.
 \end{aligned}
\end{equation}
Use $m=m_*$ in \eqref{eq:direct-alphabet} and \eqref{eq:small-branches}, and put $n_*=q_{m_*}^2$. With
\[
 T_{*,A}=\sum_{s\ne t}A_{st}\widetilde W_s^*\widetilde W_t,
 \qquad A\in\mathcal A_{D_*},
\]
define
\begin{equation}\label{eq:small-filter}
 \begin{aligned}
 R_*&=\alpha_*\int_{\|A\|_{\HS}=1}
       \Tch_{m_*}(T_{*,A}/J_*)^2\,d\sigma_{d_*}(A),\\
 \alpha_*&=2^{4d_*+3}(8/9)^{2m_*},\qquad
 F_*=(I_{n_*}+R_*)^{-1/2}.
 \end{aligned}
\end{equation}
The basis \eqref{eq:coefficient-basis} and formula \eqref{eq:wick-real} specify this average as a finite sum. Let $\Phi_*:=\Phi_{F_*}$ in \eqref{eq:channel}, with branches $\widetilde W_s$ and dimensions $n_*,D_*$.

\begin{proposition}[Verification of the smaller filter]\label{prop:small-filter}
For \eqref{eq:small-parameters}, the filter $F_*$ is real symmetric with real algebraic entries and $0\prec F_*\preceq I_{n_*}$. For every $A\in\mathcal A_{D_*}$,
\begin{equation}\label{eq:small-filter-interface}
 \|F_*T_{*,A}F_*\|\le C_*\|A\|_{\HS},
 \qquad (D_*+1)\tau_{n_*}(I_{n_*}-F_*^2)<\delta_*.
\end{equation}
In particular, $\Phi_*$ is a channel with real algebraic entries.
\end{proposition}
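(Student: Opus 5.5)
The argument runs parallel to \cref{prop:compression} and \cref{prop:trace-cost}, now for a single quadratic penalty. I would first take care of the structural claims. Since the $\widetilde W_s$ are real permutation matrices, $\overline{T_{*,A}}=T_{*,\bar A}$, and complex conjugation preserves $\sigma_{d_*}$ on the coefficient sphere. Hence $R_*$ is real symmetric. Since it is an average of squares of Hermitian matrices, $R_*\succeq0$. Expanding $\Tch_{m_*}(T_{*,A}/J_*)^2$ and applying \eqref{eq:wick-real} turns the average into a finite sum with rational coefficients times powers of $J_*^{-1}=50^{-1/2}$, so its entries are real algebraic. The inverse-square-root fact from Section~\ref{sec:chebyshev-background} then gives that $F_*$ is real symmetric with real algebraic entries and $0\prec F_*\preceq I_{n_*}$. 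Channel-ness and algebraicity of $\Phi_*$ follow as in Section~\ref{sec:main}.

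For the compression bound I would apply \cref{lem:quadratic-penalty} with $X_a=T_{*,A_a}$, $B=J_*$, $C=C_*$, $u=9/8$, $m=m_*$, and $d=d_*\ge3$. The second hypothesis, $(9/8)^{2m_*}\ge4$, holds trivially since $m_*$ is huge. The first hypothesis is $\tfrac{127}{128}\sqrt{103/100}\ge\tfrac12(\tfrac98+\tfrac89)=\tfrac{145}{144}$. After squaring this becomes a rational inequality, $(127/128)^2\cdot 1.03>(145/144)^2$, i.e.\ $1.0144\ldots>1.01394\ldots$, which I would certify by clearing denominators. The lemma gives $\pm T_{*,A}\preceq C_*(I+R_*)$ for unit $A$. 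Conjugating by $F_*$ and using $F_*(I+R_*)F_*=I$ together with homogeneity yields the first inequality in \eqref{eq:small-filter-interface}.

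For the trace cost I need the free norm bound $\|T^{\mathrm f}_{A}\|\le J_*\|A\|_{\HS}$ for the model $\mathbb F_{64}^2$ with $r=1$, $\gamma=2$, $K=256$. By \eqref{eq:free-bounds}, $J^2=K+16\gamma^2-7\gamma-K=64-14=50$, so $J=J_*$. \cref{prop:finite-moments} with $m=m_*$ gives $\tau_{n_*}(T_{*,A}^j)=\tau_G((T_A^{\mathrm f})^j)$ for $j\le2m_*$. This covers the degree-$2m_*$ polynomial $\Tch_{m_*}^2$, so \eqref{eq:chebyshev-bound} yields $\tau_{n_*}(R_*)\le\alpha_*$. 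By \eqref{eq:filter-loss-bound}, $(D_*+1)\tau(I-F_*^2)\le257\,\alpha_*$.

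The main (though routine) obstacle is the numerical check $257\cdot2^{4d_*+3}(8/9)^{2m_*}<2^{-40}$. With $2m_*=6(4d_*+53)$, it suffices to show $(8/9)^6\le2^{-1}$, which holds since $9^6=531441>2\cdot8^6=524288$. This gives $(8/9)^{2m_*}\le2^{-(4d_*+53)}$, hence the bound $257\cdot2^{-50}<2^{-40}$.
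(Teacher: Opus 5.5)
Your proposal is correct and follows essentially the same route as the paper. You apply \cref{lem:quadratic-penalty} with $B=J_*=\sqrt{50}$, $C=C_*$, $u=9/8$ and conjugate by $F_*$ to get the compression bound. You use moment transfer with the Chebyshev bound and $(9/8)^6>2$ to get $\tau_{n_*}(R_*)\le\alpha_*<2^{-50}$. You obtain realness and algebraicity from conjugation invariance and the finite Wick expansion.

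One small numerical slip: $(127/128)^2\cdot\frac{103}{100}\approx1.013969$, not $1.0144$, so its margin over $(145/144)^2\approx1.013937$ is only $4247/132710400$. The inequality still holds, and clearing denominators as you planned certifies it exactly.
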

\begin{proof}
For $\gamma=2$, $k=64$, and $r=1$, \eqref{eq:free-bounds} gives $J_*^2=16\cdot2^2-7\cdot2=50$. Apply \cref{lem:quadratic-penalty} with $B=J_*$, $C=C_*$, and $u=9/8$. Its first hypothesis follows from
\[
 \left(\frac{127}{128}\right)^2\frac{103}{100}
 -\left(\frac{145}{144}\right)^2
 =\frac{4247}{132710400}>0,
 \qquad \frac{145}{144}=\frac12\left(\frac98+\frac89\right).
\]
Also $(9/8)^6>2$, since $531441>524288$, and $m_*/3=4d_*+53$. Thus $u^{2m_*}>4$. The lemma gives
$\pm T_{*,A}\preceq C_*(I_{n_*}+R_*)$ for unit $A$; conjugation by $F_*$ and homogeneity prove the compression bound.

By \cref{prop:finite-moments} and \eqref{eq:chebyshev-bound},
\begin{equation}\label{eq:small-trace-cost}
 \tau_{n_*}(R_*)\le\alpha_*
 <2^{4d_*+3-m_*/3}=2^{-50}.
\end{equation}
Equation~\eqref{eq:filter-loss-bound}, applied to $R_*$, then gives
\[
 (D_*+1)\tau_{n_*}(I_{n_*}-F_*^2)
 <257\cdot2^{-50}<2^{-40}=\delta_*.
\]
As in the proof of \cref{prop:compression}, invariance of the coefficient sphere under conjugation makes $R_*$ real symmetric. Sections~\ref{sec:moments} and \ref{sec:chebyshev-background} give algebraic entries for $R_*$ and $F_*$. Positivity yields $0\prec F_*\preceq I_{n_*}$, so \eqref{eq:channel} defines the asserted real algebraic channel.
\end{proof}

\section{Proof of the main results}\label{sec:proof}
\begin{proof}[Proof of \Cref{thm:main}]
Fix $p_0>0$ and an integer $r\ge1$. Choose \eqref{eq:parameter-list}, the branch matrices \eqref{eq:finite-branches}, and the filter \eqref{eq:filter}, and write $\Phi_{r,p_0}:=\Phi_F$.
\cref{prop:finite-moments} verifies \eqref{eq:Bell-moment-interface}, and \cref{prop:compression} gives real algebraicity and the compression bounds with $C=C_r$, $B=B_r$. \cref{prop:trace-cost} and \eqref{eq:error-budget} verify \eqref{eq:cost-interface}. Hence \Cref{thm:certificate} gives
\[
 \inf_{p\in[p_0,\infty]}\Delta_p(\Phi_{r,p_0};\Omega_n)
 \ge\frac r2+\frac{16r-1}{16384}>\frac r2.
\]
The dimensions are $n=q_{3m}^{8r}=2^{N(r,p_0)}$ and $D=2^{15r}$ by \eqref{eq:finite-branches} and \eqref{eq:input-budget}; the bound \eqref{eq:main-dimension-bound} was verified in Section~\ref{sec:filter-parameters}. All matrices and the Bell input depend only on $p_0$ and $r$, not on the running order $p$. Finally, $r=\lceil2g\rceil$ gives $r/2\ge g$, proving the prescribed-gap assertion.
\end{proof}

\begin{proof}[Proof of \cref{cor:von-neumann}]
Take the branch matrices \eqref{eq:small-branches} and filter \eqref{eq:small-filter} with parameters \eqref{eq:small-parameters}. Propositions~\ref{prop:finite-moments} and \ref{prop:small-filter} verify the hypotheses of \cref{prop:von-Neumann-certificate} with
\[
 \gamma=2,\qquad k=64,\qquad K=256,\qquad
 C^2=103/2,\qquad\delta=2^{-40}<K^{-2}.
\]
Thus $\varepsilon=K^2\delta=2^{-24}$ and $1+C^2/K=615/512$, and the certificate gives
\begin{equation}\label{eq:small-gap-bound}
 \Delta_1(\Phi_*;\Omega_{n_*})
 \ge\frac{17}{32}\log2-2\log(615/512)
      -16\varepsilon\log2-h_{\mathrm b}(\varepsilon).
\end{equation}
We bound the fixed margin and the perturbation term separately. For $z=103/1127$, the series \eqref{eq:log-series} gives
\[
 \log(615/512)
 \le2\left(z+\frac{z^3}{3}+\frac{z^5}{5(1-z^2)}\right)
 <\frac{183299}{1000000}.
\]
Together with \eqref{eq:log-certificates}, this yields
\[
 \frac{17}{32}\log2-2\log(615/512)
 >\frac{12891}{8000000}>\frac1{640}.
\]
The same series, keeping its first term at $z=1/3$ and bounding the tail geometrically, gives $\log2<25/36<7/10$. Since
$h_{\mathrm b}(\varepsilon)\le\varepsilon(1-\log\varepsilon)$,
\[
 16\varepsilon\log2+h_{\mathrm b}(\varepsilon)
 \le(40\log2+1)2^{-24}<29\cdot2^{-24}.
\]
Substituting into \eqref{eq:small-gap-bound} proves
\[
 \Delta_1(\Phi_*;\Omega_{n_*})
 >\frac1{640}-\frac{29}{2^{24}}>\frac1{1024}.
\]
Finally, \eqref{eq:leg-dimension} and \eqref{eq:small-parameters} give
\[
 n_*=q_{m_*}^2=2^{48m_*-14}=2^{37608898},\qquad D_*=256=2^8,
\]
which proves the stated dimensions and \eqref{eq:small-gap}.
\end{proof}

\section{Comparison of Finite-Dimensional Realizations}\label{sec:discussion}

The construction of Shou and Gorshkov also gives arbitrarily large uniform gaps on every prescribed $[p_0,\infty]$, with $p_0>0$ \cite[Theorem~1.3]{ShouGorshkov2026}. Both constructions combine finite moment matching with damping, rather than strong convergence. Their finite model uses Sanov generators reduced modulo a power of two and
the left regular representation of a product of
$\mathrm{SL}_2(\Z/2^a\Z)$ groups
\cite[Section~B.2]{ShouGorshkov2026}.
In Section~\ref{sec:finite}, we instead use permutation actions on
explicitly indexed invariant fibres of high-girth graphs over the fixed
field $\mathbb F_{256}$. Nontrivial words within the prescribed length
range have no fixed points, giving the required exact trace identities.
The filters also differ:
the lattice-net sums of even powers in \cite[Eq.~(B.21)]{ShouGorshkov2026}
are replaced here by spherical averages of Chebyshev and power penalties,
evaluated by finite moment formulas in Section~\ref{sec:filter}.
Thus the comparison below is between the stated quantitative bounds,
rather than between the best possible dimensions attainable by the two
finite-realization paradigms.

For a common target gap $g>0$ and cutoff $p_0>0$, set
\[
 r=\lceil2g\rceil,\qquad
 \ell=\left\lceil\frac54\left(6+\frac{g}{\log2}\right)\right\rceil,
 \qquad \theta=\min\{p_0,1/2\}.
\]
Then \Cref{thm:main} and \cite[Theorem~1.3]{ShouGorshkov2026}
both certify a gap greater than $g$ simultaneously for all
$p\in[p_0,\infty]$, using a Bell input in each construction.
For a channel with Hilbert-space dimensions
$d_{\mathrm{in}},d_{\mathrm{out}}$,
write $B_{\mathrm{in}}=\lceil\log_2d_{\mathrm{in}}\rceil$ and
$B_{\mathrm{out}}=\lceil\log_2d_{\mathrm{out}}\rceil$.
The respective prescriptions give
\begin{equation}\label{eq:discussion-budgets}
 \begin{aligned}
 (B_{\mathrm{in}}^{\mathrm{ours}},
  B_{\mathrm{out}}^{\mathrm{ours}})
   &=(N(r,p_0),15r),\\
 (B_{\mathrm{in}}^{\mathrm{SG}},
  B_{\mathrm{out}}^{\mathrm{SG}})
   &=\left(
       \left\lceil16\ell
       (336m_{\mathrm{SG}}-2+\log_2 3)\right\rceil,
       17\ell
     \right),
 \end{aligned}
\end{equation}
where $N(r,p_0)$ is given by \eqref{eq:input-budget} and
\begin{equation}\label{eq:discussion-sg-moment}
 m_{\mathrm{SG}}
 =4+\left\lceil
       \max\left\{\frac{\ell+7}{\theta},17\ell+7\right\}
     \right\rceil
    +(17\ell+5)2^{34\ell-1}.
\end{equation}
These SG parameters are those of \cite[Eq.~(1.5)]{ShouGorshkov2026},
with gap bound $(4\ell/5-6)\log2$.
For illustration, we take $p_0=1/4$.  This lies in the regime where the dependence on $\theta=\min\{p_0,1/2\}$ is nontrivial; the choice is otherwise inessential. The qubit counts are as follows. Input counts are rounded to five
significant figures; output counts are exact.
\begin{center}
\small
\begin{tabular}{@{}ccccc@{}}
\toprule
& \multicolumn{2}{c}{Present construction}
& \multicolumn{2}{c}{Shou--Gorshkov, Theorem~1.3}\\
Target $g$ (nats)
& $B_{\mathrm{in}}$ & $B_{\mathrm{out}}$
& $B_{\mathrm{in}}$ & $B_{\mathrm{out}}$\\
\midrule
$1/2$ & $9.8953\times10^{12}$ & $15$
      & $4.9832\times10^{98}$ & $153$\\
$1$   & $2.1251\times10^{22}$ & $30$
      & $1.0536\times10^{109}$ & $170$\\
$2$   & $4.9001\times10^{40}$ & $60$
      & $4.4565\times10^{129}$ & $204$\\
\bottomrule
\end{tabular}
\end{center}
These are sufficient dimensions from the stated parameter choices,
not lower bounds on the dimensions needed by either method.

More generally, $15r<17\ell$ for every $g>0$.
For fixed $p_0>0$, the exact prescriptions also give, as $g\to\infty$,
\[
 B_{\mathrm{in}}^{\mathrm{ours}}
   =\Theta\!\left(g2^{60g}\right),
 \qquad
 B_{\mathrm{in}}^{\mathrm{SG}}
   =\Theta\!\left(g^2 2^{85g/(2\log2)}\right).
\]
Since $85/(2\log2)>60$, this improves the large-gap growth rate of
the explicit input-qubit bound as well as the output-qubit budget.
This asymptotic comparison concerns fixed $p_0>0$ and should not be
interpreted as uniform input-dimension superiority under arbitrary joint
scalings of $g$ and $p_0$, especially since neither construction is
dimensionally optimized.

\section{Statements and Declarations}
\label{sec:statements-declarations}

\subsection{Use of Artificial Intelligence}

Inspired by the recent constructions of Shou and Gorshkov \cite{ShouGorshkov2026} and Cheng and Wu \cite{ChengWu2026}, the authors identified the core technical route for obtaining a finite-dimensional closed-form counterexample: replace the strong-convergence input normally used to transfer free-group estimates to finite dimensions by finite-order moment matching, together with a deterministic filter that suppresses exceptional directions while preserving the two-copy correlations needed for the Bell-state witness.

The authors then presented this idea to ChatGPT 6 Astra and asked whether it could be combined with the direct-product free-group model of \cite{ZhenEtAlAllOrders2026} to produce a finite-dimensional closed-form counterexample. Then the LLM developed a technical route resolving this problem, including the finite-dimensional realization, filtering construction, and entropy estimates.  The authors subsequently worked through the argument interactively with the LLM, carefully checking and assimilating the full technical chain and organizing the manuscript according to their own mathematical logic. All mathematical arguments, interpretations, references, and final editorial decisions were independently checked and approved by the authors, who take full responsibility for the scientific content of the paper.

\subsection{Acknowledgements}

We are grateful to Prof.~Peixue Wu for promptly sharing the recent works \cite{ShouGorshkov2026,ChengWu2026}, which allowed us to follow the latest developments at an early stage and helped us identify the core technical route that led to the present construction. This work was supported by the National Key R\&D Program of China (Grant No.~2024YFE0102500), the National Natural Science Foundation of China (Grant. No.~92576114, 12447107), the Guangdong Provincial Quantum Science Strategic Initiative (Grant No.~GDZX2403008, GDZX2503001), and the Guangdong Provincial Key Lab of Integrated Communication, Sensing and Computation for Ubiquitous Internet of Things (Grant No.~2023B1212010007).

\end{document}